\newcommand{\rmp}{RM$^+$\xspace}
\newcommand{\prmp}{PRM$^+$\xspace}
\newcommand{\sprmp}{SPRM$^+$\xspace}
\newcommand{\exrmp}{ExRM$^+$\xspace}
\newcommand{\tb}[1]{{\color{black} #1}}
\newtheorem{theorem}{Theorem}
\newtheorem*{theorem*}{Theorem}
\newtheorem{lemma}{Lemma}
\newtheorem{claim}{Claim}
\newtheorem{proposition}{Proposition}
\newcommand{\opt}{^{\star}}
\DeclareMathOperator*{\argmin}{argmin}
\DeclareMathOperator*{\argmax}{argmax}
\DeclareMathOperator{\DualGap}{DualityGap}
\newcommand{\notshow}[1]{{}}
\newcommand{\AutoAdjust}[3]{{ \mathchoice{ \left #1 #2  \right #3}{#1 #2 #3}{#1 #2 #3}{#1 #2 #3} }}
\newcommand{\Xcomment}[1]{{}}
\newcommand{\InParentheses}[1]{\AutoAdjust{(}{#1}{)}}
\newcommand{\InBrackets}[1]{\AutoAdjust{[}{#1}{]}}
\newcommand{\InAngles}[1]{\AutoAdjust{\langle}{#1}{\rangle}}
\newcommand{\InNorms}[1]{\AutoAdjust{\|}{#1}{\|}}
\renewcommand{\part}[2]{\frac{\partial #1}{\partial #2}}
\newcommand{\R}{\mathbbm{R}}
\newcommand{\Z}{\mathcal{Z}}
\newcommand{\half}{\frac{1}{2}}
\newcommand{\cZ}{\mathcal{Z}}
\newcommand{\hx}{\hat{x}}
\newcommand{\hy}{\hat{y}}
\newcommand{\hz}{\hat{z}}
\newcommand{\hw}{\hat{w}}
\newcommand{\tz}{\tilde{z}}
\newcommand{\tw}{\tilde{w}}
\newcommand{\tx}{\tilde{x}}
\newcommand{\ty}{\tilde{y}}
\newcommand{\hR}{\widehat{R}}
\newcommand{\hQ}{\widehat{Q}}
\newcommand{\tR}{\widetilde{R}}
\newcommand{\tQ}{\widetilde{Q}}
\newcommand{\real}{\mathbbm{R}}
\newcommand{\SOL}{\mathrm{SOL}}
\newcommand{\VI}{\mathrm{VI}}
\def\+#1{\mathcal{#1}}
\def\-#1{\mathbb{#1}}
\title{Last-Iterate Convergence Properties of\\ Regret Matching Algorithms in Games\thanks{Authors are listed in alphabetic order.}}
\author{%
  Yang Cai\\
  Yale\\
  \texttt{yang.cai@yale.edu} \\
  % examples of more authors
  \And
  Gabriele Farina\\
  MIT\\
  \texttt{gfarina@mit.edu} \\
  \And
  Julien Grand-Clément\\
  HEC Paris\\
  \texttt{grand-clement@hec.fr} \\
  \And
  Christian Kroer\\
  Columbia\\
  \texttt{ck2945@columbia.edu}\\
  \And
  Chung-Wei Lee\\
  USC\\
  \texttt{leechung@usc.edu}\\
  \And
  Haipeng Luo\\
  USC\\
  \texttt{haipengl@usc.edu}\\
  \AND
  Weiqiang Zheng\\
  Yale\\
  \texttt{weiqiang.zheng@yale.edu}
}
\begin{document}

\maketitle

\begin{abstract}
We study last-iterate convergence properties of algorithms for solving two-player zero-sum games based on Regret Matching$^+$ (RM$^+$). Despite their widespread use for solving real games, virtually nothing is known about their last-iterate convergence. A major obstacle to analyzing RM-type dynamics is that their regret operators lack Lipschitzness and (pseudo)monotonicity.
We start by showing numerically that several variants used in practice, such as RM$^+$, predictive RM$^+$ and alternating RM$^+$, all lack last-iterate convergence guarantees even on a simple $3\times 3$ matrix game.
We then prove that recent variants of these algorithms based on a \emph{smoothing} technique, \emph{extragradient RM$^{+}$} and \emph{smooth Predictive RM$^+$},  enjoy asymptotic last-iterate convergence (without a rate), $1/\sqrt{t}$ best-iterate convergence, and when combined with restarting, linear-rate last-iterate convergence. {Our analysis builds on a new characterization of the geometric structure of the limit points of our algorithms, marking a significant departure from most of the literature on last-iterate convergence. We believe that our analysis may be of independent interest and offers a fresh perspective for studying last-iterate convergence in algorithms based on non-monotone operators.}
\end{abstract}

\section{Introduction}
Saddle-point optimization problems have attracted significant research interest with applications in generative adversarial networks~\citep{goodfellow2020generative}, imaging~\citep{chambolle2011first}, market equilibrium~\citep{kroer2019computing},  and game-solving~\citep{von1996efficient}.
Matrix games provide an elementary saddle-point optimization setting, where the set of decisions of each player is a simplex and the objective function is bilinear. Matrix games can be solved via self-play, where each player employs a {\em regret minimizer}, such as online gradient descent ascent (GDA), multiplicative weight updates (MWU), or Regret Matching$^+$ (\rmp). A well-known folk theorem shows that the {\em average} of the strategies visited at all iterations converges to a Nash equilibrium, at a rate of $O(1/\sqrt{T})$ for GDA, MWU and \rmp, and at a rate of $O(1/T)$ for predictive variants of GDA and MWU~\citep{syrgkanis2015fast,rakhlin2013optimization}. 

In recent years, there has been increasing interest in the {\em last-iterate} convergence properties of algorithms for saddle-point problems~\citep{daskalakis2019last, golowich2020last, wei2021linear,lee2021last}.
There are multiple reasons for this. 
First, since no-regret learning is often viewed as a plausible method of real-world gameplay, it would be desirable to have the actual strategy iterates converge to an equilibrium, instead of only having the average converge.
Secondly, suppose self-play via no-regret learning is being used to compute an equilibrium. In that case, iterate averaging can often be cumbersome, especially when deep-learning components are involved in the learning approach, since it may not be possible to average the outputs of a neural network.
Thirdly, iterate averaging may be slower to converge since the convergence rate is limited by the extent to which early ``bad'' iterates are discounted in the average.
% Indeed, averaging the strategies at every iteration (as in ergodic convergence) may be cumbersome, and typically limit the convergence to $O(1/\sqrt{T})$ or $O(1/T)$.
Even in the simple matrix game setting, interesting questions arise when considering the last-iterate convergence properties of widely used algorithms. 
For instance, both GDA and MWU may diverge~\citep{bailey2018multiplicative,cheung2019vortices}, whereas their predictive counterparts, Optimistic GDA (OGDA)~\citep{daskalakis2018training,mertikopoulos2019optimistic,wei2021linear} and Optimistic MWU (OMWU)~\citep{daskalakis2019last,lei2021last} converge at a linear rate under some assumptions on the matrix games. Furthermore, it has been demonstrated that OGDA and the Extragradient Algorithm (EG) have a last-iterate convergence rate of $O(1/\sqrt{T})$ without any assumptions on the game~\citep{cai2022finite,gorbunov2022last}. 

On the other hand, very little is known about the 
last-iterate convergence properties of \rmp and its variants. 
\tb{
This is in spite of the fact that \rmp-based algorithms have been used in \emph{every} case of solving extremely large-scale poker games~\citep{bowling2015heads,moravvcik2017deepstack,brown2018superhuman}, one of the primary real-world instantiations of using learning dynamics to solve huge-scale zero-sum games.
Poker games are modeled as \emph{extensive-form games} (EFGs), a class of sequential games that capture imperfect information and stochasticity.
\rmp is a regret minimizer for strategy sets that are probability simplexes, and thus is not directly applicable to EFGs.
In order to apply it in EFGs, it is combined with \emph{counterfactual regret minimization}~\citep{zinkevich2007regret}, which enables solving an EFG by instantiating simplex regret minimizers at every decision point.
% , linear averaging, and alternation, \rmp has been used in multiple important milestones in solving extremely large poker games~\citep{bowling2015heads,moravvcik2017deepstack,brown2018superhuman}. 
Despite its very strong empirical performance, the last-iterate behavior of any variant of \rmp{} is still not understood. It has been found empirically that the last-iterate in \rmp{} combined with CFR (in alternating self play) may achieve better performance than the average iterate~\citep{bowling2015heads}, but on the other hand the last iterate may diverge even on rock-paper-scissors~\citep {lee2021last}.  
% Specifically, \rmp~\citep{hart2000simple,tammelin2015solving} is a simple, stepsize-free regret minimizer, guaranteeing $O(1/\sqrt{T})$ ergodic convergence in self-play for matrix games. %and with widespread practical use. In fact, 
}
Moreover, unlike OGDA and OMWU, the predictive variants of \rmp do not enjoy a theoretical speed up:
Predictive \rmp{} (\prmp), a predictive variant of \rmp, was introduced in \cite{farina2021faster} and shown to achieve very good empirical performance in some games. However, it still only guarantees $O(1/\sqrt{T})$ ergodic convergence for matrix games (unlike OGDA and OMWU). To address this, \cite{farina2023regret} introduce variants of \rmp{} with $O(1/T)$ ergodic convergence for matrix games, namely, Extragradient \rmp{} (\exrmp) and Smooth Predictive \rmp{} (\sprmp),  and show that these algorithms also enjoy strong practical ergodic performance on matrix games.

\tb{Overall, while much is known about the last-iterate convergence properties of OGDA, OMWU, and EG, these algorithms are not widely used in real large-scale EFG applications, and very little is known about the last-iterate convergence of the \rmp-based algorithms that are used in practice.}
\tb{
As a first step toward understanding the last-iterate convergence properties of \rmp-algorithms in EFGs, we study both theoretically and empirically the last-iterate behavior of \rmp{} and its variants for the special case of matrix games.}
Our {\bf main results} are as follows.
\begin{enumerate}[leftmargin=*]
    \item {\bf Divergence of \rmp.} We provide numerical evidence that \rmp{} and important variants of \rmp, including alternating \rmp{} and \prmp, may fail to have asymptotic last-iterate convergence \tb{and best-iterate convergence}. Conversely, we also prove that \rmp{} \emph{does} have last-iterate convergence in a very restrictive setting, where the matrix game admits a strict Nash equilibrium (\Cref{th:rmp-convergence-to-pure-NE})\footnote{\cite{meng2023efficient} studies \rmp in strongly-convex-strongly-concave games and proves its asymptotic convergence (without rate) to the unique equilibrium. \Cref{th:rmp-convergence-to-pure-NE}'s proof largely follows from \citep{meng2023efficient} with the key observation that the properties of strict NE suffices to adapt the analysis in \citep{meng2023efficient} to zero-sum games (not strongly-convex-strongly-concave games) with a strict NE.}.
    \item \tb{ {\bf Asymptotic convergence under the Minty condition.} 
    Zero-sum game solving with \rmp{} can be seen as solving a variational inequality with a \emph{non-monotone} operator, which we call the regret operator (see Equation \eqref{eq:F}). This is a significant departure from the existing results, \textit{e.g.}, for the extragradient algorithm and optimistic gradient on \emph{monotone} games~\citep{cai2022finite}. 
    
    \textbf{Technical Contribution:} {We first show that \exrmp{} and \sprmp{} exhibit asymptotic last-iterate convergence in the duality gap, enabled by our observation that the regret operator satisfies the {\em Minty condition} (see \eqref{eq:minty condition}), a condition strictly weaker than (pseudo)monotonicity. However, convergence in the duality gap is weaker than convergence \emph{in iterates}, the usual form of convergence guarantee for monotone operators. An important benefit of convergence \emph{in iterates} is that it ensures the stability of the dynamics, in the sense that the strategies asymptotically stop changing (see \Cref{sec:exrmp-last-iterate-iterate} for a detailed discussion).
    %Note that for monotone operators, asymptotic last-iterate convergence typically holds in a stronger sense—namely, that the learning dynamics converge to a single point, known as last-iterate convergence \emph{in iterates}. This form of convergence offers a stronger guarantee of the stability of the dynamics compared to convergence in the duality gap (see \Cref{sec:exrmp-last-iterate-iterate} for a detailed discussion). 
    Although the Minty condition is a well-studied concept in variational inequalities~\citep{kinderlehrer2000introduction}, it does \emph{not} generally imply last-iterate convergence in iterates. Nevertheless, we are able to show that both \exrmp{} and \sprmp{} exhibit asymptotic last-iterate convergence in iterates as well. Our proof relies on a new characterization of the geometric structure of the limit points of the learning dynamics, which may be of independent interest for analyzing last-iterate convergence in other algorithms.}
    }
    \item {\bf Finite-time convergence.} 
    % We study the convergence properties of two recently proposed variants of \rmp: \exrmp (\Cref{EXRM}) and Smooth \prmp (\Cref{SPRM+}). 
    % For these two algorithms, we first prove the asymptotic convergence of the last iterate (without providing a rate). 
    We then show a $O(1/\sqrt{t})$-rate for the duality gap for the \emph{best} iterate after $t$ iterations of both algorithms.
    Building on this observation, we finally introduce new variants of \exrmp and \sprmp that restart whenever the distance between two consecutive iterates has been halved, and prove that they enjoy \emph{linear last-iterate convergence}.
    \item {\bf Experiments.} We validate the last-iterate convergence of \exrmp{} and \sprmp{} (including their restarted variants that we propose) numerically on four instances of matrix games, including Kuhn poker and Goofspiel. We also note that while vanilla \rmp, alternating \rmp{}, and  \prmp{} may not converge, alternating \prmp{} exhibits a surprisingly fast last-iterate convergence. 
\end{enumerate}

%\paragraph{Notation.} We write $\bm{0}$ for the vector with $0$ on every component and $\bm{1}_{d}$ for the vector in $\R^{d}$ with $1$ on every component. We use the convention that $\bm{0}/0 = (1/d)\bm{1}_{d}$.  $\Delta^{d}$ is the simplex: $\Delta^{d} = \{ {x} \in \R^{d}_+ \; | \; \langle {x},\bm{1}_{d} \rangle=1\}$.
%For $x \in \R$, we write $[x]^+$ for the positive part of $x: [x]^+ = \max \{0,x\}$, and we overload this notation to vectors component-wise. We use $\| \cdot \|$ to denote the $\ell_{2}$ norm.
% For two vectors $\vec{a}$ and $\vec{b}$, $\vec{a} \geq \vec{b}$ means $\vec{a}$ is at least $\vec{b}$ component-wise.
% We write $\| \cdot \|_{*}$ for the dual norm of a norm $\| \cdot \|$.
% \hl{listing the notations we need to explain: $\bm{0}$, $\bm{1}$ (we also have $\bm{1}_d$ btw), $\relu{}$ (used for vector input as well), $\Delta^d$, $\R^{n}_{+}$}
%\vspace{-4mm}
\section{Preliminaries on Regret Matching$^+$}\label{sec:preliminaries}
\paragraph{Notation.} We write $\bm{0}$ for the vector with $0$ on every component and $\bm{1}_{d}$ for the vector in $\R^{d}$ with $1$ on every component. We use the convention that $\bm{0}/0 = (1/d)\bm{1}_{d}$.  $\Delta^{d}$ is the simplex: $\Delta^{d} = \{ {x} \in \R^{d}_+ \; | \; \langle {x},\bm{1}_{d} \rangle=1\}$.
For $x \in \R$, we write $[x]^+$ for the positive part of $x: [x]^+ = \max \{0,x\}$, and we overload this notation to vectors component-wise. We use $\| \cdot \|$ to denote the $\ell_{2}$ norm.

In this paper, we study iterative algorithms for solving the following matrix game:
\vspace{-2.5mm}\begin{equation}\label{eq:matrix game}
    \min_{x \in \Delta^{d_1}}\max_{y \in \Delta^{d_2}} x^{\top} Ay
\end{equation}
for a {\em payoff matrix} $A \in \R^{d_1 \times  d_2}$. We define $\cZ = \Delta^{d_{1}} \times \Delta^{d_{2}}$ to be the set of feasible pairs of strategies. The duality gap of a pair of feasible strategy $(x,y) \in \cZ$ is defined as $\DualGap(x, y):= \max_{y'\in\Delta^{d_2}} x^\top Ay' - \min_{x'\in\Delta^{d_1}} x'^\top A y.$
% \[\DualGap(x, y):= \max_{y'\in\Delta^{d_2}} x^\top Ay' - \min_{x'\in\Delta^{d_1}} x'^\top A y.\]
Note that we always have $\DualGap(x, y) \geq 0$, and it is well-known that $\DualGap(x, y) \leq \epsilon$ implies that the pair $(x,y) \in \cZ$ is an $\epsilon$-Nash equilibrium of the matrix game \eqref{eq:matrix game}. When both players of \eqref{eq:matrix game} employ a {\em regret minimizer}, a well-known folk theorem shows that the averages of the iterates generated during self-play converge to a Nash equilibrium (NE) of the game~\citep{freund1999adaptive}. This framework can be instantiated with any regret minimizers, for instance, online mirror descent, follow-the-regularized leader, regret matching, and optimistic variants of these algorithms. We refer to \citep{hazan2016introduction} for an extensive review on regret minimization. From here on, we focus on solving \eqref{eq:matrix game} via Regret Matching$^+$ and its variants.
To describe these algorithms, it is useful to define for a strategy $x \in \Delta^d$ and 
a loss vector $\ell\in\R^d$, the negative instantaneous regret vector ${f}({x},{\ell})=\ell-{x^\top\ell}\cdot\bm{1}_d$,\footnote{Here, $d$ can be either $d_1$ or $d_2$. That is, we overload the notation $f$ so its domain depends on the inputs.}
and the normalization operator $g: \R_+^{d_1} \times \R_+^{d_2} \rightarrow \cZ$ such that
\begin{equation}\label{eq:definition g}g(z) = \left(\frac{z_1}{\|z_1\|_1}, \frac{z_2}{\|z_2\|_1}\right), \forall \; z = (z_1, z_2)\in \R_+^{d_1} \times \R_+^{d_2}
\end{equation}
% for $z = (z_1, z_2)\in \R_+^{d_1} \times \R_+^{d_2}$, we have $g(z) = (z_1/\|z_1\|_1, z_2/\|z_2\|_1) \in \cZ$.

\paragraph{Regret Matching$^+$ (\rmp) and its variants.}
We describe Regret Matching$^+$ (\rmp) in Algorithm~\ref{RM+}~\citep{tammelin2014solving}.\footnote{
 Typically, \rmp{} and \prmp{} are introduced as regret minimizers that return a sequence of decisions against any sequence of losses~\citep{tammelin2015solving,farina2021faster}. For conciseness, we directly present them as self-play algorithms for solving matrix games, as in Algorithm \ref{RM+} and Algorithm \ref{PRM+}. 
 } 
It maintains two sequences: a sequence of joint {\em aggregate payoffs} $(R^{t}_{x},R^{t}_{y}) \in \R_{+}^{d_{1}} \times \R_+^{d_{2}}$ updated using the instantaneous regret vector, and a sequence of joint strategies $(x^t,y^t) \in \cZ$ directly normalized from the aggregate payoff.
The update rules are stepsize-free and only perform closed-form operations (thresholding and rescaling). 
Predictive \rmp{} (\prmp, Algorithm \ref{PRM+})~\citep{farina2021faster} incorporates {\em predictions} of the next losses faced by each player (using the most recent observed losses) when computing the strategies at each iteration, akin to predictive/optimistic online mirror descent~\citep{rakhlin2013optimization,syrgkanis2015fast}. 
We describe two popular variants of \rmp{} and \prmp, {\em Alternating \rmp}~\citep{tammelin2015solving,burch2019revisiting} and {\em Alternating \prmp}, in Appendix \ref{app:alternating variants}. Using alternation, the updates between the two players are asynchronous, and at iteration $t$, the second player observes the choice $x^{t+1}$ of the first player when choosing their own decision $y^{t+1}$. Alternation leads to faster empirical performance for solving matrix and extensive-form games, even though the theoretical guarantees remain the same as for vanilla \rmp~\citep{burch2019revisiting,grand2023solving}.

\begin{figure}[htb]
\begin{minipage}[t]{.44\textwidth}
\begin{algorithm}[H]
      \caption{Regret Matching$^+$ (\rmp)}
      \label{RM+}
      \begin{algorithmic}[1]
      \STATE {\bf Initialize}:  $(R^{0}_{x},R^{0}_{y}) = \bm{0}$, $({x}^{0},{y}^{0}) \in \cZ$ 
        
      \FOR{$t = 0, 1, \dots$}
    \STATE  $R^{t+1}_{x}=[R^{t}_{x} -f(x^t, A y^{t}) ]^+$
    \STATE  $R^{t+1}_{y}=[R^{t}_{y} + f(y^t, A^{\top} x^{t})]^+$
    \STATE $(x^{t+1}, y^{t+1}) = g(R^{t+1}_{x}, R^{t+1}_{y})$ 
      \ENDFOR
\end{algorithmic}
\end{algorithm}
\end{minipage}
\begin{minipage}[t]{.56\textwidth}
\begin{algorithm}[H]
      \caption{Predictive \rmp{} (\prmp)}
      \label{PRM+}
      \begin{algorithmic}[1]
      \STATE {\bf Initialize}:  $(R^{0}_{x},R^{0}_{y}) = \bm{0}$, $({x}_{0},{y}_{0}) \in \cZ$
        
      \FOR{$t = 0,1, \dots$}
    \STATE  $R^{t+1}_{x}=[R^{t}_{x} -f(x^t, A y^{t}) ]^+$
    \STATE  $R^{t+1}_{y}=[R^{t}_{y} + f(y^t, A^{\top} x^{t})]^+$
    \STATE $(x^{t+1}, y^{t+1}) = g([R^{t+1}_{x} -f(x^t, A y^{t}) ]^+, [R^{t+1}_{y} + f(y^t, A^{\top} x^{t})]^+) $
      \ENDFOR
\end{algorithmic}
\end{algorithm}
\end{minipage}
\vspace{-2mm}
\end{figure}

% \begin{wrapfigure}{r}{0.58\textwidth}
% \begin{minipage}{0.58\textwidth}
% %\vspace{-1cm}
% \begin{algorithm}[H]
%       \caption{Predictive \rmp{} (\prmp)}
%       \label{PRM+}
%       \begin{algorithmic}[1]
%       \STATE {\bf Initialize}:  $(R^{0}_{x},R^{0}_{y}) = \bm{0}$, $({x}_{0},{y}_{0}) \in \cZ$
        
%       \FOR{$t = 0,1, \dots$}
%     \STATE  $R^{t+1}_{x}=[R^{t}_{x} -f(x^t, A y^{t}) ]^+$
%     \STATE  $R^{t+1}_{y}=[R^{t}_{y} + f(y^t, A^{\top} x^{t})]^+$
%     \STATE $(x^{t+1}, y^{t+1}) = $ \\ $g([R^{t+1}_{x} -f(x^t, A y^{t}) ]^+, [R^{t+1}_{y} + f(y^t, A^{\top} x^{t})]^+) $
%       \ENDFOR
% \end{algorithmic}
% \end{algorithm}
% \end{minipage}
% \end{wrapfigure}
Despite its strong empirical performance, it is unknown if alternating \prmp{} enjoys ergodic convergence. In contrast, based on the aforementioned folk theorem and the regret guarantees of \rmp~\citep{tammelin2014solving}, alternating \rmp~\citep{burch2019revisiting}, and \prmp~\citep{farina2021faster}, the duality gap of the average strategy of all these algorithms goes down at a rate of $O( 1/\sqrt{T})$.
However, we will show in the next section that the iterates $(x^t,y^t)$ themselves may not converge. We also note that despite the connections between \prmp{} and predictive online mirror descent, \prmp{} does not achieve $O(1/T)$ ergodic convergence, because of its lack of stability~\citep{farina2023regret}.
\paragraph{Extragradient \rmp{} and Smooth Predictive \rmp} We now describe two theoretically-faster variants of \rmp{} recently introduced in \cite{farina2023regret}. To provide a concise formulation, we first need some additional notation.
First, we define the clipped positive orthant $\Delta^{d_i}_{\ge}: = \{ u \in \R^{d_i}_{+}: u ^{\top} \bm{1}_{d_{i}} \ge 1 \}$ for $i = 1,2$ and $\Z_{\ge} = \Delta^{d_1}_{\ge} \times \Delta^{d_2}_{\ge}$. For a point $z \in \Z_{\ge}$, we often write it as $z = (Rx, Qy)$ for positive real numbers $R$ and $Q$ such that $(x,y) = g(z)$. Moreover, we define the operator $F:\Z_{\ge} \rightarrow \R^{d_{1}+d_{2}}$ as follows: for $z \in \Z_{\ge}$, let $(x,y) = g(z)$ and
\begin{equation}
\label{eq:F}
    F(z) = \begin{bmatrix}
    f(x, Ay) \\
    f(y, -A^\top x)
\end{bmatrix} = \begin{bmatrix}
    Ay - x^\top Ay \cdot \bm{1}_{d_1} \\
    -A^\top x + x^\top Ay \cdot \bm{1}_{d_2}
\end{bmatrix}
\end{equation}
\tb{Note that $F$ is Lipschitz continuous over $\cZ_{\geq}$ with $L_F = \sqrt{6} \InNorms{A}_{op} \max\{d_1, d_2\}$~\citep{farina2023regret}, but $F$ is {\em not} Lipschitz continuous over the larger set $\R_{+}^{d_{1}+d_{2}}$, as we show in Appendix \ref{app:non-lip-F}. This is because in the expression of $F(z)$ as in \eqref{eq:F}, $x$ and $y$ are the $\ell_{1}$ normalization of the first and second components of $z$ (see the definition of $g$ as in \eqref{eq:definition g}), and the function $z \mapsto g(z)$ is badly behaved and may vary rapidly when $z \in \R_{+}^{d_{1} + d_{2}}$ is close to the origin.}
We also write $\Pi_{\cZ_{\geq}}(u)$ for the projection onto $\cZ_{\geq}$ of the vector $u$: $\Pi_{\cZ_{\geq}}(u) = \arg \min_{z' \in \cZ_{\geq}} \|z' - u\|_{2}$.
With these notations, Extragradient \rmp{} (\exrmp) and Smooth \prmp{} (\sprmp) are defined in Algorithm \ref{EXRM} and in Algorithm \ref{SPRM+}. 

\begin{figure}[H]
\vspace{-5mm}
\begin{minipage}[b]{.5\textwidth}
\begin{algorithm}[H]
      \caption{Extragradient  \rmp{} (\exrmp)}
      \label{EXRM}
      \begin{algorithmic}[1]
     \STATE {\bf Input}: Step size $\eta \in (0, \frac{1}{L_F})$.

      \STATE {\bf Initialize}: 
      $z^{0} \in \cZ$
        
      \FOR{$t = 0, 1, \dots$}
    \STATE  $z^{t+1/2}  = \Pi_{\cZ_{\geq}}\left(z^t-\eta F(z^{t})\right)$
    \STATE $z^{t+1}  = \Pi_{\cZ_{\geq}}\left(z^t-\eta F(z^{t+1/2})\right)$
      \ENDFOR
\end{algorithmic}
\end{algorithm}
\end{minipage}
\begin{minipage}[b]{.5\textwidth}
\begin{algorithm}[H]
      \caption{Smooth P\rmp{} (\sprmp)}
      \label{SPRM+}
      \begin{algorithmic}[1]
     \STATE {\bf Input}: Step size $\eta \in (0, \frac{1}{8L_F}]$. 
      \STATE {\bf Initialize}:  $z^{-1} = w^0 \in \Z$
        
      \FOR{$t = 0, 1, \ldots$}
    \STATE  $z^{t}  = \Pi_{\cZ_{\geq}}\left(w^t-\eta F(z^{t-1})\right)$
    \STATE $w^{t+1}  = \Pi_{\cZ_{\geq}}\left(w^t-\eta F(z^{t})\right)$
      \ENDFOR
\end{algorithmic}
\end{algorithm}
\end{minipage}
\vspace{-6mm}
\end{figure}
\exrmp{} is connected to the Extragradient (EG) algorithm~\citep{korpelevich1976extragradient} and \sprmp{} is connected to the Optimistic Gradient (OG) algorithm~\citep{popov1980modification,rakhlin2013optimization} (see \Cref{sec:exrmp-convergence} and \Cref{sec:sprmp-convergence} 
 for details). \cite{farina2023regret} show that \exrmp{} and \sprmp{} enjoy fast $O(\frac{1}{T})$ convergence (for the average of the iterates) for solving matrix games, \tb{but since the regret operator $F$ is {\em not} monotone, existing results for the convergence of EG and Optimistic Gradient on monotone games~\cite{cai2022finite} do not apply, and nothing is known about the last-iterate convergence properties of \exrmp{} and \sprmp{}.}
\section{Non-convergence of \rmp{}, alternating \rmp{}, and \prmp{}}\label{sec:non-convergence}
In this section, we show empirically that several existing variants of \rmp{} may not converge in iterates. Specifically, we numerically investigate four algorithms---\rmp{}, alternating \rmp{}, \prmp, and alternating \prmp{}---on a simple $3 \times 3$ game matrix $A = [[3,0,-3],[0,3,-4],[0,0,1]]$ that has the unique Nash equilibrium $(x\opt, y\opt) = ([\frac{1}{12}, \frac{1}{12}, \frac{5}{6}], [\frac{1}{3}, \frac{5}{12}, \frac{1}{4}])$. The same instance was also used in~\citep{farina2023regret} to illustrate the instability of \prmp and slow ergodic convergence of \rmp and \prmp. The results are shown in \Cref{fig:bad-matrix-instance}.
We observe that for \rmp, alternating \rmp{}, and \prmp{}, the duality gap remains on the order of $10^{-1}$ even after $10^{5}$ iterations, \tb{which also suggests that these algorithms do not enjoy any {\em best-iterate} convergence properties}. Our empirical findings are in line with Theorem 3 of~\cite{lee2021last}, who pointed out that \rmp{} diverges on the rock-paper-scissors game.  In contrast, alternating \prmp{} enjoys good last-iterate convergence properties on this instance. Overall, our empirical results suggest that \rmp{}, alternating \rmp{}, and \prmp{} all fail to converge in iterates, \emph{even when the game has a unique Nash equilibrium}, a more regular and benign setting than the general case. \tb{To illustrate their non-converging properties, we provide the plots of the last iterates of \rmp, alternating \rmp, \prmp{} after $10^5$ iterations in Appendix \ref{app:plots last iterates}. }

\begin{wrapfigure}{r}{5cm}
\vspace{-6mm}
\begin{center}
    \resizebox{5cm}{!}{\input{Figs/duality_gap_T_100000_all_alg_wrap_ICLR.pgf}}%
\end{center}
\vspace{-3mm}
\caption{
Duality gap of the current iterates generated by \rmp, \prmp, and their alternating variants on the zero-sum game with payoff matrix $A = [[3,0,-3],[0,3,-4],[0,0,1]]$.
}
\vspace{-5mm}
\label{fig:bad-matrix-instance}
\vspace{-1em}
\end{wrapfigure}
We complement our empirical non-convergence results by showing that \rmp{} has asymptotic convergence under the restrictive assumption that the game has a \emph{strict} NE. To our knowledge, this is the first positive last-iterate convergence result related to \rmp{}. In a strict NE $(x^\star, y^\star)$, $x^\star$ is the \emph{unique} best-response to $y^\star$ and \textit{vice versa}.
% It follows from this definition that the equilibrium is unique and that $x^\star, y^\star$ are pure strategies. 
% As an example, the game matrix $A= [[2, 1], [3, 4]]$ has a strict Nash equilibrium $x^\star = [1,0]$ and $y^\star = [1,0]$.
\begin{theorem}[Convergence of \rmp{} to  Strict NE]\label{th:rmp-convergence-to-pure-NE} If a matrix game has a \emph{strict Nash equilibrium} $(x^\star, y^\star)$,  \rmp{} (\Cref{RM+}) converges in last-iterate, that is, $\lim_{t\rightarrow \infty}\{(x^t, y^t)\} = (x^\star, y^\star)$.
\end{theorem}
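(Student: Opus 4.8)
The plan is to combine the (purely ergodic) folk theorem for \rmp{} with a local drift analysis that leverages the strictness margin of the equilibrium.

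\emph{Step 1: reduction to a pure equilibrium with a margin.} I would first observe that a strict NE of a matrix game is necessarily pure. Indeed, the minimizer's set of best responses to $y^\star$ is the face $\mathrm{conv}\{e_k : k\in\argmin_m (Ay^\star)_m\}$ (with $e_k$ the $k$-th standard basis vector), which is a singleton only when the $\argmin$ is a singleton; hence $x^\star = e_i$ and, symmetrically, $y^\star = e_j$ for some $i,j$. Strictness then reads $A_{ij}<A_{kj}$ for every $k\neq i$ and $A_{ij}>A_{il}$ for every $l\neq j$, so the margin $\delta := \min\{\min_{k\neq i}(A_{kj}-A_{ij}),\ \min_{l\neq j}(A_{ij}-A_{il})\}$ is strictly positive. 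Finally, by interchangeability of equilibria in zero-sum games, $(e_i,e_j)$ is the \emph{unique} NE, so the folk theorem together with continuity of $\DualGap$ gives $\bar x^T\to e_i$ and $\bar y^T\to e_j$ for the time-averaged iterates.

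\emph{Step 2: a forward-invariant neighborhood on which the dynamics collapse to the equilibrium.} I would then study the region $\mathcal N := \{(x,y)\in\cZ : x_i\ge 1-\epsilon_0,\ y_j\ge 1-\epsilon_0\}$ for a small $\epsilon_0>0$ depending only on $\delta$ and $A$. The quantity added to coordinate $k$ of $R_x$ at step $t$ is $-f(x^t,Ay^t)_k = (x^t)^\top Ay^t - (Ay^t)_k$. An elementary estimate shows that on $\mathcal N$, for every $k\neq i$ this drift is at most $-\delta/2<0$ — because $x^t$ being $O(\epsilon_0)$-close to $e_i$ forces $(x^t)^\top Ay^t\le A_{ij}+O(\epsilon_0)$ while $y^t$ being close to $e_j$ forces $(Ay^t)_k\ge A_{kj}-O(\epsilon_0)\ge A_{ij}+\delta-O(\epsilon_0)$ — whereas the drift of coordinate $i$ is nonnegative, since on $\mathcal N$ one has $(Ay^t)_i=\min_m (Ay^t)_m\le (x^t)^\top Ay^t$; the symmetric statements hold for $R_y$ with $\min$ replaced by $\max$. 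Three consequences follow: (i) because $R^{t+1}_{x,k}\le R^t_{x,k}$ for $k\neq i$ and $R^{t+1}_{x,i}\ge R^t_{x,i}$, the coordinate $x^{t+1}_i=\big(1+\sum_{k\neq i}R^{t+1}_{x,k}/R^{t+1}_{x,i}\big)^{-1}$ is nondecreasing, and likewise $y^{t+1}_j\ge y^t_j$, so $\mathcal N$ is forward invariant; (ii) $R^{t+1}_{x,k}\le\max\{0,R^t_{x,k}-\delta/2\}$ for $k\neq i$, so all off-equilibrium coordinates of $R_x$ (and, symmetrically, of $R_y$) reach $0$ after finitely many steps and remain $0$; (iii) $R^t_{x,i}$ is nondecreasing and stays strictly positive on $\mathcal N$. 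Combining (ii) and (iii) gives $x^t_i\to1$, i.e.\ $x^t\to e_i$ (in fact $x^t=e_i$ after finitely many steps), and symmetrically $y^t\to e_j$.

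\emph{Step 3: entering $\mathcal N$, and conclusion.} From $\bar x^T\to e_i$, $\bar y^T\to e_j$ we get $\tfrac1T\sum_{t=0}^{T-1}\big[(1-x^t_i)+(1-y^t_j)\big]\to0$; since the summands are nonnegative, the set of $t$ with $(x^t,y^t)\notin\mathcal N$ has zero density, so there exists $T_0$ with $(x^{T_0},y^{T_0})\in\mathcal N$. Applying Step 2 from time $T_0$ onward yields $\lim_t (x^t,y^t)=(e_i,e_j)=(x^\star,y^\star)$.

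\emph{Main obstacle.} Because the regret operator is non-monotone, there is no global Lyapunov function contracting $\|z^t-z^\star\|$, so — unlike in the monotone-operator analyses — one cannot conclude directly; the argument is forced to route through the ergodic folk theorem to reach a neighborhood of the (unique) equilibrium and only then run the local analysis. Within that local analysis, the delicate point is that the favorable drift signs require \emph{both} players' iterates to be close to their equilibrium coordinates simultaneously, so forward invariance of $\mathcal N$ and the two-sided collapse of off-equilibrium mass must be established jointly; this is precisely where the strictness margin $\delta$ plays the role that strong convexity plays in the strongly-convex--strongly-concave analysis of \cite{meng2023efficient}.
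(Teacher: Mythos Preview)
Your argument is correct, but it takes a genuinely different route from the paper. The paper does \emph{not} pass through the ergodic folk theorem; instead it exhibits a global Lyapunov function. Rewriting \rmp{} as the projected step $z^{t+1}=\Pi_{\R^{d_1+d_2}_{\ge0}}[z^t-\eta F(z^t)]$ with $z^t=(R^t_x,R^t_y)$, the paper proves a strict-NE analogue of strong convexity, namely $x^\top Ay^\star-(x^\star)^\top Ay\ge c\,\|(x,y)-(x^\star,y^\star)\|^2$ for some $c>0$ (a consequence of purity and the margin $\delta$). Plugging this into the standard projection three-point inequality and choosing the comparison point $k^t z^\star$ with $k^t=\argmin_{k\ge1}\|z^t-kz^\star\|$, one obtains a potential $\Phi_t=\|z^t-k^tz^\star\|^2+2\eta\langle F(z^\star),z^t\rangle$ that drops by $(2\eta c-\eta^2L_1^2)\|(x^t,y^t)-z^\star\|^2$, giving $\sum_t\|(x^t,y^t)-z^\star\|^2<\infty$ and hence convergence (and even an $O(1/\sqrt{T})$ best-iterate rate in $\|(x^t,y^t)-z^\star\|$). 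So your ``Main obstacle'' remark is off: a global Lyapunov \emph{does} exist here, and the paper uses it.

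What each approach buys: your argument is more elementary---it never needs the projection reformulation, the Lipschitz-type bound $\|F(z)-F(z')\|\le L_1\|g(z)-g(z')\|$, or a step size---and it yields the stronger local statement that the off-equilibrium regret coordinates vanish in \emph{finitely many} steps once $\mathcal N$ is reached. On the other hand, routing through the folk theorem leaves the time to enter $\mathcal N$ unquantified, so you obtain no global rate; the paper's Lyapunov analysis is quantitative from the start and yields the $O(1/\sqrt{T})$ best-iterate bound as a byproduct.
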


The assumption of strict NE in Theorem \ref{th:rmp-convergence-to-pure-NE} cannot be weakened to the assumption of a unique, non-strict NE, as our empirical counterexample shows.
Despite the isolated positive result given in Theorem~\ref{th:rmp-convergence-to-pure-NE} (under a very strong assumption that we do not expect to hold in practice), the negative empirical results from Figure \ref{fig:bad-matrix-instance} paint a bleak picture of the last-iterate convergence of unmodified regret-matching algorithms. This sets the stage for the rest of the paper, where we will show unconditional last-iterate convergence of \sprmp{} and \exrmp.

\section{Convergence Properties of \exrmp}\label{sec:exrmp-convergence}
In this section, we study \exrmp{}, an algorithm proposed by~\citep{farina2023regret}. Although \exrmp{} lacks the no-regret property in the adversarial setting,  understanding its performance is crucial due to its similarity to the important optimization algorithm EG. We show that \exrmp exhibits favorable last-iterate properties. We start by showing convergence in \emph{duality gap}, and strengthen the result to \emph{convergence in iterates} in Section~\ref{sec:exrmp-last-iterate-iterate}. Then, in Section \ref{sec:exrmp-best-iterate},  we provide a concrete rate of $O(1/\sqrt{T})$ for the best iterate, based on which we finally  show a linear last-iterate convergence rate using a restarting mechanism in Section~\ref{sec:exrmp-linear-conv}. All omitted proofs for this section can be found in Appendix \ref{app:convergence-exrmp}. In Section \ref{sec:sprmp-convergence}, we study \sprmp{}, a regret minimizer in adversarial settings, and show similar results. %Similar results hold for \sprmp{} (see Section \ref{sec:sprmp-convergence}).

 %Additionally, analyzing \sprmp{} is important given the significance of optimistic online gradient descent in regret minimization and optimization. Our analysis of \sprmp{} reveals results similar to those for \exrmp{}: convergence in iterates without a specific rate, best-iterate convergence with a rate, and linear convergence for its restarting variant. Notably, \sprmp{} also functions as a regret minimizer, being an instantiation of predictive online mirror descent for solving variational inequalities. For conciseness, we detail our results for \sprmp{} 

% \subsection{Asymptotic last-iterate convergence of \exrmp}\label{sec:exrmp-last-iterate}
\exrmp{} (\Cref{EXRM}) is equivalent to the Extragradient (EG) algorithm of~\cite{korpelevich1976extragradient} for solving a \emph{variational inequality} $\VI(\Z_{\ge}, F)$. 
For a closed convex set $\+S \subseteq \R^n$
and an operator $G:\+S \rightarrow \R^n$, the variational inequality problem $\VI(\+S, G)$ is to find $z \in \+S$, such that $\InAngles{G(z), z - z'}\le 0$ for all $z' \in \+S$. We denote $\SOL(\+S, G)$ the solution set of $\VI(\+S, G)$. \tb{For any solution $z$ of $\VI(\+Z_\ge, F)$, the strategy profile after $\ell_1$-normalization $(x, y) = g(z)$ is a Nash equilibrium of the matrix game $A$.}
The last-iterate properties of EG are known in several settings, including:
\begin{itemize}[left=5mm, nosep]
    \item[1.] If $G$ is Lipschitz and \emph{pseudo monotone with respect to the solution set $\SOL(\+S, G)$}, i.e.,
    \begin{equation}\label{eq:pseudo monotonicity}
      \forall \;z^\star \in \SOL(\+S, G), \forall \; z \in \+S, \InAngles{G(z), z - z^\star} \ge 0
    \end{equation}
     then iterates produced by EG converge to a solution of $\VI(\+S, G)$~\citep[Ch. 12]{facchinei2003finite};
    \item[2.] If $G$ is Lipschitz and \emph{monotone}, i.e.    \begin{equation}\label{eq:monotonicity}
      \forall \;z, z' \in \+S, \InAngles{G(z)-G(z'), z - z'} \ge 0
    \end{equation}  then iterates $\{z^t\}$ produced by EG have $O(\frac{1}{\sqrt{t}})$ last-iterate convergence such that $\InAngles{G(z^t), z^t - z}\le O(\frac{1}{\sqrt{t}})$ for all $z \in \+S$~\citep{golowich2020last,gorbunov2022extragradient,cai2022finite}.
\end{itemize}
Unfortunately, these results do not apply directly to our case: although the operator $F$ (as defined in \Cref{eq:F}) is $L_F$-Lipschitz-continuous over $\cZ_{\geq}$, it is not monotone or even pseudo monotone with respect to $\SOL(\Z_{\ge}, F)$, as we show in Appendix \ref{app:useful propositions}.
% \todoi{jgc: The previous point was missed by several reviewers. Can we provide an example of the non-monotonicity here or in the appendices and compare with the classical operator for matrix game $F:\Delta^{d_{1}} \times \Delta^{d_{2}},(x,y) \mapsto (Ay,-A^{\top} x)$?}
However, $F$ satisfies the \emph{Minty condition}:
\begin{equation}\label{eq:minty condition}
      \exists \; \;z^\star \in \SOL(\+Z_\ge, F), \forall \; z \in \+Z_\ge, \InAngles{F(z), z - z^\star} \ge 0.\end{equation}
 The Minty condition is weaker than pseudo monotonicity~\eqref{eq:pseudo monotonicity} (note the different quantifiers $\forall$ and $\exists$ for $z^\star$ in the two conditions). We now show that $F$ satisfies the Minty condition.
\begin{lemma}
    \label{Fact: MVI}
    For any Nash equilibrium $z^\star \in \Z$, $\InAngles{F(z), z - a  z^\star} \ge 0$ holds for all $a \ge 1$.
\end{lemma}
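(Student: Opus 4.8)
The plan is to prove the inequality by a direct expansion of $\InAngles{F(z), z - a z^\star}$, using the explicit closed form of $F$ in \eqref{eq:F} together with the saddle-point characterization of a Nash equilibrium. Write $z = (Rx, Qy)$ with $(x,y) = g(z) \in \cZ$ and $R = \|z_1\|_1$, $Q = \|z_2\|_1$; although $R, Q \ge 1$ on $\Z_\ge$, it will turn out that this plays no role. The point to exploit is that $x \in \Delta^{d_1}$ and $y \in \Delta^{d_2}$, so $\InAngles{\bm{1}_{d_1}, x} = \InAngles{\bm{1}_{d_1}, x^\star} = 1$ and likewise on the $y$-block; hence the rank-one ``centering'' terms $(x^\top Ay)\cdot\bm{1}$ appearing in $F(z)$ pair cleanly with $z - a z^\star = (Rx - ax^\star,\, Qy - ay^\star)$.

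Carrying out the expansion block by block, the first block contributes $R\, x^\top A y - a\,(x^\star)^\top A y - (x^\top Ay)(R - a)$ and the second block contributes $-Q\, x^\top A y + a\, x^\top A y^\star + (x^\top Ay)(Q - a)$. The key observation is that every term carrying a normalization factor $R$ or $Q$ cancels, and one is left with
\[
\InAngles{F(z),\, z - a z^\star} \;=\; a\big(x^\top A y^\star - (x^\star)^\top A y\big).
\]
So the quantity is independent of $R,Q$ (the inequality in fact holds for any $z$ with strictly positive blocks, not just on $\Z_\ge$), and the scalar $a$ simply factors out.

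It then remains to show $x^\top A y^\star - (x^\star)^\top A y \ge 0$, which is exactly the pair of saddle-point inequalities defining a Nash equilibrium of \eqref{eq:matrix game}: since $x^\star$ is a best response to $y^\star$, $x^\top A y^\star \ge (x^\star)^\top A y^\star$ for all $x \in \Delta^{d_1}$; since $y^\star$ is a best response to $x^\star$, $(x^\star)^\top A y \le (x^\star)^\top A y^\star$ for all $y \in \Delta^{d_2}$. Chaining these through the game value $(x^\star)^\top A y^\star$ yields $x^\top A y^\star \ge (x^\star)^\top A y$, so $\InAngles{F(z), z - a z^\star} = a\big(x^\top A y^\star - (x^\star)^\top A y\big) \ge 0$ for every $a \ge 0$, and in particular for every $a \ge 1$.

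I do not expect a genuine obstacle here: the argument is a short, self-contained computation. The only thing to be careful about is bookkeeping during the expansion — keeping the two simplex blocks and the two rank-one shift terms straight — and the one conceptual point worth highlighting is that the scaling factors $R, Q$ drop out entirely, which is precisely what makes this Minty-type inequality valid on the enlarged domain $\Z_\ge$ rather than only on $\cZ$.
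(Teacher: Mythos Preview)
Your proof is correct and follows essentially the same route as the paper. The only presentational difference is that the paper first observes $\InAngles{F(z), z} = 0$ (which immediately kills the $R, Q$ terms) and then computes $-\InAngles{F(z), a z^\star} = a(x^\top A y^\star - (x^\star)^\top A y)$, whereas you carry $R, Q$ through the block-by-block expansion and watch them cancel; both arrive at the same identity and conclude via the saddle-point inequalities.
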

% \begin{proof}
%     Let $z^\star = (x^\star, y^\star) \in \Delta^{d_1} \times \Delta^{d_2}$ be a Nash equilibrium of the matrix game. For any $z = (Rx, Qy) \in \Z_{\ge}$, using $\InAngles{F(z), z} = 0$ and the definition of Nash equilibrium, we have $\InAngles{F(z), z - a z^\star} = - \InAngles{F(z), a z^\star} = a( x^\top A y^\star -  (x^\star)^\top A y ) \ge 0.$
%     % \begin{align*}\InAngles{F(z), z - a z^\star} = - \InAngles{F(z), a z^\star} = a( x^\top A y^\star -  (x^\star)^\top A y ) \ge 0.\qedhere\end{align*}
% \end{proof}
% \tb{We emphasize again that not all solutions in $\SOL(\Z_{\ge}, F)$ satisfy this condition (see Appendix \ref{app:useful propositions}).}
We are now ready to show convergence \textit{in duality gap} for the iterates produced by \exrmp, which is weaker than \emph{convergence in iterates}, studied in the next subsection. 
Our analysis follows from some results in \cite{facchinei2003finite}, stated for the case of pseudomonotonicity, but extending for the Minty condition.
% We point out that under the assumption of Lipschitzness of the operator, certain results of \cite{facchinei2003finite} imply that the distance between any $z^\star$ that satisfies the Minty condition and the sequence $\{z^t\}$ produced by \exrmp{} is decreasing. 

\begin{lemma}[Adapted from Lemma 12.1.10 in \citep{facchinei2003finite}]
    \label{lemma: distance to NE is decreasing}
    Let $z^\star \in \Z_{\ge}$ be a point such that $\InAngles{F(z), z - z^\star} \ge 0$ for all $z \in \Z_{\ge}$. Let $\{z^t\}$ be the sequence produced by \exrmp{}. Then
    $\InNorms{z^{t+1} - z^\star}^2 \le \InNorms{z^t - z^\star}^2 - (1 - \eta^2 L_F^2) \InNorms{ z^{t+\half} - z^t}^2, \forall \; t \geq 0. \!\!$
    % \[
    %     \InNorms{z^{t+1} - z^\star}^2 \le \InNorms{z^t - z^\star}^2 - (1 - \eta^2 L_F^2) \InNorms{ z^{t+\half} - z^t}^2, \forall \; t \geq 0. \!\!
    % \]
\end{lemma}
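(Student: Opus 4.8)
The plan is to run the classical extragradient descent-lemma argument, using the first-order (variational) characterization of the Euclidean projection onto the closed convex set $\Z_{\ge}$, and replacing the usual appeal to (pseudo)monotonicity of $F$ by the single Minty-type inequality $\InAngles{F(z), z - z^\star}\ge 0$ assumed in the statement (which holds by \Cref{Fact: MVI}, since $z^\star\in\Z\subseteq\Z_{\ge}$). Throughout I will use that all iterates $z^t, z^{t+\half}$ lie in $\Z_{\ge}$, so that the $L_F$-Lipschitz bound for $F$ — valid on $\Z_{\ge}$ but not on all of $\R_+^{d_1+d_2}$ — is applicable.

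\textbf{Step 1 (projection inequality at the update).} Recall that for a closed convex set $\+S$, $\bar z=\Pi_{\+S}(u)$ iff $\InAngles{u-\bar z, z'-\bar z}\le 0$ for all $z'\in\+S$. Applying this to $z^{t+1}=\Pi_{\Z_{\ge}}(z^t-\eta F(z^{t+\half}))$ with the test point $z'=z^\star$ and rearranging gives $\InAngles{z^t-z^{t+1}, z^\star-z^{t+1}}\le \eta\InAngles{F(z^{t+\half}), z^\star-z^{t+1}}$. Now split $\InAngles{F(z^{t+\half}), z^\star-z^{t+1}}=\InAngles{F(z^{t+\half}), z^\star-z^{t+\half}}+\InAngles{F(z^{t+\half}), z^{t+\half}-z^{t+1}}$; the Minty hypothesis makes the first summand $\le 0$. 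Using the polarization identity $\InAngles{z^t-z^{t+1}, z^\star-z^{t+1}}=\half\left(\InNorms{z^t-z^{t+1}}^2+\InNorms{z^{t+1}-z^\star}^2-\InNorms{z^t-z^\star}^2\right)$, one obtains
\[
\InNorms{z^{t+1}-z^\star}^2 \le \InNorms{z^t-z^\star}^2 - \InNorms{z^t-z^{t+1}}^2 + 2\eta\InAngles{F(z^{t+\half}), z^{t+\half}-z^{t+1}}.
\]

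\textbf{Step 2 (bounding the cross term).} Write $2\eta\InAngles{F(z^{t+\half}), z^{t+\half}-z^{t+1}} = 2\eta\InAngles{F(z^{t+\half})-F(z^t), z^{t+\half}-z^{t+1}} + 2\eta\InAngles{F(z^t), z^{t+\half}-z^{t+1}}$. Applying the projection inequality to $z^{t+\half}=\Pi_{\Z_{\ge}}(z^t-\eta F(z^t))$ with the test point $z'=z^{t+1}\in\Z_{\ge}$ bounds the last term by $2\InAngles{z^t-z^{t+\half}, z^{t+\half}-z^{t+1}} = \InNorms{z^t-z^{t+1}}^2-\InNorms{z^t-z^{t+\half}}^2-\InNorms{z^{t+\half}-z^{t+1}}^2$; the remaining term is bounded by Cauchy–Schwarz, the $L_F$-Lipschitzness of $F$ on $\Z_{\ge}$, and Young's inequality by $\eta^2 L_F^2\InNorms{z^{t+\half}-z^t}^2+\InNorms{z^{t+\half}-z^{t+1}}^2$. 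Substituting both bounds into the inequality of Step 1, the $\InNorms{z^t-z^{t+1}}^2$ and $\InNorms{z^{t+\half}-z^{t+1}}^2$ terms cancel, leaving exactly $\InNorms{z^{t+1}-z^\star}^2\le \InNorms{z^t-z^\star}^2-(1-\eta^2 L_F^2)\InNorms{z^{t+\half}-z^t}^2$.

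\textbf{Main obstacle.} There is no serious obstacle; this is the textbook EG analysis. The only points requiring care, and the only departures from the pseudomonotone version in \citep{facchinei2003finite}, are: (i) verifying $z^\star\in\Z_{\ge}$ so it is a legitimate test point and that every iterate stays in $\Z_{\ge}$, which is needed precisely because $F$ fails to be Lipschitz on the larger orthant; and (ii) observing that the argument only ever evaluates $F$ at the extrapolated point $z^{t+\half}$ against the \emph{fixed} $z^\star$, so the full strength of (pseudo)monotonicity is never invoked — the single Minty inequality in the hypothesis is exactly what is used. The stepsize restriction $\eta<1/L_F$ of \Cref{EXRM} is what makes the coefficient $1-\eta^2 L_F^2$ strictly positive, so the bound also yields summability of $\sum_t\InNorms{z^{t+\half}-z^t}^2$.
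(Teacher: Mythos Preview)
Your proof is correct and is exactly the standard extragradient descent-lemma argument that the paper is invoking by citing Lemma 12.1.10 of \citep{facchinei2003finite}; the paper does not write out its own proof of this statement. The two places you flagged as requiring care --- that all iterates and $z^\star$ lie in $\Z_{\ge}$ so that the $L_F$-Lipschitz bound applies, and that only the single Minty inequality at $z^{t+\half}$ is used rather than full (pseudo)monotonicity --- are precisely the points the paper alludes to when it says the lemma is ``adapted'' and ``extending for the Minty condition.''
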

%Note that \Cref{lemma: distance to NE is decreasing} is not enough to conclude that $\{z^t\}$ converges. However, 
\Cref{lemma: distance to NE is decreasing} shows that the sequence $\{z^t\}$ is bounded, so it has at least one limit point $\hat{z}\in \Z_{\ge}$ by compactness.\footnote{A limit point of a sequence is the limit of one of its subsequences.}
\tb{Using \Cref{lemma: distance to NE is decreasing} and minor modifications to the proof of Theorem 12.1.11 in \cite{facchinei2003finite}, the next lemma shows that every limit point $\hz$ of $\{z^t\}$ lies in $\SOL(\Z_\ge, F)$ and thus induces a Nash equilibrium. Moreover, $\lim_{t \rightarrow \infty} \InNorms{z^t - z^{t+1}} =0$ and $\lim_{t\rightarrow \infty} \DualGap(g(z^t)) = 0$.

% \begin{lemma}\label{lemma: limit point induces Nash}
%     Let $\{z^t\}$ be the iterates produced by \exrmp{}. Then   $\lim_{t \rightarrow \infty} \InNorms{z^t - z^{t+1}} =0$ and also:
%     \begin{itemize}[left=5mm]
%         \item[1.] If $\hz$ is a limit point of $\{ z^t \}$, then $\hz \in SOL(\Z_{\ge}, F)$.
%         \item[2.] If $\hz \in SOL(\Z_{\ge}, F)$, then $(\hx, \hy) = g(\hz)$ is a Nash equilibrium of the matrix game $A$.
%     \end{itemize}
% \end{lemma}

\begin{lemma}\label{lemma: limit point induces Nash}
    Let $\{z^t\}$ be the iterates produced by \exrmp{}. Then the following holds:
    \begin{itemize}[nosep]
        \item[1.] $\lim_{t \rightarrow \infty} \InNorms{z^t - z^{t+1}} =0$.
        \item[2.] If $\hz$ is a limit point of $\{ z^t \}$, then $\hz \in \SOL(\Z_{\ge}, F)$.
        \item[3.] $\lim_{t\rightarrow \infty} \DualGap(g(z^t)) = 0$.
    \end{itemize} 
\end{lemma}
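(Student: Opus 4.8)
The plan is to treat \Cref{lemma: distance to NE is decreasing} as a Fej\'er-type monotonicity statement and then run the classical limit-point argument for the extragradient method (the proof of Theorem 12.1.11 in \citep{facchinei2003finite}), adapting each step to the Minty setting and to the fact that $F$ and $g$ are only well behaved on $\Z_\ge$. Fix a Nash equilibrium $z^\star\in\cZ\subseteq\Z_\ge$; by \Cref{Fact: MVI} it satisfies $\langle F(z), z - z^\star\rangle\ge 0$ for all $z\in\Z_\ge$, so \Cref{lemma: distance to NE is decreasing} applies with this $z^\star$, and in particular $\{z^t\}$ is bounded and has a limit point.

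For part~1, I would first telescope the inequality of \Cref{lemma: distance to NE is decreasing}: since $\eta<1/L_F$ we have $1-\eta^2 L_F^2>0$, so $\sum_{t\ge 0}\InNorms{z^{t+\half}-z^t}^2\le\frac{\InNorms{z^0-z^\star}^2}{1-\eta^2L_F^2}<\infty$, which forces $\InNorms{z^{t+\half}-z^t}\to 0$. Next, using that both $z^{t+1}$ and $z^{t+\half}$ are projections onto $\Z_\ge$ of $z^t-\eta F(z^{t+\half})$ and $z^t-\eta F(z^t)$ respectively, nonexpansiveness of $\Pi_{\cZ_{\geq}}$ together with $L_F$-Lipschitzness of $F$ on $\Z_\ge$ gives $\InNorms{z^{t+1}-z^{t+\half}}\le\eta\InNorms{F(z^{t+\half})-F(z^t)}\le\eta L_F\InNorms{z^{t+\half}-z^t}\to 0$, and the triangle inequality then yields $\InNorms{z^{t+1}-z^t}\to 0$.

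For part~2, let $\hz$ be a limit point of $\{z^t\}$, say $z^{t_k}\to\hz$; then $z^{t_k+\half}\to\hz$ as well by part~1, and $\hz\in\Z_\ge$ since $\Z_\ge$ is closed. Writing the first-order optimality condition for the projection defining $z^{t_k+\half}=\Pi_{\cZ_{\geq}}(z^{t_k}-\eta F(z^{t_k}))$, we get, for every $z'\in\Z_\ge$, $\eta\langle F(z^{t_k}), z^{t_k+\half}-z'\rangle\le\langle z^{t_k}-z^{t_k+\half}, z^{t_k+\half}-z'\rangle$. Letting $k\to\infty$ and using continuity of $F$ at $\hz$ (valid because $\hz\in\Z_\ge$, where $F$ is Lipschitz) on the left, while the right side vanishes since $z^{t_k}-z^{t_k+\half}\to 0$ and $z^{t_k+\half}-z'$ stays bounded, we conclude $\langle F(\hz), \hz-z'\rangle\le 0$ for all $z'\in\Z_\ge$, i.e.\ $\hz\in\SOL(\Z_\ge,F)$; by the remark before the lemma, $g(\hz)$ is then a Nash equilibrium.

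Part~3 is where I expect the only genuine subtlety. The key observation is that on the clipped orthant $\Z_\ge$ the $\ell_1$-norms of both blocks are bounded below by $1$, so the normalization map $g$ is Lipschitz (hence continuous) on $\Z_\ge$ --- this is exactly the property that fails near the origin and motivates the restriction to $\Z_\ge$ in the first place. Since $\DualGap$ is continuous on the compact set $\cZ$ (a max/min of bilinear forms), the composition $z\mapsto\DualGap(g(z))$ is continuous on $\Z_\ge$. If $\DualGap(g(z^t))$ did not tend to $0$, we could extract a subsequence with $\DualGap(g(z^{t_k}))\ge\epsilon$ for some $\epsilon>0$; by boundedness it has a further subsequence converging to some $\hz\in\Z_\ge$, which by part~2 lies in $\SOL(\Z_\ge,F)$, so $g(\hz)$ is a Nash equilibrium with $\DualGap(g(\hz))=0$, contradicting continuity --- hence $\DualGap(g(z^t))\to 0$. (One could instead push through a quantitative version via $\DualGap(g(z))=\max_{z'\in\cZ}\langle F(z),z-z'\rangle$, but that route is deferred to \Cref{sec:exrmp-best-iterate}.) The \emph{main obstacle} is thus not any single hard inequality but rather the bookkeeping needed to justify the limiting operations in parts~2 and~3 given that $F$ and $g$ behave well only on $\Z_\ge$; this is resolved by closedness of $\Z_\ge$ and the norm-bounded-below property of the clipped orthant, after which everything reduces to the textbook extragradient argument.
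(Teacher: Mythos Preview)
Your proposal is correct and follows essentially the same route as the paper: telescope \Cref{lemma: distance to NE is decreasing} to get $\InNorms{z^{t+\half}-z^t}\to 0$, use nonexpansiveness of $\Pi_{\cZ_\ge}$ and Lipschitzness of $F$ to conclude part~1, pass to the limit along a subsequence for part~2, and argue by contradiction via continuity of $\DualGap\circ g$ on $\Z_\ge$ for part~3. The only cosmetic difference is that in part~2 you pass to the limit in the first-order optimality inequality for the projection, whereas the paper passes to the limit directly in the fixed-point identity $z^{t+\half}=\Pi_{\cZ_\ge}[z^t-\eta F(z^t)]$ to obtain $\hz=\Pi_{\cZ_\ge}[\hz-\eta F(\hz)]$; these are equivalent characterizations of $\hz\in\SOL(\Z_\ge,F)$.
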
}
As shown in \Cref{lemma: limit point induces Nash}, for an operator that only satisfies the Minty condition, classical analysis yields asymptotic \emph{convergence in duality gap}. 

\subsection{Convergence of the Iterates}\label{sec:exrmp-last-iterate-iterate}
{\paragraph{Convergence in the Duality Gap Does Not Rule Out Cycling.} Convergence in the duality gap only implies that the trajectory $\{z^t\}$ converges to the set of solutions but \textbf{does not} guarantee \emph{convergence in iterates}, i.e., that $\lim_{t\rightarrow \infty} z^t$ exists and is a solution. Even when combined with the conditions $\lim_{t\rightarrow \infty} \InNorms{z^t - z^{t+1}} = 0$ and $\lim_{t\rightarrow \infty} \DualGap(g(z^t)) = 0$ (\Cref{lemma: limit point induces Nash}), we cannot conclude that the trajectory $\{z^t\}$ converges to a single point. It is possible for the trajectory to \emph{cycle} around the set of Nash equilibria without ever \emph{converging} to a single point. Specifically, these two conditions do not exclude the possibility that the per-iteration movement $\InNorms{z^t - z^{t+1}}$ vanishes (e.g., at a rate $O(\frac{1}{t^a})$ for some $0 < a \le 1$), while the total movement $\sum_{t=1}^\infty \InNorms{z^t - z^{t+1}}$ diverges to infinity. }
%We now show the stronger notion of \emph{convergence in iterates}, i.e., $\{z^t\}$ converges to a solution. 
\paragraph{Geometric Structure of Limit Points} 
%We remark that convergence in duality only means $\{z^t\}$ converges to the set of Nash equilibria, while convergence in iterate requires $\{z^t\}$ converges to a single Nash equilibrium. 

To show the strong last-iterate convergence in the iterate property, \emph{we make new observations on the structure of the solution set and provide characterizations of the learning dynamics' limit points.}
First, the sequence $\{z^t\}$ 
has at least one limit point $\hz \in \SOL(\Z_\ge, F)$. If $\hz$ is the unique limit point, then $\{z^t\}$ converges to $\hz$. In the next proposition, we provide another condition on the limit point under which $\{z^t\}$ converges.
\begin{proposition}
    \label{proposition: colinear limit point converges}
    If the iterates $\{z^t\}$ produced by \exrmp{} have a limit point  $\hz$ such that $\hz = a z^\star$ for $z^\star \in \Z$ and $a \ge 1$ (equivalently,  colinear with a pair of strategies), then $\{z^t\}$ converges to $\hz$.
\end{proposition}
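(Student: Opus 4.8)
The plan is to use the colinearity hypothesis to promote $\hz$ itself into a valid ``anchor'' for the Fej\'er-type estimate of \Cref{lemma: distance to NE is decreasing}, which then pins the entire sequence down to $\hz$. The whole argument is short once this observation is made; the only substantive step is recognizing \emph{why} colinearity with $\Z$ is exactly what is needed.

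First I would check that $\hz$ induces a Nash equilibrium and lies in $\Z_\ge$. By \Cref{lemma: limit point induces Nash}, every limit point of $\{z^t\}$ belongs to $\SOL(\Z_\ge, F)$, so $\hz \in \SOL(\Z_\ge, F)$; as noted in the excerpt, its $\ell_1$-normalization $g(\hz)$ is then a Nash equilibrium of $A$. Writing $\hz = a z^\star$ with $z^\star \in \Z$ and $a \ge 1$, and using that $g$ fixes every point of $\Z$ (each block already has unit $\ell_1$-norm), we get $g(\hz) = g(a z^\star) = z^\star$, so $z^\star$ itself is a Nash equilibrium. Moreover, each block of $\hz = a z^\star$ has $\ell_1$-norm $a \ge 1$, hence $\hz \in \Z_\ge$.

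Next I would invoke \Cref{Fact: MVI} with the Nash equilibrium $z^\star$ and the scalar $a \ge 1$, giving $\InAngles{F(z), z - a z^\star} = \InAngles{F(z), z - \hz} \ge 0$ for all $z \in \Z_\ge$. In other words, $\hz$ satisfies precisely the hypothesis of \Cref{lemma: distance to NE is decreasing} (in the role of the anchor point ``$z^\star$'' there). Applying that lemma with the anchor taken to be $\hz$, and using $\eta < 1/L_F$ so that $1 - \eta^2 L_F^2 > 0$, yields $\InNorms{z^{t+1} - \hz}^2 \le \InNorms{z^t - \hz}^2$ for all $t \ge 0$. Thus $\{\InNorms{z^t - \hz}\}_t$ is a nonnegative, monotonically non-increasing sequence, hence convergent to some $\ell \ge 0$.

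Finally, since $\hz$ is a limit point of $\{z^t\}$, some subsequence satisfies $z^{t_k} \to \hz$, so $\InNorms{z^{t_k} - \hz} \to 0$; as the full sequence $\InNorms{z^t - \hz}$ converges, its limit must be $\ell = 0$, i.e.\ $z^t \to \hz$. I do not anticipate a genuine obstacle here: the only subtlety is noticing that the scaled point $\hz = a z^\star$ still lies in $\Z_\ge$ and still satisfies the Minty-type inequality, both of which are immediate from \Cref{Fact: MVI}; the rest is bookkeeping. Conceptually, this proposition isolates the one favorable geometric configuration (limit point colinear with a strategy pair) in which asymptotic convergence in duality gap can be upgraded to convergence in iterates, and the general case in the subsequent development must reduce to it.
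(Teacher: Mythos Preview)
Your proof is correct and follows essentially the same approach as the paper: use \Cref{Fact: MVI} to show the Minty inequality holds at the anchor $\hz$, apply \Cref{lemma: distance to NE is decreasing} to obtain monotone nonincreasing $\InNorms{z^t - \hz}$, and conclude via the subsequential limit. You spell out one detail the paper leaves implicit, namely that the hypothesis only gives $z^\star \in \Z$, so one must first use \Cref{lemma: limit point induces Nash} (and $g(\hz)=z^\star$) to see that $z^\star$ is a Nash equilibrium before \Cref{Fact: MVI} applies; this is a helpful clarification but not a departure from the paper's argument.
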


% \begin{proof}
%     Denote by $\{z_t\}_{t \in \kappa}$ a subsequence of $\{z^t\}$ that converges to $\hz$. By \Cref{Fact: MVI}, the Minty condition holds for $\hz$, so by \Cref{lemma: distance to NE is decreasing}, $\{\InNorms{z^t - \hz}\}$ is monotonically decreasing and therefore converges.  Since $\lim_{t \rightarrow \infty} \InNorms{z^t - \hz} = \lim_{t(\in \kappa) \rightarrow \infty} \InNorms{z^t - \hz} = 0,$ $\{z^t\}$ converges to $\hz$.
% \end{proof}

However, the condition of \Cref{proposition: colinear limit point converges} may not hold in general, and we empirically observe in experiments that it is not uncommon for the limit point $\hz = (R\hx, Q\hy)$ to have $R \ne Q$. To proceed, we will use the observation that the only ``bad'' case that prevents us from proving convergence of $\{z^t\}$ is that $\{z^t\}$ has infinitely-many limit points (note that the number of solutions $|\SOL(\Z_\ge, F)|$ is indeed infinite). This is because if $\{z^t\}$ has a finite number of limit points, then since $\lim_{t \rightarrow \infty} \InNorms{z^{t+1} - z^t} = 0$ (\Cref{lemma: limit point induces Nash}), it must have a unique limit point (see a formal proof in \Cref{prop:limit points}).
In the following, to show that it is impossible that $\{z^t\}$ has infinitely many limit points, we first prove a lemma showing the structure of limit points of $\{z^t\}$. We illustrate this lemma in Figure \ref{fig:structure}.\footnote{We draw $z^\star$ in a simplex only as a simplified illustration---technically $z^\star$ should be from the Cartesian product of two simplices instead.}
\begin{wrapfigure}{r}{3.16cm}
    \contourlength{.2mm}
    \def\svgwidth{7.5cm}
    %% Creator: Inkscape 1.2.2 (b0a84865, 2022-12-01), www.inkscape.org
%% PDF/EPS/PS + LaTeX output extension by Johan Engelen, 2010
%% Accompanies image file 'structure_limit_points.pdf' (pdf, eps, ps)
%%
%% To include the image in your LaTeX document, write
%%   \input{<filename>.pdf_tex}
%%  instead of
%%   \includegraphics{<filename>.pdf}
%% To scale the image, write
%%   \def\svgwidth{<desired width>}
%%   \input{<filename>.pdf_tex}
%%  instead of
%%   \includegraphics[width=<desired width>]{<filename>.pdf}
%%
%% Images with a different path to the parent latex file can
%% be accessed with the `import' package (which may need to be
%% installed) using
%%   \usepackage{import}
%% in the preamble, and then including the image with
%%   \import{<path to file>}{<filename>.pdf_tex}
%% Alternatively, one can specify
%%   \graphicspath{{<path to file>/}}
%% 
%% For more information, please see info/svg-inkscape on CTAN:
%%   http://tug.ctan.org/tex-archive/info/svg-inkscape
%%
\begingroup%
  \makeatletter%
  \providecommand\color[2][]{%
    \errmessage{(Inkscape) Color is used for the text in Inkscape, but the package 'color.sty' is not loaded}%
    \renewcommand\color[2][]{}%
  }%
  \providecommand\transparent[1]{%
    \errmessage{(Inkscape) Transparency is used (non-zero) for the text in Inkscape, but the package 'transparent.sty' is not loaded}%
    \renewcommand\transparent[1]{}%
  }%
  \providecommand\rotatebox[2]{#2}%
  \newcommand*\fsize{\dimexpr\f@size pt\relax}%
  \newcommand*\lineheight[1]{\fontsize{\fsize}{#1\fsize}\selectfont}%
  \ifx\svgwidth\undefined%
    \setlength{\unitlength}{242.19174184bp}%
    \ifx\svgscale\undefined%
      \relax%
    \else%
      \setlength{\unitlength}{\unitlength * \real{\svgscale}}%
    \fi%
  \else%
    \setlength{\unitlength}{\svgwidth}%
  \fi%
  \global\let\svgwidth\undefined%
  \global\let\svgscale\undefined%
  \makeatother%
  \begin{picture}(1,0.41106181)%
    \lineheight{1}%
    \setlength\tabcolsep{0pt}%
    \put(0,0){\includegraphics[width=\unitlength,page=1]{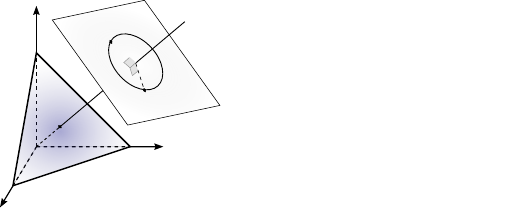}}%
    \put(0.08928147,0.15395024){\color[rgb]{0,0,0}\makebox(0,0)[lt]{\lineheight{1.25}\smash{\begin{tabular}[t]{l}{\small$z^\star$}\end{tabular}}}}%
    \put(0.27818001,0.2731791){\color[rgb]{0,0,0}\makebox(0,0)[lt]{\lineheight{1.25}\smash{\begin{tabular}[t]{l}\contour{white}{\small${az^\star}$}\end{tabular}}}}%
    \put(0.27704962,0.19539371){\color[rgb]{0,0,0}\makebox(0,0)[lt]{\lineheight{1.25}\smash{\begin{tabular}[t]{l}\contour{white}{\small$\hat z$}\end{tabular}}}}%
    \put(0.18255249,0.3353){\color[rgb]{0,0,0}\makebox(0,0)[lt]{\lineheight{1.25}\smash{\begin{tabular}[t]{l}\contour{white}{\small$\tilde z$}\end{tabular}}}}%
    \put(0,0){\includegraphics[width=\unitlength,page=2]{structure_limit_points.pdf}}%
  \end{picture}%
\endgroup%

    \caption{Pictorial illustration of \Cref{lemma:Structure of Limit Points}.}\label{fig:structure}
    \vspace{-1cm}
\end{wrapfigure}

\begin{lemma}[Structure of Limit Points]
    \label{lemma:Structure of Limit Points}
    Let $\{z^t\}$ be the iterates produced by \exrmp{} and $z^\star \in \Delta^{d_1} \times  \Delta^{d_2}$ be any Nash equilibrium of $A$. If $\hz$ and $\tz$ are two limit points of $\{z^t\}$, then the following holds.
    \begin{itemize}[nosep]
        \item[1.] $\InNorms{a z^\star -\hz}^2 = \InNorms{a z^\star - \tz}^2$ for all $a \ge 1$.
        \item[2.] $\InNorms{\hz}^2 = \InNorms{\tz}^2$.
        \item[3.] $\InAngles{z^\star, \hz - \tz} = 0$.
    \end{itemize}
\end{lemma}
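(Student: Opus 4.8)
The plan is to exploit the fact, supplied by \Cref{Fact: MVI}, that the Minty-type inequality holds not just at a Nash equilibrium $z^\star$ but along the entire ray $\{a z^\star : a \ge 1\}$. Fix any Nash equilibrium $z^\star \in \Delta^{d_1}\times\Delta^{d_2}$ and any $a \ge 1$. The point $a z^\star$ lies in $\Z_{\ge}$ (each of its two blocks has $\ell_1$-norm equal to $a \ge 1$) and, by \Cref{Fact: MVI}, satisfies $\InAngles{F(z), z - a z^\star} \ge 0$ for all $z \in \Z_{\ge}$. Hence $a z^\star$ is an admissible choice of the point ``$z^\star$'' in \Cref{lemma: distance to NE is decreasing}, which yields $\InNorms{z^{t+1} - a z^\star}^2 \le \InNorms{z^t - a z^\star}^2 - (1 - \eta^2 L_F^2)\InNorms{z^{t+\half} - z^t}^2 \le \InNorms{z^t - a z^\star}^2$ for all $t \ge 0$, using $\eta \in (0, 1/L_F)$ so that $1 - \eta^2 L_F^2 > 0$. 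Consequently, for every $a \ge 1$ the scalar sequence $\{\InNorms{z^t - a z^\star}^2\}_t$ is non-increasing and bounded below, hence converges to some limit $L(a) \ge 0$.

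Next I would use the limit points. Since $\hz$ and $\tz$ are limit points of $\{z^t\}$, there are subsequences $z^{t_k} \to \hz$ and $z^{s_j} \to \tz$. Passing to the limit along these subsequences and using continuity of $u \mapsto \InNorms{u - a z^\star}^2$, together with the fact that a subsequence of a convergent sequence has the same limit, gives $\InNorms{\hz - a z^\star}^2 = L(a) = \InNorms{\tz - a z^\star}^2$ for every $a \ge 1$; this is exactly item~1. For items~2 and~3, expand the squared norms in item~1:
\[
\InNorms{\hz}^2 - 2a\InAngles{\hz, z^\star} + a^2\InNorms{z^\star}^2 = \InNorms{\tz}^2 - 2a\InAngles{\tz, z^\star} + a^2\InNorms{z^\star}^2,
\]
so that $(\InNorms{\hz}^2 - \InNorms{\tz}^2) - 2a\,\InAngles{z^\star, \hz - \tz} = 0$ for all $a \ge 1$. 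An affine function of $a$ that vanishes on an interval vanishes identically, so both its constant term and its slope are zero: the constant term gives $\InNorms{\hz}^2 = \InNorms{\tz}^2$ (item~2), and the slope gives $\InAngles{z^\star, \hz - \tz} = 0$ (item~3).

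I do not expect a genuine obstacle here. The whole argument is driven by the single structural observation that \Cref{Fact: MVI} upgrades the Minty condition to a one-parameter family of monotone potentials $t \mapsto \InNorms{z^t - a z^\star}^2$, one for each $a \ge 1$; after that, items~2 and~3 drop out purely by matching coefficients of an affine-in-$a$ identity. The only points needing (minor) care are verifying that $a z^\star \in \Z_{\ge}$ so that \Cref{lemma: distance to NE is decreasing} applies, and noting that the step-size restriction $\eta \in (0, 1/L_F)$ makes the correction term $(1-\eta^2 L_F^2)\InNorms{z^{t+\half}-z^t}^2$ nonnegative so the potentials are indeed non-increasing; both are immediate.
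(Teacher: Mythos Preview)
Your proposal is correct and follows essentially the same approach as the paper: use \Cref{Fact: MVI} to apply \Cref{lemma: distance to NE is decreasing} with the point $a z^\star$ for every $a \ge 1$, deduce that $\{\InNorms{z^t - a z^\star}^2\}$ is non-increasing, conclude item~1 from the equality of all subsequential limits, and obtain items~2 and~3 by matching coefficients in the resulting affine-in-$a$ identity. Your phrasing via ``the monotone bounded sequence converges, so all subsequential limits coincide'' is a slightly cleaner packaging than the paper's explicit index-interleaving argument, but the content is the same.
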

\vspace{-3mm}
\begin{proof}
    Let $\{k_i \in \mathbb{Z}\}$ be an increasing sequence of indices such that $\{z^{k_i}\}$ converges to $\hz$ and $\{l_i \in \mathbb{Z}\}$ be an increasing sequence of indices such that $\{z^{l_i}\}$ converges to $\tz$. Let $\sigma : \{k_i\} \rightarrow \{l_i\}$ be a mapping that always maps $k_i$ to a larger index in $\{l_i\}$, i.e. $\sigma(k_i) > k_i$. Such a mapping clearly exists. Since \Cref{lemma: distance to NE is decreasing} applies to $a z^\star$ for any $a \ge 1$, we get 
    \[
    \InNorms{a z^\star -\hz}^2 = \lim_{i \rightarrow \infty} \InNorms{a z^\star - z^{k_i}}^2 \ge \lim_{i \rightarrow \infty} \InNorms{a z^\star - z^{\sigma(k_i)}}^2 = \InNorms{a z^\star - \tz}^2. 
    \]
    By symmetry, the other direction also holds. Thus $\InNorms{a z^\star -\hz}^2 = \InNorms{a z^\star - \tz}^2$ for all $a \ge 1$. Expanding $\InNorms{a z^\star -\hz}^2 = \InNorms{a z^\star - \tz}^2$ gives $ \InNorms{\hz}^2 - \InNorms{\tz}^2  =  2 a \InAngles{z^\star, \hz - \tz}$ for all $a \ge 1$.
    It implies that $\InNorms{\hz}^2 = \InNorms{\tz}^2$ and $\InAngles{z^\star, \hz - \tz} = 0$. 
\end{proof}
\vspace{-3mm}
We are now ready to show that $\{z^t\}$ necessarily has a unique limit point.
\begin{lemma}[Unique limit point]
    \label{lemma:unique limit point}
    The sequence $\{z^t\}$ produced by \exrmp{} has a unique limit point.
\end{lemma}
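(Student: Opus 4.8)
The plan is to prove that the set $\mathcal{L}$ of limit points of $\{z^t\}$ is \emph{finite}; since $\lim_{t\to\infty}\InNorms{z^{t+1}-z^t}=0$ (\Cref{lemma: limit point induces Nash}), \Cref{prop:limit points} then immediately upgrades this to $|\mathcal{L}|=1$. By \Cref{lemma: distance to NE is decreasing} the sequence $\{z^t\}$ is bounded, so $\mathcal{L}\neq\emptyset$; fix any $\hz_0\in\mathcal{L}$. By \Cref{lemma: limit point induces Nash} we have $\hz_0\in\SOL(\Z_{\ge},F)$, so writing $\hz_0=(\hR_0\hx_0,\hQ_0\hy_0)$ with $(\hx_0,\hy_0)=g(\hz_0)$ and $\hR_0,\hQ_0\ge 1$, the pair $(\hx_0,\hy_0)$ is a Nash equilibrium of $A$; recall moreover that the set of Nash equilibria is a product $X^\star\times Y^\star$ of the optimal-strategy polytopes of the two players, so $\hx_0\in X^\star$, $\hy_0\in Y^\star$.

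The heart of the argument is to confine all of $\mathcal{L}$ to a single line through $\hz_0$. Let $W_X:=\Span(X^\star-X^\star)$, $W_Y:=\Span(Y^\star-Y^\star)$, and decompose $\hx_0=\hx_0^{\parallel}+\hx_0^{\perp}$, $\hy_0=\hy_0^{\parallel}+\hy_0^{\perp}$ orthogonally with respect to $W_X$ and $W_Y$. Since every element of $X^\star$ lies in the simplex, $W_X\perp\bm 1$, hence $\InAngles{\hx_0^{\perp},\bm 1}=\InAngles{\hx_0,\bm 1}=1$ and in particular $\hx_0^{\perp}\neq\bm 0$; likewise $\hy_0^{\perp}\neq\bm 0$. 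Now take an arbitrary limit point $\tz=(\tR\tx,\tQ\ty)$ with $(\tx,\ty)=g(\tz)\in X^\star\times Y^\star$. Applying \Cref{lemma:Structure of Limit Points}(3) to $\hz_0,\tz$ with every Nash equilibrium $z^\star\in X^\star\times Y^\star$ and exploiting the product structure gives $\tR\tx-\hR_0\hx_0\in W_X^{\perp}$ and $\tQ\ty-\hQ_0\hy_0\in W_Y^{\perp}$; combined with $\tx-\hx_0\in W_X$ (both lie in $X^\star$), a one-line computation yields $\tR\tx=\tR\hx_0^{\perp}+\hR_0\hx_0^{\parallel}$ and, symmetrically, $\tQ\ty=\tQ\hy_0^{\perp}+\hQ_0\hy_0^{\parallel}$, while the remaining scalar part of the orthogonality condition reads $(\tR-\hR_0)\InNorms{\hx_0^{\perp}}^2+(\tQ-\hQ_0)\InNorms{\hy_0^{\perp}}^2=0$. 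Hence $\tQ$ is an affine function of $\tR$, and therefore $\tz$ is an affine function of the single parameter $\tR$: every limit point lies on the line $\ell$ through $\hz_0$ with direction $\big(\hx_0^{\perp},\,-\tfrac{\InNorms{\hx_0^{\perp}}^2}{\InNorms{\hy_0^{\perp}}^2}\,\hy_0^{\perp}\big)\neq\bm 0$.

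To conclude, invoke \Cref{lemma:Structure of Limit Points}(2): all points of $\mathcal{L}$ have the same Euclidean norm as $\hz_0$, so $\mathcal{L}\subseteq\ell\cap\{z:\InNorms{z}=\InNorms{\hz_0}\}$, the intersection of a line with a sphere, which contains at most two points. Thus $\mathcal{L}$ is finite, and \Cref{prop:limit points} (together with $\lim_{t\to\infty}\InNorms{z^{t+1}-z^t}=0$) gives $|\mathcal{L}|=1$.

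The main obstacle is the computation in the middle paragraph: one has to recognize that membership in $\SOL(\Z_{\ge},F)$ pins down the ``strategy part'' $(\tx,\ty)$ of a limit point once its ``scale part'' $(\tR,\tQ)$ is fixed, and that \Cref{lemma:Structure of Limit Points}(3) then collapses the two scales $(\tR,\tQ)$ to a single affine degree of freedom — so that the (infinite, high-dimensional) solution set $\SOL(\Z_{\ge},F)$ meets $\mathcal{L}$ in the affine image of a line. (If $\hR_0=\hQ_0$, i.e.\ $\hz_0$ is colinear with a Nash equilibrium, then $\hz_0$ is already the point of $\ell$ nearest the origin, the sphere is tangent to $\ell$ there, and $\mathcal{L}=\{\hz_0\}$ directly, recovering \Cref{proposition: colinear limit point converges}; the genuinely new difficulty is the case $\hR_0\neq\hQ_0$, where the line meets the sphere in two points and one must additionally appeal to \Cref{prop:limit points} to discard the spurious second point.)
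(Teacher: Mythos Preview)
Your proof is correct and takes a genuinely different route from the paper's. The paper argues pairwise: for any two distinct limit points $\hz=(\hR\hx,\hQ\hy)$ and $\tz=(\tR\tx,\tQ\ty)$ it derives the single scalar relation $\hR+\tR=\hQ+\tQ$ via a two-case analysis (same underlying NE vs.\ different NE, the latter using exchangeability and parts~2 and~3 of \Cref{lemma:Structure of Limit Points} together), and then shows combinatorially that this relation is inconsistent with a third distinct limit point unless $\hR=\hQ$. Your argument is instead global and geometric: you exploit part~3 of \Cref{lemma:Structure of Limit Points} for \emph{all} NE simultaneously and the product structure $X^\star\times Y^\star$ to pin every limit point to the affine line $\tR\mapsto(\hR_0\hx_0^{\parallel}+\tR\,\hx_0^{\perp},\ \hQ_0\hy_0^{\parallel}+\tQ(\tR)\,\hy_0^{\perp})$, and only then invoke part~2 (equal norms) to cut this line down to at most two points via the sphere intersection. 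Both proofs finish identically by appealing to \Cref{prop:limit points}. Your approach buys a cleaner, case-free picture and an explicit one-parameter description of where the limit points can sit (and, as you note, it recovers \Cref{proposition: colinear limit point converges} as the tangency case $\hR_0=\hQ_0$); the paper's approach is more elementary in that it uses only inner products and never introduces the auxiliary subspaces $W_X,W_Y$ or orthogonal decompositions.
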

\vspace{-5mm}
\begin{proof}
    For the sake of contradiction, let $\hz = (\hR \hx, \hQ \hy)$ and $\tz = (\tR \tx, \tQ \ty)$ be any two distinct limit points of $\{z^t\}$.  We first prove a key equality $ \hR + \tR = \hQ + \tQ$ by considering two cases. 

    % \begin{claim}\label{claim:key equality}
      
    % \end{claim}
    
    \textbullet\ \emph{Case 1: $\hz = (\hR x^\star, \hQ y^\star)$ and $\tz = (\tR x^\star, \tQ y^\star)$ for a Nash equilibrium $z^\star = (x^\star, y^\star)$.} Part 2 of \Cref{lemma:Structure of Limit Points} gives
    $
        \InNorms{\hz}^2 = \InNorms{\tz}^2 \Rightarrow (\hR^2 - \tR^2) \InNorms{x^\star}^2 = (\tQ^2 - \hQ^2) \InNorms{y^\star}^2.
    $
    Part 3 of \Cref{lemma:Structure of Limit Points} gives 
    $
        \InAngles{z^\star, \hz  - \tz } = 0 \Rightarrow (\hR - \tR) \InNorms{x^\star}^2 = (\tQ - \hQ) \InNorms{y^\star}^2.
    $
    Combining the above two equalities with the fact that $\hz \ne \tz$, we get $\hR + \tR = \hQ + \tQ$.
    
    \textbullet\ \emph{Case 2: $\hz = (\hR \hx, \hQ \hy)$ and $\tz = (\tR \tx, \tQ \ty)$ for different Nash equilibria $(\hx, \hy)$ and $(\tx, \ty)$.} Since both $(\hx, \hy)$ and $(\tx, \ty)$ are Nash equilibrium (\Cref{lemma: limit point induces Nash}), we have $(\hx, \ty)$ is also a Nash equilibrium by exchangeability of Nash equilibria in zero-sum games. By \Cref{lemma:Structure of Limit Points}, we have the following equalities
    \begin{itemize}[nosep]
        \item[1. ] $\InNorms{\hz}^2 = \InNorms{\tz}^2$ $\Rightarrow$  $\hR^2 \InNorms{\hx}^2 - \tR^2 \InNorms{\tx}^2 = \tQ^2 \InNorms{\ty}^2 - \hQ^2 \InNorms{\hy}^2$
        \item[2. ] $0 = \InAngles{(\hx, \hy), \hz - \tz} = \hR \InNorms{\hx}^2 - \tR \InAngles{\hx,\tx} + \hQ \InNorms{\hy}^2 - \tQ \InAngles{\hy, \ty}$
        \item[3. ] $0 = \InAngles{(\tx, \ty), \hz - \tz} = \hR \InAngles{\hx, \tx} - \tR \InNorms{\tx}^2 + \hQ\InAngles{\hy, \ty} - \tQ \InNorms{\ty}^2 $
        \item[4. ] $0 = \InAngles{(\hx, \ty), \hz - \tz} = \hR \InNorms{\hx}^2 - \tR \InAngles{\hx,\tx} + \hQ\InAngles{\hy, \ty} - \tQ \InNorms{\ty}^2 $
    \end{itemize}
    Combing the last three equalities gives
    \[
        c := \hR \InNorms{\hx}^2 - \tR \InAngles{\hx,\tx} = \hR \InAngles{\hx, \tx} - \tR \InNorms{\tx}^2,
    %\]
    \quad \text{and} \quad 
    % \[
        -c = \hQ \InNorms{\hy}^2 - \tQ \InAngles{\hy, \ty} = \hQ\InAngles{\hy, \ty} - \tQ \InNorms{\ty}^2.
    \]
    Further combining the first equality gives \allowdisplaybreaks
    \begin{align*}
        &( \hR + \tR) c + (\hQ + \tQ)(-c) \\
        &= \hR (\hR \InNorms{\hx}^2 - \tR \InAngles{\hx,\tx}) + \tR (\hR \InAngles{\hx, \tx} - \tR \InNorms{\tx}^2) + \hQ ( \hQ \InNorms{\hy}^2 - \tQ \InAngles{\hy, \ty}) + \tQ ( \hQ\InAngles{\hy, \ty} - \tQ \InNorms{\ty}^2) \\
        &= \hR^2 \InNorms{\hx}^2 - \tR^2 \InNorms{\tx}^2 - \tQ^2 \InNorms{\ty}^2 + \hQ^2 \InNorms{\hy}^2 = 0. 
    \end{align*}
    If $c \ne 0$, then we get 
    $
        \hR + \tR = \hQ + \tQ.
    $ 
    If $c = 0$, then $\hR\tR\InNorms{\hx}^2 \InNorms{\tx}^2 = \hR \tR \InAngles{\hx, \tx}^2$ and $\hQ\tQ\InNorms{\hy}^2 \InNorms{\ty}^2 = \hQ \tQ \InAngles{\hy, \ty}^2$ gives the equality condition of Cauchy-Schwarz inequality and we get $\hx = \tx$ and $\hy = \ty$. This reduced to Case 1 where we know $\hR + \tR = \hQ + \tQ$ holds. This completes the proof of $\hR + \tR = \hQ + \tQ$.

    Note that $\hR + \tR = \hQ + \tQ$ must hold for any two distinct limit points $\hz$ and $\tz$ of $\{z^t\}$. If $\hR = \hQ$, then $\hz = \hR (\hx, \hy)$ is colinear with a Nash equilibrium $(\hx, \hy)$ and by \Cref{proposition: colinear limit point converges}$, \{z^t\}$ converges to $\hz$ and $\hz$ is the unique limit point. If $\hR \ne \hQ$, we argue that $\hz$ and $\tz$ are the only two possible limit points. To see why, let us suppose that there exists another limit point $z = (Rx, Qy)$ distinct from $\hz$ and $\tz$. If $R = \hR$, then by $R + \tR = Q + \tQ$, we know $Q = \hQ$. However, $R + \hR = 2\hR \ne 2\hQ = Q +\hQ$ gives a contradiction to the equality $\hR + \tR = \hQ + \tQ$ we just proved. If $R \ne \hR$,  then  combing
    \[
        \hR + R = \hQ + Q, \quad \tR + R = \tQ + Q, \quad \hR + \tR = \hQ + \tQ
    \] gives $
        \hR - \hQ = Q - R =  \tR - \tQ = \hQ - \hR.
    $
    Thus, we have $\hR = \hQ$, and it contradicts the assumption that $\hR \ne \hQ$. Now we conclude that $\{z^t\}$ has at most two distinct limit points which means $\{z^t\}$ has a unique limit point given the fact that $\{z^t\}$ is bounded and $\lim_{t \rightarrow \infty} \InNorms{z^{t+1} - z^t} = 0$.
\end{proof}

% {\bf Proof Sketch.} Since $\lim_{t \rightarrow \infty} \InNorms{z^{t+1} - z^t} = 0$, as discussed above it suffices to prove that $\{z^t\}$ has finitely many limit points. Let $\hz = (\hR \hx, \hQ \hy)$ and $\tz = (\tR \tx, \tQ \ty)$ be any two distinct limit points of $\{z^t\}$ such that $\hR \ne \hQ$ and $\tR \ne \tQ$ (otherwise we can apply \Cref{proposition: colinear limit point converges} to prove convergence). By careful case analysis, the structure of limit points (\Cref{lemma:Structure of Limit Points}), and properties of Nash equilibrium, we prove a key equality: $ \hR + \tR = \hQ + \tQ$. Now $\hz$ and $\tz$ must be the only two limit points: suppose there exists another limit point $z = (Rx, Qy)$ with $R \ne Q$, then at least one of $\hR + R = \hQ + Q$ and $\tR + R = \tQ + Q$ would be violated and lead to a contradiction. Thus $\{z^t\}$ has at most two distinct limit points, which, again, when further combined with the fact $\lim_{t \rightarrow \infty} \InNorms{z^{t+1} - z^t} = 0$ implies that it in fact has a unique limit point (\Cref{prop:limit points}).

Our main result now follows from \Cref{lemma: limit point induces Nash} and \Cref{lemma:unique limit point}. \tb{To our knowledge, this is the first last-iterate convergence result for a variational inequality problem satisfying {\em only} the Minty condition.}
\begin{theorem}[Last-Iterate Convergence of \exrmp]
    Let $\{z^t\}$ be the sequence produced by \exrmp{}, then $\{z^t\}$ converges to $z^\star \in \Z_{\ge}$ with $g(z^\star) = (x^\star,y^\star) \in \Delta^{d_1} \times \Delta^{d_2}$ being a Nash equilibrium.
\end{theorem}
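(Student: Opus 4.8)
The plan is to assemble the theorem directly from the structural results already established, adding only one elementary topological fact. First I would recall that \Cref{Fact: MVI} guarantees the existence of a point $z^\star \in \SOL(\Z_\ge, F)$ satisfying the Minty inequality $\InAngles{F(z), z - z^\star} \ge 0$ for all $z \in \Z_\ge$ (take $a = 1$). Applying \Cref{lemma: distance to NE is decreasing} with this $z^\star$ shows that $\InNorms{z^t - z^\star}^2$ is nonincreasing in $t$, so $\{z^t\}$ is bounded; since closed bounded subsets of $\R^{d_1+d_2}$ are compact, $\{z^t\}$ has at least one limit point $\hz \in \Z_\ge$.

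Next I would invoke \Cref{lemma:unique limit point}, which establishes that $\{z^t\}$ has a \emph{unique} limit point; denote it $z^\star$ (consistent with the theorem statement). The remaining step is the standard observation that a bounded sequence in Euclidean space with a unique limit point converges to it. The plan is to argue by contradiction: if $\{z^t\}$ did not converge to $z^\star$, there would be an $\varepsilon > 0$ and a subsequence $\{z^{t_j}\}$ with $\InNorms{z^{t_j} - z^\star} \ge \varepsilon$ for all $j$; being bounded, $\{z^{t_j}\}$ would itself admit a convergent sub-subsequence, whose limit is a limit point of $\{z^t\}$ lying at distance at least $\varepsilon$ from $z^\star$, contradicting uniqueness. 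Hence $z^t \to z^\star$.

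Finally, Part 2 of \Cref{lemma: limit point induces Nash} gives $z^\star \in \SOL(\Z_\ge, F)$, and as noted after the definition of $\VI(\Z_\ge, F)$ in \Cref{sec:exrmp-convergence}, the $\ell_1$-normalization $g(z^\star) = (x^\star, y^\star)$ is then a Nash equilibrium of the matrix game $A$. This completes the argument.

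I do not expect a genuine obstacle at this stage: the real difficulty has been front-loaded into \Cref{lemma:unique limit point} and the geometric \Cref{lemma:Structure of Limit Points} it relies on. The only subtlety worth flagging is not to conflate convergence \emph{in duality gap} to the solution set (\Cref{lemma: limit point induces Nash}, Part 3), which by itself does not preclude the trajectory cycling around $\SOL(\Z_\ge, F)$, with the stronger convergence \emph{in iterates} proved here—ruling out cycling is exactly what the uniqueness of the limit point, combined with $\lim_{t\to\infty}\InNorms{z^{t+1}-z^t}=0$, accomplishes.
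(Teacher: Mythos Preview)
Your proposal is correct and follows essentially the same approach as the paper, which simply states that the theorem follows from \Cref{lemma: limit point induces Nash} and \Cref{lemma:unique limit point}. You have helpfully spelled out the elementary topological step (bounded sequence with a unique limit point converges) that the paper leaves implicit; note that once uniqueness is in hand, boundedness alone suffices for convergence, so the condition $\lim_{t\to\infty}\InNorms{z^{t+1}-z^t}=0$ you mention in the final paragraph is used inside the proof of \Cref{lemma:unique limit point} (via \Cref{prop:limit points}) rather than in this final assembly.
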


\subsection{Best-Iterate Convergence Rate of \exrmp}\label{sec:exrmp-best-iterate}
We now prove an $O(\frac{1}{\sqrt{T}})$ best-iterate convergence rate of \exrmp.
The following key lemma relates the duality gap of a pair of strategies $(x^{t+1},y^{t+1})$ and the distance $\InNorms{z^{t+\half}-z^t}$.

\begin{lemma}
    \label{lemma:EXRM+ gap upperbounded by distance}
    Let $\{z^t\}$ be iterates produced by \exrmp{} and $(x^{t+1}, y^{t+1}) = g(z^{t+1})$.
    Then $\DualGap(x^{t+1}, y^{t+1}) \le \frac{12 \InNorms{z^{t+\half} - z^t}}{\eta}.$
    % \[\DualGap(x^{t+1}, y^{t+1}) \le \frac{12 \InNorms{z^{t+\half} - z^t}}{\eta}.\]
\end{lemma}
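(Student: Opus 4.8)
The plan is to rewrite the duality gap at $(x^{t+1},y^{t+1}):=g(z^{t+1})$ as a linear functional of $F$ and then use the projection step defining $z^{t+1}$ to bound it by $\InNorms{z^{t+\half}-z^t}$. The bookkeeping identity I would start from is elementary: expanding $F$ and using that $\bm{1}_{d_i}$ is orthogonal to the difference of two simplex points shows that for every $w=(w_x,w_y)\in\cZ$ one has $\InAngles{F(z^{t+1}),g(z^{t+1})-w}=(x^{t+1})^\top A w_y-w_x^\top A y^{t+1}$, and since the maximum of the right-hand side over $w\in\cZ$ decouples into $\max_{w_y}(x^{t+1})^\top A w_y-\min_{w_x}w_x^\top A y^{t+1}$, it equals $\DualGap(x^{t+1},y^{t+1})$. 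So it suffices to bound $\InAngles{F(z^{t+1}),g(z^{t+1})-w}$ uniformly over $w\in\cZ$.

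Since the algorithm never evaluates $F$ at $z^{t+1}$, I would first swap it for $F(z^{t+\half})$, paying $\InAngles{F(z^{t+1})-F(z^{t+\half}),g(z^{t+1})-w}\le L_F\InNorms{z^{t+1}-z^{t+\half}}\cdot\InNorms{g(z^{t+1})-w}$. Here $\InNorms{g(z^{t+1})-w}\le 2$ since both points lie in $\cZ$, and, by nonexpansiveness of $\Pi_{\cZ_{\geq}}$ and $L_F$-Lipschitzness of $F$ on $\cZ_{\geq}$ (all iterates lie in $\cZ_{\geq}$, by induction), $\InNorms{z^{t+1}-z^{t+\half}}=\InNorms{\Pi_{\cZ_{\geq}}(z^t-\eta F(z^{t+\half}))-\Pi_{\cZ_{\geq}}(z^t-\eta F(z^t))}\le\eta L_F\InNorms{z^{t+\half}-z^t}$. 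Combined with $\eta L_F\le 1$ this term contributes at most $\frac{2}{\eta}\InNorms{z^{t+\half}-z^t}$.

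The crucial step — and the one I expect to be the main obstacle, because it is exactly where the mismatch between the unbounded feasible set $\cZ_{\geq}$ (where $z^{t+1}$ lives) and the normalized iterate $g(z^{t+1})\in\cZ$ (which the duality gap sees) must be reconciled — is to extract from the projection inequality a statement evaluated at $g(z^{t+1})$ rather than at $z^{t+1}$. The trick is to apply the variational characterization of $z^{t+1}=\Pi_{\cZ_{\geq}}(z^t-\eta F(z^{t+\half}))$ at the shifted test point $u:=z^{t+1}+\big(w-g(z^{t+1})\big)$ instead of at $w$. Writing $z^{t+1}=(Rx^{t+1},Qy^{t+1})$ with $R,Q\ge 1$ (which holds precisely because $z^{t+1}\in\cZ_{\geq}$), one checks that $u=\big((R-1)x^{t+1}+w_x,(Q-1)y^{t+1}+w_y\big)$ is nonnegative with blocks of $\ell_1$-norm $R$ and $Q$, hence $u\in\cZ_{\geq}$, while $u-z^{t+1}=w-g(z^{t+1})$. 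Then $\InAngles{(z^t-\eta F(z^{t+\half}))-z^{t+1},u-z^{t+1}}\le 0$ rearranges to $\eta\InAngles{F(z^{t+\half}),g(z^{t+1})-w}\le\InAngles{z^t-z^{t+1},g(z^{t+1})-w}\le 2\InNorms{z^t-z^{t+1}}$, and a triangle inequality with the estimate above gives $\InNorms{z^t-z^{t+1}}\le\InNorms{z^{t+\half}-z^t}+\InNorms{z^{t+1}-z^{t+\half}}\le 2\InNorms{z^{t+\half}-z^t}$, so this term contributes at most $\frac{4}{\eta}\InNorms{z^{t+\half}-z^t}$.

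Adding the two contributions bounds $\DualGap(x^{t+1},y^{t+1})$ by $\frac{6}{\eta}\InNorms{z^{t+\half}-z^t}$, comfortably within the claimed $\frac{12}{\eta}\InNorms{z^{t+\half}-z^t}$. Notably the argument uses neither monotonicity nor the Minty property of $F$; only Lipschitzness of $F$ on $\cZ_{\geq}$, nonexpansiveness of the projection, and the duality-gap identity from the first step.
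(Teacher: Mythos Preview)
Your proof is correct and actually improves on the paper's, both in the constant (you obtain $6/\eta$ rather than $12/\eta$) and in the hypotheses used.

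The paper proceeds via an intermediate lemma (\Cref{lemma:small distance implies approximate NE}): it applies the projection variational inequality for $z^{t+1}=\Pi_{\cZ_{\ge}}(z^t-\eta F(z^{t+\half}))$ directly at the test point $\hat z\in\cZ$ that realizes the duality gap, producing a bound with a factor $\InNorms{z^{t+1}-\hat z}$. Since $z^{t+1}\in\cZ_{\ge}$ can be far from $\cZ$, this factor is then controlled by $\InNorms{z^{t+1}-z^\star}+2\le\InNorms{z^0-z^\star}+2\le 4$, where the first inequality invokes \Cref{lemma: distance to NE is decreasing} (and hence the Minty property of $F$). Your key innovation is to apply the projection inequality instead at the shifted point $u=z^{t+1}+(w-g(z^{t+1}))$, which you correctly verify lies in $\cZ_{\ge}$; this replaces the potentially large factor $\InNorms{z^{t+1}-\hat z}$ by $\InNorms{g(z^{t+1})-w}\le 2$, making the argument entirely self-contained---no Minty condition, no appeal to the monotone decrease of $\InNorms{z^t-z^\star}$, no reference to a Nash equilibrium. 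The price is that the paper's route yields a reusable lemma (also invoked for \sprmp), whereas yours is tailored to this specific statement; but as a proof of \Cref{lemma:EXRM+ gap upperbounded by distance} your argument is strictly sharper.
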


Now combining \Cref{lemma: distance to NE is decreasing} and \Cref{lemma:EXRM+ gap upperbounded by distance}, we conclude the following best-iterate convergence rate.
\begin{theorem}
    \label{theorem: ExRM+ best-iterate rate}
    Let $\{z^t\}$ be the sequence produced by \exrmp{} with initial point $z^0$. 
    Then for any Nash equilibrium $z^\star$, any $T \ge 1$, there exists $t \le T$ with $(x^t , y^t) = g(z^t)$ and, for $\eta = \frac{1}{\sqrt{2}L_F}$,
$\DualGap(x^t, y^t) \le \frac{24 L_F\InNorms{z^0 - z^\star}}{\sqrt{T}}.$
    % \begin{align*}
    %     \DualGap(x^t, y^t) \le \frac{12\InNorms{z^0 - z^\star}}{\eta \sqrt{1 -\eta^2 L_F^2}} \frac{1}{\sqrt{T}}. 
    % \end{align*}
    % If $\eta = \frac{1}{\sqrt{2}L_F}$, then $\DualGap(x^t, y^t) \le \frac{24 L_F\InNorms{z^0 - z^\star}}{\sqrt{T}}.$
\end{theorem}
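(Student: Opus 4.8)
The plan is to combine the two preceding lemmas --- \Cref{lemma: distance to NE is decreasing} and \Cref{lemma:EXRM+ gap upperbounded by distance} --- via a telescoping-plus-pigeonhole argument. First I would fix a Nash equilibrium $z^\star \in \Z$ (viewed as a point of $\Z_{\ge}$, so that \Cref{lemma: distance to NE is decreasing} applies) and note that with $\eta = \frac{1}{\sqrt{2}L_F}$ we have $\eta \in (0, \frac{1}{L_F})$, so \exrmp{} is well-defined, and moreover $1 - \eta^2 L_F^2 = \half$. Thus \Cref{lemma: distance to NE is decreasing} reads $\InNorms{z^{t+1}-z^\star}^2 \le \InNorms{z^t - z^\star}^2 - \half \InNorms{z^{t+\half}-z^t}^2$ for all $t \ge 0$.

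Summing this inequality over $t = 0,1,\dots,T-1$, the left-hand side telescopes, and using $\InNorms{z^T - z^\star}^2 \ge 0$ I obtain $\half \sum_{t=0}^{T-1} \InNorms{z^{t+\half}-z^t}^2 \le \InNorms{z^0 - z^\star}^2$, i.e. $\sum_{t=0}^{T-1}\InNorms{z^{t+\half}-z^t}^2 \le 2\InNorms{z^0-z^\star}^2$. By an averaging (pigeonhole) argument there is an index $s \in \{0,\dots,T-1\}$ with $\InNorms{z^{s+\half}-z^s}^2 \le \frac{2\InNorms{z^0-z^\star}^2}{T}$, and hence $\InNorms{z^{s+\half}-z^s} \le \frac{\sqrt{2}\,\InNorms{z^0-z^\star}}{\sqrt{T}}$.

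Finally, applying \Cref{lemma:EXRM+ gap upperbounded by distance} at iteration $s$ gives $\DualGap(x^{s+1},y^{s+1}) \le \frac{12}{\eta}\InNorms{z^{s+\half}-z^s} \le \frac{12}{\eta}\cdot\frac{\sqrt{2}\,\InNorms{z^0-z^\star}}{\sqrt{T}}$. Substituting $\frac{1}{\eta} = \sqrt{2}\,L_F$ collapses the constants, $\frac{12}{\eta}\cdot\sqrt{2} = 24 L_F$, yielding $\DualGap(x^{s+1},y^{s+1}) \le \frac{24 L_F \InNorms{z^0-z^\star}}{\sqrt{T}}$. Setting $t := s+1$, we have $t \le T$ and $(x^t,y^t) = g(z^t)$, which is exactly the claimed bound.

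I do not anticipate a real obstacle here: the theorem is a routine consequence of the two lemmas, which already do the substantive work (the descent estimate enabled by the Minty condition, and the bound on the duality gap by the extragradient step length). The only points requiring minor care are (i) the index bookkeeping, so that the output iterate $t = s+1$ still satisfies $t \le T$ (which is why the sum runs to $T-1$), and (ii) verifying that the prescribed step size $\eta = \frac{1}{\sqrt{2}L_F}$ lies in the admissible range and makes the coefficient $1 - \eta^2 L_F^2$ in \Cref{lemma: distance to NE is decreasing} equal to exactly $\half$, which is what produces the clean constant $24$ in the final bound.
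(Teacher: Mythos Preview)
Your proposal is correct and follows essentially the same approach as the paper's proof: telescope \Cref{lemma: distance to NE is decreasing} to bound $\sum_{t=0}^{T-1}\InNorms{z^{t+\half}-z^t}^2$, apply pigeonhole to find a good index, and finish with \Cref{lemma:EXRM+ gap upperbounded by distance}. Your handling of the constants and the index shift $t=s+1\le T$ is slightly more explicit than the paper's version, but the argument is identical.
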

\vspace{-6mm}
\begin{proof}
    Fix a NE $z^\star$. From \Cref{lemma: distance to NE is decreasing}, $
        \sum_{t=0}^{T-1} \InNorms{z^{t+\half} - z^t}^2 \le \frac{\InNorms{z^0 - z^\star}^2}{1- \eta^2 L_F^2}.$
    This implies that $\exists \; t \leq T-1$ such that $\InNorms{z^{t+\half}- z^t} \le \frac{\InNorms{z^0 -z^\star}}{\sqrt{T} \sqrt{1- \eta^2 L_F^2}}$. We then get the result by applying \Cref{lemma:EXRM+ gap upperbounded by distance}.
\end{proof}

\subsection{Linear Last-Iterate Convergence for \exrmp{} with Restarts}\label{sec:exrmp-linear-conv}

In this section, based on the best-iterate convergence result from the last section, we further provide a simple restarting mechanism under which \exrmp{} enjoys linear last-iterate convergence.
To show this, we recall that 
zero-sum matrix games satisfy the following \emph{metric subregularity} condition.
\begin{proposition}[Metric Subregularity~\citep{wei2021linear}]\label{prop:Metric Subregularity}
    Let $A \in \R^{d_1 \times d_2}$ be a matrix game. There exists a constant $c >0$ (only depending on $A$) such that for any $z = (x, y) \in \Delta^{d_1} \times \Delta^{d_2}$, $
        \DualGap(x, y) \ge c \InNorms{z - \Pi_{\Z^\star}[z]}$ where $\Z^\star$ denotes the set of all Nash equilibria.
\end{proposition}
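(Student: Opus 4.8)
The plan is to derive the inequality from a classical Hoffman-type linear error bound for polyhedra. Let $v = \min_{x\in\Delta^{d_1}}\max_{y\in\Delta^{d_2}} x^{\top}Ay$ be the value of the game. First I would record the polyhedral description of the equilibrium set obtained from the minimax theorem (equivalently LP duality):
\[
\Z^\star = \left\{\, (x,y)\in\Delta^{d_1}\times\Delta^{d_2} \;:\; A^{\top}x \le v\,\bm{1}_{d_2},\; Ay \ge v\,\bm{1}_{d_1} \,\right\}.
\]
Indeed, for any $(x,y)\in\cZ$ one always has $\max_j (A^{\top}x)_j \ge v \ge \min_i (Ay)_i$, and $(x,y)$ is a saddle point precisely when both of these are equalities. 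In particular $\Z^\star\neq\emptyset$, and for every $(x,y)\in\cZ$,
\[
\DualGap(x,y) = \big(\max_j (A^{\top}x)_j - v\big) + \big(v - \min_i (Ay)_i\big),
\]
where both bracketed terms are nonnegative. Consequently $\DualGap(x,y)$ upper bounds the largest violation of the two payoff constraints, i.e. $\|[A^{\top}x - v\bm{1}_{d_2}]^+\|_\infty \le \DualGap(x,y)$ and $\|[v\bm{1}_{d_1} - Ay]^+\|_\infty \le \DualGap(x,y)$, while the simplex constraints $x\in\Delta^{d_1}$ and $y\in\Delta^{d_2}$ are satisfied \emph{exactly} by assumption.

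Next I would invoke Hoffman's lemma for the (consistent) linear system defining $\Z^\star$: there is a constant $\theta>0$ depending only on the constraint matrix — hence only on $A$ — such that $\|z - \Pi_{\Z^\star}[z]\| \le \theta\cdot\|(\text{vector of constraint violations at } z)\|$ for every $z\in\R^{d_1+d_2}$. Specializing to $z=(x,y)\in\cZ$, the violations of the simplex (in)equalities vanish, so the violation vector consists only of $[A^{\top}x - v\bm{1}_{d_2}]^+$ and $[v\bm{1}_{d_1}-Ay]^+$; bounding its Euclidean norm by $\sqrt{d_1+d_2}$ times its $\ell_\infty$ norm and applying the previous paragraph gives $\|z-\Pi_{\Z^\star}[z]\| \le \theta\sqrt{d_1+d_2}\,\DualGap(x,y)$, which is the claim with $c = (\theta\sqrt{d_1+d_2})^{-1}$.

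The step I expect to require the most care is controlling the Hoffman constant: one must verify it is genuinely a function of $A$ alone. This holds because Hoffman's constant for a consistent system $\{u : Bu\le b\}$ depends only on $B$, and here $B$ is built from $A$ and the (fixed) simplex constraints; although the right-hand side involves the value $v$, $v$ is itself determined by $A$, so no dependence on $z$ enters. A secondary subtlety is the exact feasibility of the simplex constraints for $z=(x,y)\in\Delta^{d_1}\times\Delta^{d_2}$, which is exactly what lets the error bound collapse onto the single scalar $\DualGap(x,y)$; were $z$ allowed to leave the simplices, an extra term measuring simplex-infeasibility would appear. The remaining ingredients — the minimax characterization of $\Z^\star$, the additive decomposition of the duality gap, and the elementary norm comparison — are routine. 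This recovers the metric subregularity statement of \citet{wei2021linear}.
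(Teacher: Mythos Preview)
The paper does not prove this proposition at all: it is stated as a known fact and attributed to \citet{wei2021linear}, so there is no ``paper's own proof'' to compare against. Your argument via Hoffman's lemma is correct and is essentially the standard route to error bounds of this type for bilinear saddle-point problems (and is in the spirit of how such results are derived in the cited reference). The polyhedral characterization of $\Z^\star$, the additive decomposition of the duality gap into the two nonnegative one-sided suboptimalities, and the observation that the simplex constraints are satisfied exactly are all sound; the only place to be slightly more explicit is that Hoffman's constant depends on the full constraint matrix (which includes the simplex constraints, not just $A$), but since those constraints are fixed this is still a function of $A$ alone, exactly as you note.
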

Together with \Cref{theorem: ExRM+ best-iterate rate} (with $\eta = \frac{1}{\sqrt{2} L_F}$), we obtain that for any $T \ge 1$, there exists $1 \le t \le T$ such that $\InNorms{(x^t, y^t) - \Pi_{\Z^\star}[(x^t, y^t)]}  \le \frac{\DualGap(x^t, y^t)}{c} \le \frac{24 L_F}{c \sqrt{T}} \cdot \InNorms{z^0 - \Pi_{\Z^\star}[z^0]}.$
% \begin{align*}
%      \InNorms{(x^t, y^t) - \Pi_{\Z^\star}[(x^t, y^t)]}  \le \frac{\DualGap(x^t, y^t)}{c} \le \frac{24 L_F}{c \sqrt{T}} \cdot \InNorms{z^0 - \Pi_{\Z^\star}[z^0]}.
% \end{align*}
% \begin{align*}
%      \InNorms{(x^t, y^t) - \Pi_{\Z^\star}[(x^t, y^t)]} & \le \frac{\DualGap(x^t, y^t)}{c} \\
%      &\le \frac{24 L_F}{c \sqrt{T}} \cdot \InNorms{z^0 - \Pi_{\Z^\star}[z^0]}.
% \end{align*}
This inequality further implies that if $T \ge \frac{48^2 L_F^2}{c^2}$, then there exists $1 \le t \le T$ such that
\begin{align*}
    \InNorms{(x^t, y^t) - \Pi_{\Z^\star}[(x^t, y^t)]} \le \frac{1}{2} \InNorms{z^0 - \Pi_{\Z^\star}[z^0]}.
\end{align*}
Therefore, after at most a constant number of iterations (smaller than $\frac{48^2 L_F^2}{c^2}$),
the distance of the best-iterate $(x^t, y^t)$ to the equilibrium set $\Z^\star$ is halved compared to that of the initial point.
If we could somehow identify this best iterate, then we just need to restart the algorithm with this best iterate as the next initial strategy.
Repeating this would then lead to a linear last-iterate convergence.
The issue in this argument above is that we cannot exactly identify the best iterate since $\InNorms{(x^t, y^t) - \Pi_{\Z^\star}[(x^t, y^t)]}$ is unknown.
However, we use $\InNorms{z^{t+\half}-z^t}$ as a proxy, since
\begin{align*}
    \InNorms{(x^t, y^t) - \Pi_{\Z^\star}[(x^t, y^t)]} \le  \frac{1}{c}\DualGap(x^t, y^t) 
    \le \frac{12\InNorms{z^{t+\half}-z^t}}{c\eta}
\end{align*}
by \Cref{lemma:EXRM+ gap upperbounded by distance}.
This motivates the design of Restarting ExRM+ (RS-\exrmp{}, Algorithm \ref{RS-ExRM}), which restarts for the $k$-th time if $\InNorms{z^{t+\half} - z^t}$ is less than $O(\frac{1}{2^k})$.  
Importantly, RS-\exrmp{} does not require knowing the value of $c$, the constant in the metric subregularity condition, which can be hard to compute.
\begin{figure}[htb]
\vspace{-4mm}
\begin{minipage}[b]{.5\textwidth}
\begin{algorithm}[H]
      \caption{Restarting \exrmp{} (RS-\exrmp)}
      \label{RS-ExRM}
      \begin{algorithmic}[1]
     \STATE {\bf Input}: Step size $\eta \in (0,\frac{1}{L_F})$, $\rho>0.$

      \STATE {\bf Initialize}:  $z^{0} \in \cZ$, $k=1$
        
      \FOR{$t = 0,1, \dots$}
    \STATE  $z^{t+1/2}  = \Pi_{\cZ_{\geq}}\left(z^t-\eta F(z^{t})\right)$
    \STATE $z^{t+1}  = \Pi_{\cZ_{\geq}}\left(z^t-\eta F(z^{t+1/2})\right)$
    %\STATE $\left(x^{t+1},y^{t+1}\right) = g(z^{t+1})$
    \IF{$\| z^{t+1/2}-z^{t} \| \leq \rho/2^k$}
        \STATE $z^{t+1} \leftarrow g(z^{t+1}) \in \Delta^{d_1} \times \Delta^{d_2}$
        \STATE $ k \leftarrow k+1$
    \ENDIF
    \ENDFOR
\end{algorithmic}
\end{algorithm}
%\vspace{-7mm}
\end{minipage}
\begin{minipage}[b]{.5\textwidth}
\begin{algorithm}[H]
      \caption{Restarting \sprmp (RS-\sprmp)}
      \label{RS-SPRM+}
      \begin{algorithmic}[1]
     \STATE {\bf Input}: Step size $\eta \in (0, \frac{1}{8L_F}]$.
      \STATE {\bf Initialize}:  $z^{-1} = w^0 \in \cZ, k = 1$
      \FOR{$t = 0, 1 , \dots$}
    \STATE  $z^{t}  = \Pi_{\cZ_{\geq}}\left(w^t-\eta F(z^{t-1})\right)$
    \STATE $w^{t+1}  = \Pi_{\cZ_{\geq}}\left(w^t-\eta F(z^{t})\right)$
    \IF{$\InNorms{w^{t+1} - z^t} + \InNorms{w^t - z^t} \le 8/2^k$}
        \STATE $z^t, w^{t+1} \leftarrow g(w^{t+1}) \in \Delta^{d_1} \times \Delta^{d_2}$
        \STATE $k \leftarrow k+1$
    \ENDIF
      \ENDFOR
\end{algorithmic}
\end{algorithm}
\vspace{-4mm}
\end{minipage}
\end{figure}
The main result of this section is the following linear convergence rates of RS-\exrmp. 
%We provide the detailed proof in the appendices.
% \todoi{jgc: next theorem may be more readable with only the statement about duality gap and not the first inequality $\| z^t - \Pi(z^t)\|$. Same for \sprmp{} in the next section.}
\begin{theorem}[Linear Last-Iterate Convergence of RS-\exrmp{}]\label{th:RS-exrmp linear convergence}
    Let $\{z^t\}$ be the sequence produced by RS-\exrmp{} with constant step size $\eta$ and let $\rho = \frac{4}{\sqrt{1-\eta^2 L_F^2}}$. For any $t \ge 1$, the iterate $(x^t, y^t) = g(z^t)$ satisfies $\DualGap(x^t, y^t)  \le
        c \cdot \alpha \cdot (1 - \beta)^t$
    % \begin{align*}
    %      \InNorms{(x^{t}, y^{t}) - \Pi_{\Z^\star}[(x^{t}, y^{t})]} & \le \frac{\DualGap(x^t, y^t)}{c}  \\
    %      & \le
    %     \alpha \cdot (1 - \beta)^t
    % \end{align*}
    % \begin{align*}
    %     \DualGap(x^t, y^t)  \le
    %     c \cdot \alpha \cdot (1 - \beta)^t
    % \end{align*}
    where $\alpha  =  \frac{576}{c^2\eta^2 (1-\eta^2 L_F^2)}$ and $\beta = \frac{1}{3(1+\alpha)}$.
\end{theorem}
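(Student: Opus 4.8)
The plan is to use the restart mechanism to break the trajectory of RS-\exrmp{} into \emph{phases}, where phase $k$ is the maximal block of iterations during which the counter equals $k$. Phase $1$ starts at $t=0$ from $z^0\in\cZ$, and for $k\ge 2$ phase $k$ starts right after the $(k-1)$-th restart from an iterate $\hz_k = g(\cdot)\in\Delta^{d_1}\times\Delta^{d_2}$; the key structural point is that \emph{inside a single phase RS-\exrmp{} is literally \exrmp{}} started from the phase's initial point, so everything in \Cref{sec:exrmp-convergence} and \Cref{sec:exrmp-best-iterate} applies verbatim within a phase. I would set $z^\star_k := \Pi_{\Z^\star}[\hz_k]$ and $D_k := \InNorms{\hz_k - z^\star_k}$. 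Two facts drive the argument. First, by \Cref{Fact: MVI} the Nash equilibrium $z^\star_k$ satisfies $\InAngles{F(z),z-z^\star_k}\ge 0$ for all $z\in\Z_{\ge}$, so \Cref{lemma: distance to NE is decreasing} applies with $z^\star_k$: the distances $\InNorms{z^t - z^\star_k}$ are non-increasing throughout phase $k$ (hence $\le D_k$), and $\sum_{t\in\text{phase }k}\InNorms{z^{t+\half}-z^t}^2 \le D_k^2/(1-\eta^2L_F^2)$. Second, at the $(k-1)$-th restart the trigger $\InNorms{z^{t+\half}-z^t}\le\rho/2^{k-1}$ combined with \Cref{lemma:EXRM+ gap upperbounded by distance} gives $\DualGap(\hz_k)\le 12\rho/(\eta 2^{k-1})$, and then \Cref{prop:Metric Subregularity} (applicable since $\hz_k\in\Delta^{d_1}\times\Delta^{d_2}$) gives $D_k \le 12\rho/(c\eta 2^{k-1})$.

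The heart of the proof is then to show that \emph{every phase has length at most $\alpha+2$, uniformly in $k$}. For $k\ge 2$, each iteration of phase $k$ before its restart has $\InNorms{z^{t+\half}-z^t}>\rho/2^k$, so the number of such iterations is strictly less than $\tfrac{D_k^2}{(1-\eta^2L_F^2)(\rho/2^k)^2}\le \tfrac{(12\rho/(c\eta 2^{k-1}))^2}{(1-\eta^2L_F^2)(\rho/2^k)^2} = \tfrac{576}{c^2\eta^2(1-\eta^2L_F^2)} = \alpha$; the essential point is that the geometric decay of the threshold $\rho/2^k$ \emph{exactly cancels} the geometric decay of $D_k$, so the count is a $k$-independent constant. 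Hence phase $k$ has length $\le\alpha+1$ and, in particular, terminates; for phase $1$, plugging $D_1\le\operatorname{diam}(\cZ)\le 2$ and $\rho=4/\sqrt{1-\eta^2L_F^2}$ gives length $\le 2$. Consequently, after $t$ iterations at least $\lfloor t/(\alpha+2)\rfloor$ restarts have occurred, so the counter value $n(t)$ satisfies $n(t)\ge 1+\lfloor t/(\alpha+2)\rfloor$, and in fact $n(t)\ge 1+(t-2)/(\alpha+1)$ for $t\ge 2$ because phase $1$ has length $\le 2$.

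It remains to bound the duality gap of the iterate produced inside phase $k\ge 2$. If $t$ is the phase-start index, then $\DualGap(g(z^t))=\DualGap(\hz_k)\le 12\rho/(\eta 2^{k-1})$; otherwise $z^t$ is the output of an \exrmp{} update $z^{t-1}\to z^{t-\half}\to z^t$ inside the phase, so \Cref{lemma:EXRM+ gap upperbounded by distance} gives $\DualGap(g(z^t))\le \tfrac{12}{\eta}\InNorms{z^{t-\half}-z^{t-1}}$, which by \Cref{lemma: distance to NE is decreasing} is $\le \tfrac{12}{\eta\sqrt{1-\eta^2L_F^2}}\InNorms{z^{t-1}-z^\star_k}\le \tfrac{12 D_k}{\eta\sqrt{1-\eta^2L_F^2}}$. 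Plugging $D_k\le 12\rho/(c\eta 2^{k-1})$, $\rho=4/\sqrt{1-\eta^2L_F^2}$, and the definition of $\alpha$, both cases collapse to $\DualGap(g(z^t))\le 2c\alpha\,2^{-k}$. Combining with the counter bound, for $t\ge 4$ (so $n(t)\ge 2$) one gets $\DualGap(x^t,y^t)\le 2c\alpha\,2^{-n(t)}\le c\alpha\,2^{-(t-2)/(\alpha+1)}$, and a short computation using $2^{-1/(\alpha+1)}\le 1-\tfrac{1}{3(\alpha+1)}=1-\beta$ (valid for all $\alpha\ge 0$) together with the explicit lower bound $\alpha\ge 2304\,L_F^2/c^2$ (from $\eta^2(1-\eta^2L_F^2)\le \tfrac{1}{4L_F^2}$, the known $L_F=\sqrt6\InNorms{A}_{op}\max\{d_1,d_2\}$, and the Lipschitz estimate $c\le\sqrt2\InNorms{A}_{op}$) yields $c\alpha\,2^{-(t-2)/(\alpha+1)}\le c\alpha(1-\beta)^t$. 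For the finitely many remaining $t\in\{1,2,3\}$ the crude bound $\DualGap(x^t,y^t)\le 2\InNorms{A}_{op}\le c\alpha(1-\beta)^t$ closes the argument, again using that $c\alpha$ is large; this covers all $t\ge 1$.

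I expect the main obstacle to be the uniform-in-$k$ phase-length bound: the crux is recognizing that the restart threshold $\rho/2^k$ must decay at \emph{precisely} the geometric rate of $\InNorms{\hz_k-\Pi_{\Z^\star}[\hz_k]}$ (which it does, via metric subregularity and \Cref{lemma:EXRM+ gap upperbounded by distance}), so that each phase costs only a constant number of iterations instead of a number growing with $k$ --- absent this matching one only obtains a sublinear rate. A secondary, more mechanical nuisance is tuning the numerical constants so that the geometric bound holds for \emph{all} $t\ge 1$ rather than merely asymptotically, which is where the trivial bound for small $t$ and the explicit estimate on $\alpha$ come in.
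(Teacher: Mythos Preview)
Your proof plan is correct and follows essentially the same route as the paper: decompose the run into restart epochs, observe that each epoch is an ordinary \exrmp{} trajectory, apply \Cref{lemma: distance to NE is decreasing} together with metric subregularity (\Cref{prop:Metric Subregularity}) at the normalized restart point to bound each epoch's length by a $k$-independent constant, and apply \Cref{lemma:EXRM+ gap upperbounded by distance} to bound the duality gap within each epoch. The paper's version is marginally tighter in that it directly observes the first restart triggers at $t=1$ (so $t_1=1$), which simplifies the final bookkeeping where you instead resort to a crude bound for small $t$ and an explicit lower bound on $\alpha$.
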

\begin{proof}[Proof sketch.]
    Let $t_k$ be the iteration where the $k$-th  restart happens. From the restart condition and \Cref{lemma:EXRM+ gap upperbounded by distance},  at iteration $t_k$, the duality gap of $(x^{t_k}, y^{t_k})$ and its distance to $\Z^\star$ is at most $O(\frac{1}{2^k})$. For iterate $t \in [t_k, t_{k+1}]$ at which the algorithm does not restart, we can use \Cref{theorem: ExRM+ best-iterate rate} to show that its performance is not much worse than that of $(x^{t_k}, y^{t_k})$. Then we prove $t_{k+1} - t_k$ is upper bounded by a constant for every $k$, which leads to a linear last-iterate convergence rate for all iterations $t \ge 1$.
\end{proof}

\section{Last-Iterate Convergence of \sprmp{}}\label{sec:sprmp-convergence}
% \begin{figure*}[hbt]
% \vspace{-1mm}
%     % \resizebox{\linewidth}{!}{\input{Figures_ICLR/small_matrix_100000_and_random_100000_without_restart_with_alt_with_Kuhn_goofspiel.pgf}}\\
%     \resizebox{\linewidth}{!}{\input{Figs/plots_ICML_with_EG_OG.pgf}}\\
%     \vspace{-5mm}
% \caption{Empirical performances of several algorithms on the $3 \times 3$ matrix game (left plot), Kuhn poker and Goofspiel (center plots), and random instances (right plot).}
% \label{fig:matrix-games}
% \vspace{-4mm}
% \end{figure*}
\begin{figure}[t]
    \centering
\includegraphics[width=\textwidth]{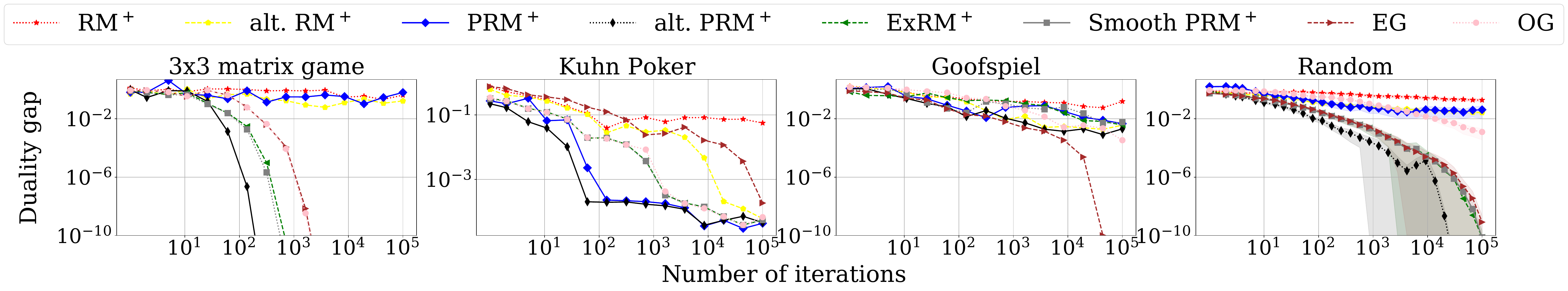}
    \caption{Empirical performances of several algorithms on the $3 \times 3$ matrix game (left plot), Kuhn poker and Goofspiel (center plots), and random instances (right plot).}
    \label{fig:matrix-games}
    \vspace{-3mm}
\end{figure}
In this section we study \sprmp (\Cref{SPRM+}), an important variant of \rmp{} based on OG. For the sake of conciseness, we only state the main results here. The proofs are presented in Appendix \ref{app:convergence-sprmp}.
\begin{theorem}[Asymptotic Last-Iterate Convergence of \sprmp]
\label{thm:PRM+ last-iterate}
    Let $\{w^t\}$ and $\{z^t\}$ be the sequences produced by \sprmp{}, then $\{w^t\}$ and $\{z^t\}$ are bounded and both converge to $z^\star  \in \Z_{\ge}$ with $g(z^\star) = (x^\star,y^\star) \in \Delta^{d_1} \times \Delta^{d_2}$ being a Nash equilibrium of the matrix game $A$. 
\end{theorem}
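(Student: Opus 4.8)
The plan is to transport, essentially line by line, the argument developed for \exrmp{} in \Cref{sec:exrmp-convergence} to the Optimistic-Gradient recursion underlying \sprmp{}, with the one genuinely new ingredient being the right Lyapunov function for the optimistic scheme. First I would establish the \sprmp{} analogue of \Cref{lemma: distance to NE is decreasing}. Since \sprmp{} is Optimistic Gradient on $\VI(\Z_{\ge}, F)$, the bare distance $\InNorms{w^t - z^\star}^2$ is not monotone; instead I would work with a corrected potential $\Phi^t(z^\star) := \InNorms{w^t - z^\star}^2 + c_0 \InNorms{w^t - z^{t-1}}^2$ for a suitable constant $c_0>0$. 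Using the two projection (three-point) inequalities that define $z^t$ and $w^{t+1}$, the $L_F$-Lipschitzness of $F$ on $\Z_{\ge}$, the prescribed step size $\eta \le \frac{1}{8 L_F}$, and \Cref{Fact: MVI} with $a=1$ in the form $\InAngles{F(z^t), z^t - z^\star}\ge 0$ for every Nash equilibrium $z^\star$, I would show $\Phi^{t+1}(z^\star) \le \Phi^t(z^\star) - \gamma\big(\InNorms{w^{t+1}-z^t}^2 + \InNorms{w^t - z^t}^2\big)$ for some $\gamma>0$. More generally, invoking \Cref{Fact: MVI} for arbitrary $a\ge 1$, the same inequality holds with $z^\star$ replaced by $a z^\star \in \Z_{\ge}$. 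This already gives: $\{w^t\},\{z^t\}$ bounded; $\sum_t \InNorms{w^{t+1}-z^t}^2 + \sum_t \InNorms{w^t-z^t}^2 < \infty$, hence $\InNorms{w^t - z^t}\to 0$, $\InNorms{w^{t+1}-w^t}\to 0$, $\InNorms{z^t - z^{t-1}}\to 0$; and $\Phi^t(a z^\star)$ converges, so (the correction term vanishing) $\InNorms{w^t - a z^\star}^2$ converges for every Nash equilibrium $z^\star$ and every $a\ge 1$.

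Next I would prove the \sprmp{} analogue of \Cref{lemma: limit point induces Nash}: using $\InNorms{w^t - z^{t-1}}\to 0$, the continuity of $F$, and the projection inequality for $z^t$, every limit point of $\{z^t\}$ — equivalently of $\{w^t\}$, since $\InNorms{w^t - z^t}\to 0$ — lies in $\SOL(\Z_{\ge}, F)$ and therefore induces a Nash equilibrium via $g$, and $\DualGap(g(z^t))\to 0$. I would then reproduce the geometric structure argument of \Cref{lemma:Structure of Limit Points}: for any two limit points $\hz,\tz$ and any Nash equilibrium $z^\star$, the fact that $\InNorms{w^t - a z^\star}^2$ converges \emph{along the whole sequence} forces $\InNorms{a z^\star - \hz}^2 = \InNorms{a z^\star - \tz}^2$ for all $a\ge 1$, whence $\InNorms{\hz}^2 = \InNorms{\tz}^2$ and $\InAngles{z^\star,\hz - \tz}=0$.

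From this point the proof of uniqueness of the limit point is the argument of \Cref{lemma:unique limit point} essentially unchanged: writing $\hz = (\hR\hx,\hQ\hy)$ and $\tz = (\tR\tx,\tQ\ty)$, one derives the key identity $\hR+\tR = \hQ+\tQ$ by splitting into the colinear case ($\hx=\tx$, $\hy=\ty$) and the distinct-equilibria case (using exchangeability of Nash equilibria in zero-sum games to introduce $(\hx,\ty)$, the three structural identities, and the Cauchy--Schwarz equality condition); the \sprmp{} analogue of \Cref{proposition: colinear limit point converges} then disposes of the case $\hR=\hQ$, and the three-limit-point bookkeeping rules out $\hR\ne\hQ$. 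Combining uniqueness of the limit point with $\InNorms{z^{t+1}-z^t}\to 0$ and boundedness yields convergence of $\{z^t\}$ to some $z^\star\in\Z_{\ge}$ with $g(z^\star)$ a Nash equilibrium, and $\InNorms{w^t - z^t}\to 0$ gives convergence of $\{w^t\}$ to the same point, establishing the theorem.

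The main obstacle is the first step: in the Extragradient case $\InNorms{z^{t+1}-z^\star}^2$ is monotone by itself, whereas for Optimistic Gradient one must identify the correct corrected potential $\Phi^t$ and check — working with the \emph{non-monotone} operator $F$ and only the Minty-type inequality of \Cref{Fact: MVI} in place of monotonicity — that the cross terms of the form $2\eta\InAngles{F(z^{t-1})-F(z^t),\, z^t - w^{t+1}}$ are absorbed by the telescoping negative terms under $\eta\le\frac{1}{8L_F}$, and that the correction term $\InNorms{w^t-z^{t-1}}^2$ vanishes along the sequence so that $\InNorms{w^t - a z^\star}^2$ genuinely converges (this last point is what makes the structure-of-limit-points lemma applicable). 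Once this Lyapunov inequality is established with a squared-distance leading term, the remaining steps follow the \exrmp{} template.
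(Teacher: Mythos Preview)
Your proposal is correct and follows essentially the same approach as the paper. The paper uses exactly the corrected potential you describe, with $c_0 = \tfrac{1}{16}$ (this is \Cref{lemma: PRM+ distance to NE is decreasing}, adapted from \citet{wei2021linear}), and then reproduces the limit-point-induces-Nash lemma, the colinear-limit-point proposition, the structure-of-limit-points lemma, and the uniqueness argument verbatim from the \exrmp{} analysis; the only cosmetic difference is that the paper argues uniqueness for $\{w^t\}$ first and then transfers to $\{z^t\}$ via $\InNorms{w^t - z^t}\to 0$, whereas you sketch the reverse order.
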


\begin{theorem}[Best-Iterate Convergence Rate of \sprmp]
\label{theorem: PRM+ best-iterate rate}
    Let $\{w^t\}$ and $\{z^t\}$ be the sequences produced by \sprmp{}. For any Nash equilibrium $z^\star$, any $T \ge 1$, there exists $1 \le t,t' \le T$ such that the iterate $g(w^t),g(z^{t'}) \in \Delta^{d_1} \times \Delta^{d_2}$ satisfy
    $        \DualGap(g(w^t))  \le \frac{10\InNorms{w^0 - z^\star}}{\eta} \frac{1}{\sqrt{T}},$ and $         \DualGap(g(z^t)) \le \frac{18\InNorms{w^0 - z^\star}}{\eta} \frac{1}{\sqrt{T}}.$
    % \begin{align*}
        % \DualGap(g(w^t))  \le \frac{10\InNorms{w^0 - z^\star}}{\eta} \frac{1}{\sqrt{T}}, \quad
        % \DualGap(g(z^t)) \le \frac{18\InNorms{w^0 - z^\star}}{\eta} \frac{1}{\sqrt{T}}.
    % \end{align*}
    % \begin{align*}
    %     \DualGap(g(w^t)) & \le \frac{10\InNorms{w^0 - z^\star}}{\eta} \frac{1}{\sqrt{T}} \\
    %     \DualGap(g(z^t)) & \le \frac{18\InNorms{w^0 - z^\star}}{\eta} \frac{1}{\sqrt{T}}.
    % \end{align*}
\end{theorem}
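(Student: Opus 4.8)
The plan is to port the three-step scheme of \Cref{sec:exrmp-best-iterate} to the Optimistic-Gradient-type iteration \sprmp{}: (i) a telescoping energy inequality that makes the one-step movements square-summable; (ii) a pigeonhole argument; and (iii) a lemma bounding the duality gap of a normalized iterate by the local movement, adapted to the lagged-gradient structure of \sprmp{}. Combining (ii) and (iii) then yields the theorem.

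\emph{Step 1 (energy inequality).} I would first prove the \sprmp{} analogue of \Cref{lemma: distance to NE is decreasing}. Writing the first-order optimality inequalities for the two projections $z^t = \Pi_{\cZ_{\geq}}(w^t-\eta F(z^{t-1}))$ and $w^{t+1} = \Pi_{\cZ_{\geq}}(w^t-\eta F(z^t))$, testing the latter at $u = z^\star$ and the former at $u = w^{t+1}$, invoking the Minty property $\InAngles{F(z^t),z^t-z^\star}\ge 0$ from \Cref{Fact: MVI} (which in fact holds for every $az^\star$ with $a\ge 1$), and using the identity $w^{t+1}-w^t = (w^{t+1}-z^t)-(w^t-z^t)$, I expect the clean bound
\[
\InNorms{w^{t+1}-z^\star}^2 \le \InNorms{w^t-z^\star}^2 - \InNorms{w^{t+1}-z^t}^2 - \InNorms{w^t-z^t}^2 + 2\eta L_F\InNorms{z^t-z^{t-1}}\,\InNorms{w^{t+1}-z^t},
\]
the error term arising from substituting $F(z^{t-1})$ for $F(z^t)$. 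To control the lagged term, note $\InNorms{z^{t-1}-w^t} = \InNorms{\Pi_{\cZ_{\geq}}(w^{t-1}-\eta F(z^{t-2}))-\Pi_{\cZ_{\geq}}(w^{t-1}-\eta F(z^{t-1}))} \le \eta L_F\InNorms{z^{t-1}-z^{t-2}}$, hence $\InNorms{z^t-z^{t-1}} \le \InNorms{w^t-z^t} + \eta L_F\InNorms{z^{t-1}-z^{t-2}}$; unrolling this geometric recursion (with $z^{-1}=w^0$) gives $\sum_{t}\InNorms{z^t-z^{t-1}}^2 = O\big(\sum_t\InNorms{w^t-z^t}^2\big)$ as long as $\eta L_F\le\tfrac18$. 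Summing the one-step bounds and applying Young's inequality to the error term then yields $\sum_{t=0}^{T-1}\big(\InNorms{w^{t+1}-z^t}^2+\InNorms{w^t-z^t}^2\big) = O(\InNorms{w^0-z^\star}^2)$, and in particular keeps all iterates in a ball whose radius is an absolute constant (using $w^0\in\cZ$).

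\emph{Steps 2--3 (pigeonhole; gap bounded by movement).} By Step 1 there is $t\le T-1$ with $\InNorms{w^t-z^t}+\InNorms{z^t-w^{t+1}} = O(\InNorms{w^0-z^\star}/\sqrt{T})$ (and, by the recursion, $\InNorms{z^{t-1}-z^t}$ is of the same order at that or a one-shifted index). For Step 3 I would prove the \sprmp{} analogue of \Cref{lemma:EXRM+ gap upperbounded by distance}, using the identity — valid because $F(z)$ depends on $z$ only through $g(z)$ and $\InAngles{F(z),z}=\InAngles{F(z),g(z)}=0$ (each block gives $\InAngles{F_x(z),x}=x^\top Ay-x^\top Ay=0$, similarly for $y$) — that
\[
\DualGap(g(w^t)) = \max_{z'\in\cZ}\InAngles{F(w^t),\,w^t-z'},\qquad \DualGap(g(z^t)) = \max_{z'\in\cZ}\InAngles{F(z^t),\,z^t-z'}.
\]
For $\DualGap(g(z^t))$, substitute the $z^t$-step optimality $\InAngles{F(z^{t-1}),z^t-z'}\le\tfrac1\eta\InAngles{w^t-z^t,z^t-z'}$, pay $L_F\InNorms{z^t-z^{t-1}}\InNorms{z^t-z'}$ to pass from $F(z^{t-1})$ to $F(z^t)$, and bound $\InNorms{z^t-z'}$ by the (explicit, absolute) diameter of the region containing the iterates and $\cZ$. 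For $\DualGap(g(w^t))$, use instead the $w^{t+1}$-step optimality $\InAngles{F(z^t),w^{t+1}-z'}\le\tfrac1\eta\InAngles{w^t-w^{t+1},w^{t+1}-z'}$, then pass from $F(z^t)$ to $F(w^t)$ (cost $L_F\InNorms{w^t-z^t}\InNorms{w^t-z'}$) and from $w^{t+1}$ to $w^t$ (cost $\InNorms{F(z^t)}\InNorms{w^t-w^{t+1}}$, with $\InNorms{F(z^t)}$ bounded by an absolute constant since $\InNorms{F(z)}$ too depends only on $g(z)\in\cZ$). This gives $\DualGap(g(w^t))\le\tfrac{O(1)}{\eta}\big(\InNorms{w^t-z^t}+\InNorms{z^t-w^{t+1}}\big)$, and a similar bound for $g(z^t)$ with a slightly worse constant (the extra $F(z^{t-1})\to F(z^t)$ substitution); combining with Step 2 and carefully tracking constants produces the stated bounds $10/\eta$ and $18/\eta$.

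\emph{Main obstacle.} The crux is Step 1's energy inequality — specifically the treatment of the lagged-gradient error $\InNorms{z^t-z^{t-1}}$, the characteristic difficulty of Optimistic-Gradient-type analyses. Here I would avoid introducing a Lyapunov term by showing directly that $\{\InNorms{z^t-z^{t-1}}^2\}$ is dominated, geometrically, by $\{\InNorms{w^t-z^t}^2\}$, which is precisely where the step-size restriction $\eta\le\tfrac1{8L_F}$ enters. A secondary obstacle is making the diameter bound in Step 3 explicit and independent of the step size; this follows from the boundedness of the iterates established in \Cref{thm:PRM+ last-iterate} together with $w^0\in\cZ$, but must be tracked quantitatively to land on exactly the constants $10$ and $18$.
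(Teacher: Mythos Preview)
Your three-step scheme matches the paper's structure, and Steps~2--3 are close to what the paper does (the paper bounds $\DualGap(g(w^{t+1}))$ and $\DualGap(g(z^t))$ via its Lemma~6 applied to the triples $(w^t,z^t,w^{t+1})$ and $(w^t,z^{t-1},z^t)$, which is essentially your gap-to-movement argument, though it works with $w^{t+1}$ rather than $w^t$ and so avoids your extra $\InNorms{F(z^t)}\cdot\InNorms{w^t-w^{t+1}}$ term). The substantive difference is in Step~1: the paper does \emph{not} derive the energy inequality via your geometric-recursion control of $\InNorms{z^t-z^{t-1}}$ but instead cites the Lyapunov-augmented bound of Wei et al.\ (the paper's \Cref{lemma: PRM+ distance to NE is decreasing}),
\[
\InNorms{w^{t+1}-z^\star}^2+\tfrac{1}{16}\InNorms{w^{t+1}-z^t}^2\ \le\ \InNorms{w^t-z^\star}^2+\tfrac{1}{16}\InNorms{w^t-z^{t-1}}^2-\tfrac{15}{16}\big(\InNorms{w^{t+1}-z^t}^2+\InNorms{w^t-z^t}^2\big),
\]
which telescopes immediately to $\sum_{t}(\InNorms{w^{t+1}-z^t}^2+\InNorms{w^t-z^t}^2)\le\tfrac{16}{15}\InNorms{w^0-z^\star}^2$ and, after shifting indices, also to $\sum_{t}(\InNorms{w^t-z^t}^2+\InNorms{w^t-z^{t-1}}^2)\le\tfrac{16}{15}\InNorms{w^0-z^\star}^2$. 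Your non-Lyapunov route is valid and arguably more elementary, but its constants are looser: unrolling $\InNorms{z^t-z^{t-1}}\le\InNorms{w^t-z^t}+\eta L_F\InNorms{z^{t-1}-z^{t-2}}$ and combining with Young gives a summation constant noticeably larger than $\tfrac{16}{15}$, so you will not ``land on exactly'' $10$ and $18$---those numbers come precisely from $5\times 2$ and $9\times 2$, where the factor $2$ is $\sqrt{2\cdot 16/15}<2$ and the $5,9$ are the Lemma~10 constants tied to the clean triple decomposition. A second minor looseness: your pigeonhole for the $z^t$ bound needs $\InNorms{w^t-z^t}+\InNorms{w^t-z^{t-1}}$ (or $\InNorms{z^t-z^{t-1}}$) small at the \emph{same} index; the paper gets this directly from the shifted telescoping sum above, whereas your ``at that or a one-shifted index'' would need a separate pigeonhole on $\sum_t(\InNorms{w^t-z^t}^2+\InNorms{z^t-z^{t-1}}^2)$.
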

\vspace{-3mm}
We apply the idea of restarting to \sprmp{} to design a new algorithm called Restarting \sprmp{}(RS-\sprmp{}; see \Cref{RS-SPRM+}) with provable linear last-iterate convergence. 
\begin{theorem}[Linear Last-Iterate Convergence of RS-\sprmp{}]
\label{theorem: RS-SPRM+ linear rates}
    Let $\{w^t\}$ and $\{z^t\}$ be the sequences produced by RS-\sprmp{} with constant step size $\eta$. Let $\alpha =  \frac{400}{c^2 \eta^2}$ and $\beta = \frac{1}{3(1+\alpha)}$. For $t \ge 1$, the iterates $g(w^t), g(z^t) \in \Z$ satisfy $ \DualGap(g(w^t))  \le c \cdot \alpha \cdot (1 -\beta)^t$ and $ 
            \DualGap(g(z^t))  \le 2c \cdot \alpha \cdot (1 -\beta)^t. $
    % \begin{align*}
    %        \DualGap(g(w^t))  \le c \cdot \alpha \cdot (1 -\beta)^t, \quad
    %         \DualGap(g(z^t))  \le 2c \cdot \alpha \cdot (1 -\beta)^t. 
    % \end{align*}
    % \begin{align*}
    %        \DualGap(g(w^t)) & \le c \cdot \alpha \cdot (1 -\beta)^t \\
    %         \DualGap(g(z^t)) & \le 2c \cdot \alpha \cdot (1 -\beta)^t. 
    % \end{align*}
\end{theorem}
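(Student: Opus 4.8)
The plan is to follow the template of \Cref{th:RS-exrmp linear convergence}, replacing its three ingredients by their \sprmp{} counterparts. Let $t_k$ be the iteration at which the $k$-th restart of RS-\sprmp{} fires, with $t_0 := 0$; just after it, both $z^{t_k}$ and $w^{t_k+1}$ are overwritten by $\bar z_k := g(w^{t_k+1}) \in \Z$, and on the interval $(t_k,t_{k+1}]$ the algorithm runs plain \sprmp{} initialized at $\bar z_k$. First I would show that at a restart the current iterate is $O(2^{-k})$-close to the equilibrium set: by the restart test, $\InNorms{w^{t_k+1}-z^{t_k}}+\InNorms{w^{t_k}-z^{t_k}} \le 8/2^{k}$ (pre-restart values), so the \sprmp{} analogue of \Cref{lemma:EXRM+ gap upperbounded by distance} (the gap--distance bound proved in \Cref{app:convergence-sprmp} and already used inside \Cref{theorem: PRM+ best-iterate rate}), together with $g\circ g = g$, gives $\DualGap(\bar z_k)=O(\tfrac{1}{\eta 2^{k}})$; metric subregularity (\Cref{prop:Metric Subregularity}) then yields $\InNorms{\bar z_k-\Pi_{\Z^\star}[\bar z_k]}=O(\tfrac{1}{c\eta 2^{k}})$.

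Next I would bound the length of each phase by an absolute constant. Running plain \sprmp{} from $\bar z_k$, the telescoping (descent) inequality underlying \Cref{theorem: PRM+ best-iterate rate} shows that within $T$ steps some iterate has $\InNorms{w^{t+1}-z^t}+\InNorms{w^t-z^t}=O\big(\InNorms{\bar z_k-z^\star}/\sqrt{T}\big)$ for $z^\star=\Pi_{\Z^\star}[\bar z_k]$, hence $O(\tfrac{1}{c\eta 2^{k}\sqrt T})$; this falls below the next threshold $8/2^{k+1}$ as soon as $T$ exceeds a constant $N$ independent of $k$, so $t_{k+1}-t_k\le N$. The same computation with $\InNorms{w^0-z^\star}\le\operatorname{diam}(\Z)$ bounds $t_1$ by a constant, and this is exactly what fixing the threshold constant to $8$ buys (it plays the role of $\rho$ in RS-\exrmp{}).

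Then I would control the duality gap at every iterate inside a phase, not just at its endpoints. Using the \sprmp{} descent inequality against the fixed equilibrium $z^\star=\Pi_{\Z^\star}[\bar z_k]$, every iterate on $(t_k,t_{k+1}]$ stays $O(\tfrac{1}{c\eta 2^{k}})$-close to $\Z^\star$; feeding this back through the gap--distance lemma (or through Lipschitzness of $\DualGap\circ g$ on $\Z_{\ge}$) gives $\DualGap(g(w^t)),\DualGap(g(z^t))=O(\tfrac{1}{c\eta 2^{k}})$ throughout the phase, the $z^t$-bound carrying the extra numerical factor already visible in \Cref{theorem: PRM+ best-iterate rate}. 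Finally, since every phase has length at most $N$, by iteration $t$ at least $\lfloor t/N\rfloor-O(1)$ restarts have occurred, so the bounds become $\DualGap(g(w^t))=O(\tfrac{1}{c\eta}2^{-t/N})$ and likewise for $z^t$; tracking the numerical constants through the gap--distance lemma, \Cref{prop:Metric Subregularity}, and \Cref{theorem: PRM+ best-iterate rate} yields the stated $\alpha=\tfrac{400}{c^2\eta^2}$, $\beta=\tfrac{1}{3(1+\alpha)}$, with the factor $2$ between the $w$- and $z$-bounds. The bound also holds trivially for $t$ before the first restart: there $\DualGap$ is at most a universal constant depending only on $A$, while $c\alpha(1-\beta)^{t}\ge c\alpha(1-\beta)^{N}$ already dominates it because $\alpha=\Omega(1/(c^2\eta^2))$ is large.

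The step I expect to be the main obstacle is the third one: asymptotic convergence in the duality gap (\Cref{thm:PRM+ last-iterate}) by itself does \emph{not} keep the intermediate iterates of a phase $O(2^{-k})$-close to $\Z^\star$, since the single OG update executed right after a restart need not be small. Handling this requires the genuine contraction in the \sprmp{} descent inequality (not just non-increase of distance) to argue that this first post-restart step cannot inflate the distance to $\Z^\star$ by more than a constant factor, after which the across-phase geometric decay takes over. A secondary nuisance is keeping the $w$- and $z$-sequences' constants separate so that the advertised ratio of exactly $2$ comes out, which is why one tracks the $10/\eta$ and $18/\eta$ estimates of \Cref{theorem: PRM+ best-iterate rate} rather than a common upper bound.
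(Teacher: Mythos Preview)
Your proposal is correct and follows essentially the same approach as the paper's proof. The ``main obstacle'' you flag is not an issue: the uniform per-step bound $\InNorms{w^{t+1}-z^t}+\InNorms{w^t-z^t}\le 2\InNorms{w^0-z^\star}$ for every $t\ge 0$ is already the second statement of \Cref{lemma: PRM+ distance to NE is decreasing}, so the first post-restart step is controlled exactly like all later ones, and the paper simply plugs this into \Cref{lemma:PRM+ gap upperbounded by distance} to get $\DualGap(g(w^t))\le\tfrac{10}{\eta}\InNorms{\bar z_k-\Pi_{\Z^\star}[\bar z_k]}$ (and the analogous $\tfrac{18}{\eta}$ bound for $g(z^t)$) throughout the phase.
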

% \begin{figure*}[htb!]
% \vspace{-1mm}
%     % \resizebox{\linewidth}{!}{\input{Figures_ICLR/small_matrix_100000_and_random_100000_without_restart_with_alt_with_Kuhn_goofspiel.pgf}}\\
%     \resizebox{\linewidth}{!}{\input{Figs/plots_ICML_with_EG_OG.pgf}}\\
%     \vspace{-5mm}
% \caption{Empirical performances of several algorithms on the $3 \times 3$ matrix game (left plot), Kuhn poker and Goofspiel (center plots), and random instances (right plot).}
% \label{fig:matrix-games}
% \vspace{-4mm}
% \end{figure*}

\vspace{-2mm}
\section{Numerical Experiments}\label{sec:simu}
\vspace{-1mm}
% \begin{figure*}[htp!]
% \vspace{-1mm}
%     % \resizebox{\linewidth}{!}{\input{Figures_ICLR/small_matrix_100000_and_random_100000_without_restart_with_alt_with_Kuhn_goofspiel.pgf}}\\
%     \resizebox{\linewidth}{!}{\input{Figs/plots_ICML_with_EG_OG.pgf}}\\
%     \vspace{-5mm}
% \caption{Empirical performances of several algorithms on the $3 \times 3$ matrix game (left plot), Kuhn poker and Goofspiel (center plots), and random instances (right plot).}
% \label{fig:matrix-games}
% \vspace{-4mm}
% \end{figure*}
{The goal in this section is to verify numerically our theoretical findings from the previous sections, pertaining to the lack of convergence properties of \rmp, \prmp{} and to the convergence properties of \exrmp{} and \sprmp.}
For the sake of completeness, we also show the performances of the extragradient algorithm (EG) and optimistic gradient descent (OG)~\cite{cai2022finite}.
We use the $3 \times 3$ matrix game instance from Section \ref{sec:non-convergence}, the normal-form representations of Kuhn poker and Goofspiel, as well as $25$ random matrix games of size $(d_{1},d_{2}) = (10,15)$ (for which we average the duality gaps across the instances and show the associated confidence intervals). More details on the games and on EG and OG can be found in Appendix \ref{app:simulations}. For the algorithms requiring a stepsize (\exrmp, \sprmp, EG and OG), we choose the best stepsizes $\eta \in \{1,10^{-1},10^{-2},10^{-3},10^{-4}\}.$
We initialize all algorithms at $\left((1/d_{1})\bm{1}_{d_{1}},(1/d_{2})\bm{1}_{d_{2}}\right)$. In every iteration, we plot the duality gap of $(x^t, y^t)$ for \rmp, \prmp, and alternating \prmp; the duality gap of $g(z^t)$ for \exrmp{}; the duality gap of $g(w^t)$ for \sprmp{}; and the duality gap of $z^t$ in EG and OG.
The results are shown in Figure \ref{fig:matrix-games}. {To avoid issues with numerical precisions, we set a lower threshold of $10^{-10}$ for the duality gaps displayed in all the figures in this paper, including Figure \ref{fig:bad-matrix-instance} and the additional numerical results in the appendices.}
For the small matrix game, alternating \prmp, \exrmp{}, and \sprmp{} achieve machine precision after $10^3$ iterations (while other RM-based algorithms stay around $10^{-1}$ as discussed earlier), slightly faster than EG and OG.
On Kuhn poker, \prmp{} and alternating \prmp{} have faster convergence early, but later all algorithms perform on par (except \rmp).
On Goofspiel, all algorithms (except \rmp) have comparable performance after $10^5$ iterations, except EG which achieves machine precision. 
Finally, on random instances, the last iterate performance of \exrmp{}, \sprmp{}, and EG vastly outperform \rmp, \prmp{}, and alternating \rmp, but we note that alternating \prmp{} outperforms all other algorithms. OG appears to be slow on these instances. {We include additional experiments studying the best-iterate performances in Appendix \ref{app:best-iterate}. They corroborate our theoretical findings of a $O(1/\sqrt{T})$ best-iterate convergence rates for \exrmp{} (Theorem \ref{theorem: ExRM+ best-iterate rate}) and \sprmp{} (Theorem \ref{theorem: PRM+ best-iterate rate}).}

Overall, these empirical results are consistent with, and validate, our theoretical findings of \exrmp{} and \sprmp{}. That said, understanding the good practical performance of alternating \prmp{} (an algorithm for which neither ergodic nor last-iterate convergence guarantees are known) remains open. We also numerically investigate the impact of restarting for RS-\exrmp{} and RS-\sprmp (see Appendix \ref{app:simulations}), where we note that both algorithms achieve linear last-iterate convergence (\Cref{th:RS-exrmp linear convergence} and~\ref{theorem: RS-SPRM+ linear rates}) and are faster than their non-restarting counterparts on some instances.
% \tb{which may suggest that \exrmp{} and \sprmp{} (without restarting) already have linear last-iterate convergence. We leave this question as an open research direction.}
%\ck{What's meant here? we already showed LI without restarting; is this trying to make a statement about the rate?}

% which is coherent with the fact that \exrmp{} and \sprmp{} (without restarting) already exhibit fast last-iterate convergence.

\vspace{-2mm}
\section{Conclusions}
\vspace{-2mm}

In this paper, we investigate the last-iterate convergence properties of regret-matching algorithms, a class of popular methods for equilibrium computation in games. Despite these methods enjoying strong \emph{average} performance in practice, we show that, unfortunately, many practically-used variants might not converge. Motivated by these findings, we set out to investigate variants with provable last-iterate convergence, establishing a suite of new results by using techniques from the literature on variational inequalities. For a restarted variant of these algorithms, we were able to prove, for the first time for regret matching algorithms, linear last-iterate convergence rates. Several questions remain open, including giving concrete rates for (non-restarted) \exrmp, \sprmp, and alternating PRM$^+$.
Another open problem is understanding if our results extend to the case of solving EFGs with \rmp via the CFR framework. 
However, analyzing \rmp behavior in the CFR setup is much more difficult because even in the two-player zero-sum case, the variational inequality characterization must be applied separately at every decision point.

\section*{Acknowledgements}
Yang Cai was supported by the NSF Awards CCF-1942583 (CAREER) and CCF-2342642. Gabriele Farina was supported by the NSF Award CCF-2443068 (CAREER). Julien Grand-Cl{\'e}ment was supported by Hi! Paris and
Agence Nationale de la Recherche (Grant 11-LABX-0047). Christian Kroer was supported by the Office of Naval Research awards N00014-22-1-2530 and N00014-23-1-2374, and the National Science Foundation awards IIS-2147361 and IIS-2238960.
Haipeng Luo was supported by the National Science Foundation award IIS-1943607. Weiqiang Zheng was supported by the NSF Awards CCF-1942583 (CAREER), CCF-2342642, and a Research Fellowship from the Center for Algorithms, Data, and Market Design at Yale (CADMY).

%\newpage
%\paragraph{Broader Impact.} The goal of this paper is to address foundational research questions. There is no direct societal impact for our results.

\bibliographystyle{plainnat}
\bibliography{ref}

\appendix
\tableofcontents
\section{Alternating variants of \rmp}\label{app:alternating variants}
\tb{We describe the alternating variants of \rmp{} and Predictive \rmp{} in Algorithm \ref{alt RM+} and Algorithm \ref{alt PRM+}.}
\begin{figure}[htb]
\begin{minipage}[b]{.5\textwidth}
\begin{algorithm}[H]
      \caption{Alternating \rmp{} (alt. \rmp)}
      \label{alt RM+}
      \begin{algorithmic}[1]
      \STATE {\bf Initialize}:  $(R^{0}_{x},R^{0}_{y}) = \bm{0}$, $({x}^{0},{y}^{0}) \in \cZ$ 
        
      \FOR{$t = 0,1, \dots$}
    \STATE  $R^{t+1}_{x}=[R^{t}_{x} -f(x^t, A y^{t}) ]^+$
    \STATE $x^{t+1} = \frac{ R^{t+1}_{x}} {\|R^{t+1}_{x}\|_{1}}$
    \STATE  $R^{t+1}_{y}=[R^{t}_{y} + f(y^t, A^{\top} x^{t+1})]^+$
    \STATE $y^{t+1} = \frac{ R^{t+1}_{y}} {\|R^{t+1}_{y}\|_{1}}$
      \ENDFOR
\end{algorithmic}
\end{algorithm}
\end{minipage}
\begin{minipage}[b]{.5\textwidth}
\begin{algorithm}[H]
      \caption{Alternating Predictive \rmp{} (alt. \prmp)}
      \label{alt PRM+}
      \begin{algorithmic}[1]
      \STATE {\bf Initialize}:  $(R^{0}_{x},R^{0}_{y}) = \bm{0}$, $({x}_{0},{y}_{0}) \in \cZ$
        
      \FOR{$t = 0,1, \dots$}
    \STATE  $R^{t+1}_{x}=[R^{t}_{x} -f(x^t, A y^{t}) ]^+$
    \STATE $x^{t+1} = \frac{ [R^{t+1}_{x} -f(x^t, A y^{t}) ]^+} {\|[R^{t+1}_{x} -f(x^t, A y^{t}) ]^+\|_{1}}$
    \STATE  $R^{t+1}_{y}=[R^{t}_{y} + f(y^t, A^{\top} x^{t+1})]^+$
    \STATE  $y^{t+1} = \frac{[R^{t+1}_{y} + f(y^t, A^{\top} x^{t+1})]^+} {\|[R^{t+1}_{y} + f(y^t, A^{\top} x^{t+1})]^+\|_{1}}$
   \ENDFOR
\end{algorithmic}
\end{algorithm}
\end{minipage}
\end{figure}
\section{Non-Lipschtizness of the regret operator $F$}\label{app:non-lip-F}
\tb{
Here, we show that the operator $F$ defined in \eqref{eq:F} is not $L$-Lipschitz continuous over $\R_{+}^{d_{1} + d_{2}}$ for any $L>0$.
Take any point $z,z' \in \Delta^{d_{1}} \times \Delta^{d_{2}}$ and let $c := \| F(z) - F(z')\|.$ Note that by definition of $F$ we have $F(a \cdot z) = F(z)$ for any $a>0$. Therefore, $\| F(az) - F(az')\| = \| F(z) - F(z')\|=c$ for any $a>0$, yet we can choose $a>0$ small enough so that $\|az - az'\| = a\|z-z'\|<c/L$, which shows that $F$ cannot be $L$-Lipschitz continuous, otherwise we would obtain $\| F(az) - F(az')\| \leq L \|az-az'\| <c$ which is a contradiction.
}

\section{Diverging trajectories of \rmp, alternating \rmp{} and Predictive \rmp}\label{app:plots last iterates}
\tb{In this section we plot the last $2000$ iterations (out of $10^5$ iterations) of \rmp, alternating \rmp{} and Predictive \rmp{} for solving the matrix game from Figure \ref{fig:bad-matrix-instance}. The results are presented in Figure \ref{fig:strategy-rmp}, Figure \ref{fig:strategy-alt-rmp} and Figure \ref{fig:strategy-prmp}. Since $x_t \in \Delta^3$ we only plot the first two components of $x^t$ and similarly for $y^t \in \Delta^3$.}
\begin{figure}[htb]
\begin{center}
    \begin{subfigure}{0.4\textwidth}
\includegraphics[width=\linewidth]{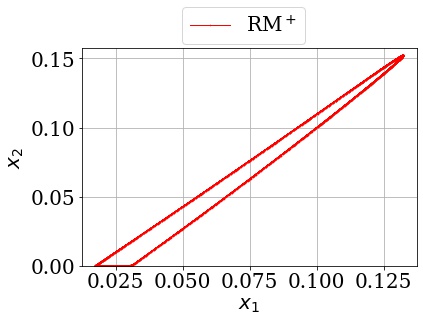}
    \caption{$x$-player.}
    \end{subfigure}
    \begin{subfigure}{0.4\textwidth}
\includegraphics[width=\linewidth]{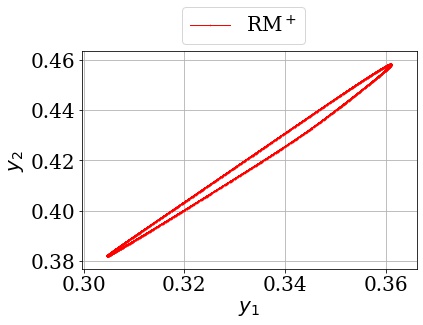}
    \caption{$y$-player.}
    \end{subfigure}
    \end{center}
\caption{Last $2000$ iterates of Regret Matching$^+$ after $10^5$ iterations for solving the matrix game from Figure \ref{fig:bad-matrix-instance}.} 
\label{fig:strategy-rmp}
\end{figure}

\begin{figure}[htb]
\begin{center}
    \begin{subfigure}{0.4\textwidth}
\includegraphics[width=\linewidth]{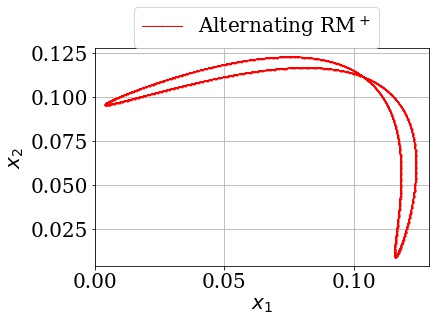}
    \caption{$x$-player.}
    \end{subfigure}
    \begin{subfigure}{0.4\textwidth}
\includegraphics[width=\linewidth]{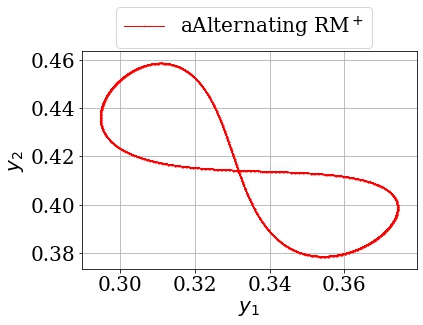}
    \caption{$y$-player.}
    \end{subfigure}
    \end{center}
\caption{Last $2000$ iterates of Alternating Regret Matching$^+$ after $10^5$ iterations for solving the matrix game from Figure \ref{fig:bad-matrix-instance}.} 
\label{fig:strategy-alt-rmp}
\end{figure}

\begin{figure}[htb]
\begin{center}
    \begin{subfigure}{0.4\textwidth}
\includegraphics[width=\linewidth]{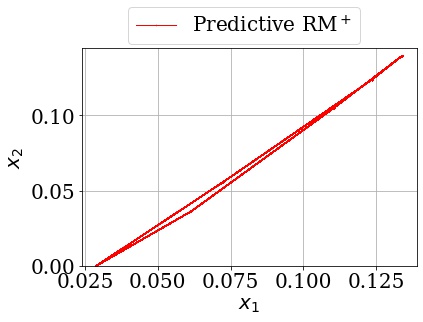}
    \caption{$x$-player.}
    \end{subfigure}
    \begin{subfigure}{0.4\textwidth}
\includegraphics[width=\linewidth]{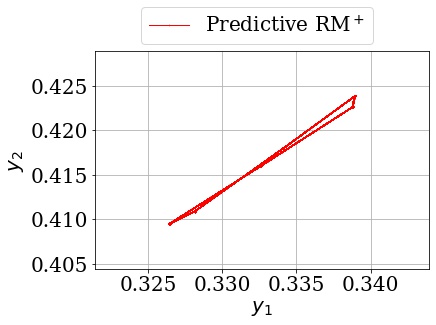}
    \caption{$y$-player.}
    \end{subfigure}
    \end{center}
\caption{Last $2000$ iterates of Predictive \rmp{} after $10^5$ iterations for solving the matrix game from Figure \ref{fig:bad-matrix-instance}.} 
\label{fig:strategy-prmp}
\end{figure}
\section{Useful Propositions}\label{app:useful propositions}
\tb{
\begin{lemma}[Solution to VI Corresponds to Nash Equilibrium]
    If $z \in SOL(\+Z_\ge, F)$, then $(x,y) = g(z)$ is a Nash equilibrium of the matrix game $A$.
\end{lemma}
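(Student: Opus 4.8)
The plan is to unwind the definition of $\SOL(\Z_\ge, F)$ and show it forces the strategy profile $g(z)$ to have zero duality gap. Write $z=(z_1,z_2)\in\Delta^{d_1}_{\ge}\times\Delta^{d_2}_{\ge}$, set $R=\|z_1\|_1\ge 1$ and $Q=\|z_2\|_1\ge 1$, and let $(x,y)=g(z)=(z_1/R,\,z_2/Q)$, so $x\in\Delta^{d_1}$ and $y\in\Delta^{d_2}$. Since $F(z)=(f(x,Ay),\,f(y,-A^\top x))$ depends on $z$ only through $(x,y)$ and $\Z_\ge$ is a Cartesian product, the condition $\langle F(z),z-z'\rangle\le 0$ for all $z'\in\Z_\ge$ decouples: taking $z'=(z_1',z_2)$ gives $\langle f(x,Ay),\,z_1-z_1'\rangle\le 0$ for all $z_1'\in\Delta^{d_1}_{\ge}$, and taking $z'=(z_1,z_2')$ gives $\langle f(y,-A^\top x),\,z_2-z_2'\rangle\le 0$ for all $z_2'\in\Delta^{d_2}_{\ge}$.

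Next I would simplify the first of these. Because $x\in\Delta^{d_1}$, a one-line computation gives $\langle f(x,Ay),x\rangle=x^\top Ay-(x^\top Ay)(x^\top\bm{1}_{d_1})=0$, hence $\langle f(x,Ay),z_1\rangle=R\langle f(x,Ay),x\rangle=0$, so the decoupled inequality becomes $\langle f(x,Ay),z_1'\rangle\ge 0$ for all $z_1'\in\Delta^{d_1}_{\ge}$. Each standard basis vector $e_i$ lies in $\Delta^{d_1}_{\ge}$ (as $\|e_i\|_1=1\ge 1$), so plugging in $z_1'=e_i$ yields $(Ay)_i-x^\top Ay\ge 0$ for every $i$, i.e. $\min_i (Ay)_i\ge x^\top Ay$. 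Since $x^\top Ay=\sum_i x_i (Ay)_i\ge \min_i(Ay)_i$, equality holds, so $x^\top Ay=\min_i (Ay)_i=\min_{x'\in\Delta^{d_1}}x'^\top Ay$; that is, $x$ is a best response to $y$ for the minimizing player. The symmetric argument on the $z_2$-block, using $\langle f(y,-A^\top x),y\rangle=0$ and testing against $e_j\in\Delta^{d_2}_{\ge}$, gives $(A^\top x)_j\le x^\top Ay$ for all $j$, hence $x^\top Ay=\max_j (A^\top x)_j=\max_{y'\in\Delta^{d_2}}x^\top Ay'$.

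Combining the two identities gives $\max_{y'\in\Delta^{d_2}}x^\top Ay'=x^\top Ay=\min_{x'\in\Delta^{d_1}}x'^\top Ay$, i.e. $\DualGap(x,y)=0$, so $(x,y)=g(z)$ is a Nash equilibrium of $A$. The whole argument is a routine unpacking of definitions; the only points needing a moment of care are the validity of decoupling the variational inequality over the product domain $\Z_\ge$, and the observation that the ``clipped'' sets $\Delta^{d_i}_{\ge}$ still contain the simplex vertices $e_i$, which is exactly what lets us read off the per-coordinate best-response inequalities from the enlarged feasible set. I do not expect a genuine obstacle here.
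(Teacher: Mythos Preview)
Your proof is correct. It takes a different, more elementary route than the paper. The paper's proof invokes the fixed-point characterization of a VI solution, $z=\Pi_{\Z_\ge}[z-\eta F(z)]$, and then plugs $z_1=z_2=z_3=z$ into the auxiliary Lemma~\ref{lemma:small distance implies approximate NE}, which bounds the duality gap of $g(z_3)$ by a quantity involving $\|z_1-z_3\|$ and $\|z_3-z_2\|$; with all these distances zero one reads off $\DualGap(g(z))=0$. Your argument instead works directly from the variational inequality: you decouple across the product $\Z_\ge=\Delta^{d_1}_\ge\times\Delta^{d_2}_\ge$, exploit the orthogonality $\langle f(x,\ell),x\rangle=0$ to reduce each block to $\langle f(\cdot,\cdot),z_i'\rangle\ge 0$ on $\Delta^{d_i}_\ge$, and then test against the simplex vertices $e_i$ (which lie in $\Delta^{d_i}_\ge$) to extract the coordinate-wise best-response inequalities. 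Your approach is fully self-contained and avoids the machinery of Lemma~\ref{lemma:small distance implies approximate NE}; the paper's approach has the virtue of reusing a lemma it needs anyway for the best-iterate rates, so the statement follows as a one-line corollary there.
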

\begin{proof}
    Since $z \in SOL(\+Z_\ge, F)$ is a solution, we have $z = \Pi_{\+Z_\ge}[z- \eta F(z)]$ for any $\eta > 0$. Then by \Cref{lemma:small distance implies approximate NE}, $\DualGap(g(z)) = 0$. This implies $(x, y) = g(z)$ is a Nash equilibrium. 
\end{proof}

In the following two lemmas,  we show that the {\em gradient operator} of a matrix game is always monotone, whereas its {\em  regret operator} (used in RM-type algorithm) is not monotone, or even pseudo monotone with respect to the solution set. 
\begin{lemma}[Gradient Operator is Monotone]
    For any matrix game $A \in \-R^{d_1 \times  d_2}$, its gradient operator $G: \+Z = \Delta^{d_1} \times  \Delta^{d_2} \rightarrow \-R^{d_{1}} \times \R^{d_{2}}$, defined as $G(z) = [Ay, -A^\top x]^\top$ for each $z = (x,y) \in \+Z$, is monotone.
\end{lemma}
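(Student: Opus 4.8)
The plan is to verify the definition of monotonicity in \eqref{eq:monotonicity} by a direct computation, exploiting the fact that $G$ is an affine operator whose linear part is block skew-symmetric. Concretely, writing $z=(x,y)$ and $z'=(x',y')$ in $\+Z$, we have $G(z)-G(z') = \big(A(y-y'),\,-A^\top(x-x')\big)$, which is linear in $z-z'$, so the pairing $\InAngles{G(z)-G(z'),\,z-z'}$ should reduce to a bilinear form in the increments $x-x'$ and $y-y'$ that collapses to zero.

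First I would expand this pairing block by block,
\[
\InAngles{G(z)-G(z'),\,z-z'} = \InAngles{A(y-y'),\,x-x'} + \InAngles{-A^\top(x-x'),\,y-y'}.
\]
Using the adjoint identity $\InAngles{Au,v} = \InAngles{u,A^\top v}$ on the first term rewrites it as $\InAngles{y-y',\,A^\top(x-x')}$, which exactly cancels the second term. Hence $\InAngles{G(z)-G(z'),\,z-z'} = 0 \ge 0$ for all $z,z'\in\+Z$; in fact $G$ is not merely monotone but its associated quadratic form vanishes identically, reflecting that its linear part is a pure skew (rotation-type) operator.

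I expect no real obstacle here: the only point requiring care is the sign convention in the definition $G(z)=[Ay,\,-A^\top x]^\top$ — it is precisely the minus sign in the second block (encoding that the $y$-player maximizes) that produces the cancellation, and without it the operator would fail to be monotone. This is the classical reason why gradient-type dynamics are well behaved on bilinear saddle points, and it should be contrasted with the regret operator $F$ of \eqref{eq:F}: there the extra normalization $g(\cdot)$ together with the subtraction of the $x^\top Ay\cdot\bm{1}$ terms breaks this skew structure, which is exactly why $F$ is neither monotone nor pseudomonotone with respect to $\SOL(\+Z_\ge,F)$, as claimed in the subsequent lemma.
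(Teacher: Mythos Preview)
Your proposal is correct and follows essentially the same approach as the paper: a direct computation showing $\InAngles{G(z)-G(z'),\,z-z'}=0$ for all $z,z'\in\+Z$. The paper expands the four cross terms explicitly before observing they cancel, whereas you first form $G(z)-G(z')=(A(y-y'),\,-A^\top(x-x'))$ and then apply the adjoint identity; the substance is identical.
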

\begin{proof}
    For any two point $z = (x, y)$ and $z' = (x', y')$, we have
    \[
    \InAngles{G(z)- G(z'), z - z'} = (x - x')^\top Ay - x^\top A (y - y') - (x - x')^\top Ay' + x'^\top A(y - y') = 0.
    \]
    Thus $G$ is monotone over $\+Z$.
\end{proof}

\begin{lemma}[Regret Operator can be Non-Monotone]
    There exists a matrix game $A \in \R^{d_1 \times d_2}$ whose regret operator $F: \Z_\ge = \Delta^{d_1}_\ge \times  \Delta^{d_2}_\ge \rightarrow \-R$, defined as for each $z = (Rx, Qy)$ with $x, y \in \+Z=\Delta^{d_1} \times  \Delta^{d_2}$
    \[
    F(z) = F((Rx, Qy)) = \begin{bmatrix}
    f(x, Ay) \\
    f(y, -A^\top x)
\end{bmatrix} = \begin{bmatrix}
    Ay - x^\top Ay \cdot \bm{1}_{d_1} \\
    -A^\top x + x^\top Ay \cdot \bm{1}_{d_2}
\end{bmatrix},
    \]
    is not monotone, or even pseudo monotone w.r.t the solution set $SOL(\+Z_\ge, F)$.
\end{lemma}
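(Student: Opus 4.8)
The plan is to exhibit a single small game that witnesses both failures simultaneously, and to do so I would take $A$ to be the $3\times 3$ Rock--Paper--Scissors matrix $A=\begin{bmatrix} 0 & 1 & -1 \\ -1 & 0 & 1 \\ 1 & -1 & 0 \end{bmatrix}$ (matching pennies, $2\times 2$, works just as well). Its unique Nash equilibrium is the uniform profile $z^\star_0=(x^\star,y^\star)$ with $x^\star=y^\star=\tfrac13\bm{1}_3$ and game value $0$, so $Ay^\star=\bm{0}$ and $A^\top x^\star=\bm{0}$. The first step is to record that for any scalars $a,b>0$ we have $F\big((ax^\star,\,by^\star)\big)=\big(f(x^\star,Ay^\star),\,f(y^\star,-A^\top x^\star)\big)=\bm{0}$; hence every point of the form $(ax^\star,by^\star)$ with $a,b\ge 1$ lies in $\Z_{\ge}$ and solves $\VI(\Z_{\ge},F)$ trivially. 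In particular I would single out the solution $z^\star:=(2x^\star,\,y^\star)\in\SOL(\Z_{\ge},F)$, noting that $\|2x^\star\|_1=2\ge1$ and $\|y^\star\|_1=1\ge1$.

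Next I would compute $\langle F(z),z-z^\star\rangle$ for an arbitrary $z=(Rx,Qy)\in\Z_{\ge}$ with $(x,y)=g(z)$. Using the elementary identity $\langle f(u,\ell),u\rangle=u^\top\ell-(u^\top\ell)(u^\top\bm{1})=0$ valid for any $u$ in a simplex — which kills the ``diagonal'' contributions $R\langle f(x,Ay),x\rangle$ and $Q\langle f(y,-A^\top x),y\rangle$ — together with $Ay^\star=\bm{0}$ and $A^\top x^\star=\bm{0}$, the remaining cross terms collapse to
\[
\langle F(z),z-z^\star\rangle=-2\,\langle f(x,Ay),x^\star\rangle-\langle f(y,-A^\top x),y^\star\rangle=2\,x^\top Ay-x^\top Ay=x^\top Ay.
\]
(Before specializing $A$, the same computation gives $\langle F(z),z-z^\star\rangle=-R^\star(x^{\star\top}Ay-x^\top Ay)-Q^\star(-x^\top Ay^\star+x^\top Ay)$ for $z^\star=(R^\star x^\star,Q^\star y^\star)$; here $R^\star=2$, $Q^\star=1$, $x^{\star\top}A=\bm{0}^\top$, $Ay^\star=\bm{0}$.) Finally I would instantiate $z:=(e_1,e_3)\in\Z_{\ge}$, for which $g(z)=(e_1,e_3)$ and $e_1^\top A e_3=A_{13}=-1<0$, so $\langle F(z),z-z^\star\rangle=-1<0$. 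This contradicts the pseudo monotonicity of $F$ with respect to $\SOL(\Z_{\ge},F)$, which would require $\langle F(z),z-z^\star\rangle\ge0$ for this particular solution $z^\star$.

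To get the (formally weaker) conclusion that $F$ is not monotone either, I would invoke the standard fact that monotonicity together with a nonempty solution set implies pseudo monotonicity with respect to that set: if $F$ were monotone then for every $z^\star\in\SOL(\Z_{\ge},F)$ and every $z\in\Z_{\ge}$ we would have $\langle F(z),z-z^\star\rangle\ge\langle F(z^\star),z-z^\star\rangle\ge0$, the last inequality because $z^\star$ solves the VI. Since $\SOL(\Z_{\ge},F)$ is nonempty (it contains $z^\star_0$ and all its admissible scalings, as shown) but pseudo monotonicity fails, $F$ cannot be monotone.

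There is no real computational obstacle here; the one point that must be chosen with care is the solution $z^\star$ against which the inner product is tested. If one only used an $\ell_1$-normalized Nash profile (i.e.\ $R^\star=Q^\star$), the expression reduces to $(R^\star-Q^\star)\,x^\top Ay=0$ and no violation appears — it is exactly the scaling freedom inside $\Z_{\ge}$, which enlarges $\SOL(\Z_{\ge},F)$ to include points $(ax^\star,by^\star)$ with possibly $a\ne b$, that produces the failure of pseudo monotonicity. One should also just double-check that all chosen points genuinely lie in $\Z_{\ge}=\Delta^{d_1}_{\ge}\times\Delta^{d_2}_{\ge}$, which is immediate since $\|e_i\|_1=1$, $\|2x^\star\|_1=2$, and $\|y^\star\|_1=1$ are all $\ge1$.
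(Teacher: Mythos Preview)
Your proof is correct and takes a genuinely different route from the paper's. The paper picks the $2\times 2$ matrix $A=\begin{bmatrix}3&1\\4&5\end{bmatrix}$ with pure strict Nash equilibrium $(e_1,e_1)$, sets $z=(e_2,e_2)$ and $z'=(e_1,5e_1)\in\SOL(\Z_\ge,F)$, and simply verifies numerically that $\langle F(z)-F(z'),z-z'\rangle=-4<0$ and $\langle F(z),z-z'\rangle=-1<0$. You instead use Rock--Paper--Scissors, exploit the vanishing of $Ay^\star$ and $A^\top x^\star$ to show $F$ annihilates every point $(ax^\star,by^\star)$, and then derive the clean identity $\langle F(z),z-(2x^\star,y^\star)\rangle=x^\top Ay$, which is immediately seen to be negative for the right choice of $(x,y)$; non-monotonicity then follows from the standard implication ``monotone $+$ nonempty solution set $\Rightarrow$ pseudo monotone w.r.t.\ the solution set.'' The paper's argument is shorter and entirely self-contained (two explicit evaluations), while yours is more conceptual: the formula makes transparent that it is precisely the freedom to choose an \emph{asymmetrically} scaled solution $(ax^\star,by^\star)$ with $a\ne b$ that breaks pseudo monotonicity, and that with a balanced scaling ($a=b$) the expression would vanish identically---a structural insight the paper's bare computation does not surface. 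Both approaches rely on the same mechanism (an asymmetrically scaled Nash point in $\SOL(\Z_\ge,F)$), but yours explains \emph{why} this mechanism works.
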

\begin{proof}
    We consider a matrix $A = [[3,1], [4, 5]]$ whose unique Nash equilibrium is $(x^*, y^*) = ([1,0]^\top, [1,0]^\top)$. We define 
    \begin{itemize}
        \item $z = (x,y) = ([0,1]^\top, [0,1]^\top)$;
        \item $z' = (x^*, 5 y^*)$. 
    \end{itemize}
    Then we have
    \[
    \InAngles{F(z)- F(z'), z - z'} = -4 < 0.
    \]
    Thus $F$ is not monotone over $\+Z_\ge$. Moreover, $z'$ belongs to $SOL(\+Z_\ge, F)$ and 
    \[
    \InAngles{F(z), z- z'} = -1 < 0.
    \]
    Thus $F$ is not pseudo monotone with respect to the solution set $SOL(\+Z_\ge, F)$. This also proves the equivalent claim that not all solutions in $SOL(\+Z_\ge, F)$ satisfies the Minty condition~\eqref{eq:minty condition}.
\end{proof}
}

% \tb{
% Our first lemma shows that not all solutions in $SOL(\cZ_{\ge},F)$ satisfy the Minty condition~\eqref{eq:minty condition}.
% \begin{lemma}
%     $\exists \; z' \in  SOL(\cZ_{\ge},F), \exists z \in \cZ_{\ge}, \langle F(z),z-z' \rangle <0$.
% \end{lemma}
% \begin{proof}
%     Let $A = [[3,1],[4,5]]$ whose unique Nash equilibrium is $z^* = (x^*,y^*)$ with $x^* = [1,0],y^*=[1,0]$. Let $z' = (a x^*,b y^*)$ with $1 \leq a < b/4$. Then for $z=(x,y)$ with $x=[0,1],y=[0,1]$ we have $\langle F(z),z-z' \rangle = 4a - b <0$, so that $z'$ does not satisfy the Minty condition, despite 
% \end{proof}
% }
The rest of this section consists in the proof of Proposition \ref{prop:limit points}.
\begin{proposition}
\label{prop:limit points}
    Let $\+S \subseteq \R^n$ be a compact convex set and $\{z^t\in \+S\}$ be a sequence of points in $\+S$ such that  $\lim_{t\rightarrow\infty}\InNorms{z^t - z^{t+1}} = 0$, then one of the following is true.
    \begin{itemize}
        \item[1.] The sequence $\{z^t\}$ has infinite limit points.
        \item[2.] The sequence $\{z^t\}$ has a unique limit point.
    \end{itemize}
\end{proposition}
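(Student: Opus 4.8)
The plan is to argue the contrapositive: assuming $\{z^t\}$ has only finitely many limit points, I would show it must then have exactly one. Since $\+S$ is compact, the sequence $\{z^t\}$ is bounded and hence has at least one limit point, so the set $L$ of all subsequential limits of $\{z^t\}$ is non-empty. It therefore suffices to derive a contradiction from the assumption $L = \{p_1,\dots,p_k\}$ with $k \ge 2$. I would fix the separation scale $\delta := \tfrac{1}{3}\min_{i \ne j}\InNorms{p_i - p_j} > 0$, so that the balls $B(p_i,\delta)$ are pairwise disjoint and any point of $B(p_i,\delta)$ has distance more than $2\delta$ from every $p_j$ with $j \ne i$.

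The first step is to show that all but finitely many iterates lie in $\bigcup_i B(p_i,\delta)$. If this failed, infinitely many $z^t$ would have distance at least $\delta$ from $L$; by compactness of $\+S$ this subsequence would itself have a convergent sub-subsequence whose limit $q$ is a limit point of $\{z^t\}$ --- hence $q \in L$ --- yet satisfies $\mathrm{dist}(q,L) \ge \delta$, a contradiction. So there is $T_0$ such that for all $t \ge T_0$ the iterate $z^t$ belongs to exactly one ball $B(p_{i(t)},\delta)$, which defines an index $i(t) \in \{1,\dots,k\}$.

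The second step uses the hypothesis $\InNorms{z^t - z^{t+1}} \to 0$: pick $T_1 \ge T_0$ with $\InNorms{z^t - z^{t+1}} < \delta$ for all $t \ge T_1$. If $i(t) \ne i(t+1)$ for some $t \ge T_1$, then by the triangle inequality $\InNorms{z^t - z^{t+1}} \ge \InNorms{p_{i(t)} - p_{i(t+1)}} - \InNorms{z^t - p_{i(t)}} - \InNorms{z^{t+1} - p_{i(t+1)}} > 3\delta - \delta - \delta = \delta$, a contradiction. Hence $i(t)$ is constant for $t \ge T_1$, say equal to $i^\star$. Then $z^t \in B(p_{i^\star},\delta)$ for all $t \ge T_1$, so every limit point of $\{z^t\}$ lies in $\overline{B(p_{i^\star},\delta)}$, which intersects $L$ only at $p_{i^\star}$; thus $L = \{p_{i^\star}\}$, contradicting $k \ge 2$. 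This closes the dichotomy.

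I do not expect a serious obstacle; the only step needing care is the first one, where I must invoke compactness of $\+S$ together with the fact that $L$ is \emph{exactly} the set of subsequential limits in order to conclude that the iterates eventually cannot wander far from $L$. The rest is triangle-inequality bookkeeping. An alternative route is to quote the standard fact that the set of limit points of a precompact sequence satisfying $\InNorms{z^t - z^{t+1}} \to 0$ is connected, while a finite set with at least two elements is disconnected; but the self-contained argument above is short enough to present directly.
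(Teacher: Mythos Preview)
Your proof is correct and follows essentially the same approach as the paper: assume a finite set $L$ of at least two limit points, separate them by disjoint balls, and use the vanishing step size together with compactness to derive a contradiction. The paper reaches the contradiction by exhibiting an extra limit point in the annulus $B(\hat z_1,\epsilon/2)\setminus B(\hat z_1,\epsilon/4)$ around one chosen limit point, whereas you first show the tail of the sequence lives in $\bigcup_i B(p_i,\delta)$ and then that it is trapped in a single ball; these are minor organizational variants of the same idea.
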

\begin{proof}
    We denote a closed Euclidean ball in $\R^n$ as $B(x, \delta) := \{ x'\in\R^n: \InNorms{x' - x} \le \delta \}$. For the sake of contradiction, assume that neither of the two conditions are satisfied. This implies that the sequence $\{z^t\}$ has a finite number ($\ge 2$) of limit points. Then there exists $\epsilon > 0$ and two distinct limit points $\hz_1$ and $\hz_2$ such that
    \begin{itemize}
        \item[1.] $\InNorms{\hz_1- \hz_2} \ge \epsilon$;
        \item[2.] $\hz_1$ is the only limit point of $\{z^t\}$ within $B(\hz_1, \epsilon/2) \cap \+S$;
        \item[3.] $\hz_2$ is the only limit point of $\{z^t\}$ within $B(\hz_2, \epsilon/2) \cap \+S$.
    \end{itemize}
    Given that $\lim_{t\rightarrow\infty}\InNorms{z^t - z^{t+1}} = 0$, there exists $T > 0$ such that for any $t > T$, $\InNorms{z^t - z^{t+1}} \le \frac{\epsilon}{4}$. Now since $\hz_1$ and $\hz_2$ are limit points of $\{z^t\}$, there exists sub-sequences $\{z^{t_k}\}$ and $\{z^{t_j}\}$ that converges to $\hz_1$ and $\hz_2$, respectively. This means that there are infinite points from $\{z^{t_k}\}$ within $B(\hz_1, \epsilon/4) \cap \+S$ and infinite points from $\{z^{t_j}\}$ within $B(\hz_2, \epsilon/4) \cap \+S$. However, for $t > T$, there must also be infinite points from $\{z^t\}$ within the region  $(B(\hz_1, \epsilon/2)-B(\hz_1, \epsilon/4))\cap \+S$. This implies there exists another limit point within $B(\hz_1, \epsilon/2) \cap \+S$ and contradicts the assertion that $\hz_1$ is the only limit point within $B(\hz_1, \epsilon/2) \cap \+S$.
\end{proof}

\section{Proof of Theorem \ref{th:rmp-convergence-to-pure-NE}}
We remark that \Cref{th:rmp-convergence-to-pure-NE}'s proof largely follows from \citep{meng2023efficient} with the key observation that the properties of strict NE suffices to adapt the analysis in \citep{meng2023efficient} to zero-sum games (not strongly-convex-strongly-concave games) with a strict NE.

\begin{proof}[Proof of Theorem \ref{th:rmp-convergence-to-pure-NE}]
    We will use an equivalent update rule of \rmp{} (\Cref{RM+}) proposed by \cite{farina2021faster}: 
    for all $t \ge 0$, $z^{t+1} = \Pi_{\R^{d_1 + d_2}_{\ge 0}} \InBrackets{ z^t - \eta F(z^t)}$ where we use $z^{t}$ to denote $(R^{t}_x, R^{t}_y)$  and $F$ as defined in \eqref{eq:F}. The step size $\eta > 0$ only scales the $z^t$ vector and produces the same strategies $\{(x^t, y^t) \in \Delta^{d_1} \times \Delta^{d_2}\}$ as \Cref{RM+}. In the proof, we also use the following property of the operator $F$, \tb{which is weaker than Lipschitzness.}
    \begin{claim}[Adapted from Lemma 5.7 in \citep{farina2023regret}]
        For a matrix game $A \in \R^{d_1 \times d_2}$, let $L_1 = \InNorms{A}_{op}\sqrt{6 \max\{d_1, d_2\}}$ with $\InNorms{A}_{op}:= \sup\{ \InNorms{Av}_2/ \InNorms{v}_2: v\in \R^{d_2}, v\ne \boldsymbol{0} \}$. Then for any $z, z' \in \R^{d_1 + d_2}_{\ge 0}$, it holds that $\InNorms{F(z) - F(z')} \le L_1\InNorms{g(z) - g(z')}$. 
    \end{claim}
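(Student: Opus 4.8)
The plan is to reduce the claim to a Lipschitz estimate for $F$ viewed as a function of the \emph{normalized} strategy pair. The crucial observation is that $F(z)$ depends on $z\in\R^{d_1+d_2}_{\ge 0}$ only through $g(z)$: in the closed form \eqref{eq:F}, the vectors $x$ and $y$ are exactly the two blocks of $g(z)$, so $F(z)=\widetilde F(g(z))$ for the map $\widetilde F:\cZ\to\R^{d_1+d_2}$ given by $\widetilde F(x,y)=\big(Ay-(x^\top Ay)\bm{1}_{d_1},\ -A^\top x+(x^\top Ay)\bm{1}_{d_2}\big)$. Hence it suffices to prove that $\widetilde F$ is $L_1$-Lipschitz on $\cZ=\Delta^{d_1}\times\Delta^{d_2}$: once that is established, $\|F(z)-F(z')\|=\|\widetilde F(g(z))-\widetilde F(g(z'))\|\le L_1\|g(z)-g(z')\|$ for all $z,z'\in\R^{d_1+d_2}_{\ge 0}$. (This is also the mechanism behind the stated $\cZ_{\ge}$-Lipschitzness of $F$ and the non-Lipschitzness on the full orthant: one composes $\widetilde F$ with $g$, which is well-behaved on $\cZ_{\ge}$ but varies arbitrarily fast near the origin, which is precisely why the estimate must be phrased in terms of $\|g(z)-g(z')\|$ rather than $\|z-z'\|$.)

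For the Lipschitz estimate itself, I would fix $(x,y),(x',y')\in\cZ$, set $\delta:=x^\top Ay-x'^\top Ay'$, and use the decomposition
\[
\widetilde F(x,y)-\widetilde F(x',y')=\begin{bmatrix}A(y-y')\\ -A^\top(x-x')\end{bmatrix}+\delta\begin{bmatrix}-\bm{1}_{d_1}\\ \bm{1}_{d_2}\end{bmatrix}.
\]
The squared norm of the first (linear) term is $\|A(y-y')\|^2+\|A^\top(x-x')\|^2\le \InNorms{A}_{op}^2\big(\|x-x'\|^2+\|y-y'\|^2\big)$, directly from the definition of the operator norm. For the scalar $\delta$, I would use the telescoping identity $\delta = x^\top A(y-y') + (x-x')^\top Ay'$, together with the simplex bounds $\|x\|_2\le\|x\|_1=1$ and $\|y'\|_2\le\|y'\|_1=1$ and Cauchy--Schwarz, to get $|\delta|\le \InNorms{A}_{op}\big(\|x-x'\|+\|y-y'\|\big)$. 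Finally $\big\|(-\bm{1}_{d_1},\bm{1}_{d_2})\big\|=\sqrt{d_1+d_2}\le\sqrt{2\max\{d_1,d_2\}}$.

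Assembling these pieces, a plain triangle inequality combined with $(\|x-x'\|+\|y-y'\|)^2\le 2(\|x-x'\|^2+\|y-y'\|^2)$ already yields a Lipschitz constant of order $\sqrt{\max\{d_1,d_2\}}\,\InNorms{A}_{op}$, which is all that is qualitatively needed downstream. To obtain the sharp constant $L_1=\InNorms{A}_{op}\sqrt{6\max\{d_1,d_2\}}$ stated in the claim, I would instead expand $\|\widetilde F(x,y)-\widetilde F(x',y')\|^2$ directly and carefully track the $\bm{1}$-directional cross terms $\delta\cdot\bm{1}^\top A(y-y')$ and $\delta\cdot(x-x')^\top A\bm{1}$ that appear with opposite signs in the two blocks; this bookkeeping is exactly the one carried out in \citep[Lemma 5.7]{farina2023regret}, which we follow. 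The main (and essentially only) point requiring care is keeping every estimate anchored to $\|g(z)-g(z')\|$, i.e.\ to distances between normalized strategies in $\cZ$, and never to $\|z-z'\|$ — because $\widetilde F$ (equivalently $g$) is genuinely non-Lipschitz near the origin of the positive orthant; once the reduction to $\widetilde F$ on $\cZ$ is in place, all the remaining inequalities are elementary.
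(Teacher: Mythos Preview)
The paper does not supply its own proof of this claim; it merely states it with attribution to Lemma~5.7 of \cite{farina2023regret} and uses it as a black box. Your proposal is sound and matches what one expects that cited lemma to do: the essential step is the factorization $F=\widetilde F\circ g$, which you identify correctly, and your decomposition of $\widetilde F(x,y)-\widetilde F(x',y')$ into the linear part $[A(y-y');-A^\top(x-x')]$ plus the rank-one term $\delta\,[-\bm 1_{d_1};\bm 1_{d_2}]$ is exactly right. The elementary bounds you give already deliver the correct order $\InNorms{A}_{op}\sqrt{\max\{d_1,d_2\}}$, and, like the paper, you defer the sharp constant $\sqrt{6\max\{d_1,d_2\}}$ to the same external reference, so there is nothing further to compare.

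One small inaccuracy worth noting for when you do the explicit bookkeeping: in the expanded squared norm, the two cross terms are $-2\delta\,\bm 1_{d_1}^\top A(y-y')$ and $-2\delta\,(x-x')^\top A\bm 1_{d_2}$, which carry the \emph{same} sign (both come with a factor $-2\delta$), not opposite signs as you suggest. This does not affect the validity of your argument, since you defer that computation anyway, but it means no cancellation occurs there and the sharp constant must come from a more careful application of Young/AM--GM rather than from sign cancellation.
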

    
    For any $t \ge 1$, since $z^{t+1} = \Pi_{\R^{d_1 + d_2}_{\ge 0}} \InBrackets{ z^t - \eta F(z^t)}$, we have for any $z \in \R^{d_1 + d_2}_{\ge 0}$
    \begin{align}
        &\InAngles{z^t - \eta F(z^t) - z^{t+1}, z - z^{t+1}} \le 0. \notag\\
        \Rightarrow& \InAngles{z^t - z^{t+1}, z - z^t} \le -\InAngles{z^t - z^{t+1}, z^t - z^{t+1}} + \InAngles{\eta F(z^t), z - z^{t+1}}.\label{eq:thm1-1}
    \end{align}
    By three-point identity, we have
    \begin{align*}
        \InNorms{z^{t+1} - z}^2 &= \InNorms{z^t - z}^2 + \InNorms{z^{t+1} - z^t}^2 + 2 \InAngles{z^t - z^{t+1}, z - z^t}\\
        &\le \InNorms{z^t - z}^2 - \InNorms{z^{t+1} - z^t}^2  + 2\InAngles{\eta F(z^t), z - z^{t+1}}.\tag{by \eqref{eq:thm1-1}}
    \end{align*}
    We define $k^t := \argmin_{k \ge 1} \InNorms{z^t - k z^\star}$. With $z = k^t z^\star$, the above inequality is equivalent to 
    \begin{align*}
        &\InNorms{z^{t+1} - k^t z^\star}^2 + 2\eta\InAngles{F(z^\star), z^{t+1}} \\
        &\le \InNorms{z^t - k^t z^\star}^2 + 2\eta \InAngles{F(z^\star), z^t} - \InNorms{z^{t+1} - z^t}^2  + 2\eta \InAngles{ F(z^t) - F(z^\star), z^t - z^{t+1}} \\
        & \quad +2\eta \InAngles{F(z^t), k^t z^\star} \tag{We use $\InAngles{F(z^t), z^t} = 0$} \\
        &\le \InNorms{z^t - k^t z^\star}^2 + 2\eta \InAngles{F(z^\star), z^t} + \eta^2 \InNorms{F(z^t) - F(z^\star)}^2 +  2\eta \InAngles{F(z^t), k^t z^\star} \tag{$-a^2+2ab\le b^2$}\\
        &\le \InNorms{z^t - k^t z^\star}^2 + 2\eta \InAngles{F(z^\star), z^t} + \eta^2 L_{1}^2 \InNorms{(x^t, y^t) - z^\star}^2 +  2\eta \InAngles{F(z^t), k^t z^\star}. 
    \end{align*}
    Since $z^\star = (x^\star, y^\star)$ is a strict Nash equilibrium, by \Cref{lemma:pure NE gap to distance}, there exists a constant $c > 0$ such that 
    \begin{align*}
        2\eta \InAngles{F(z^t), k^t z^\star} &= - 2\eta k^t  \InParentheses{(x^t)^\top Ay^\star - (x^\star)^\top Ay^t} \\
        & \le -2\eta c k^t \InNorms{(x^t, y^t) - z^\star}^2 \\
        & \le -2\eta c \InNorms{(x^t, y^t) - z^\star}^2,
    \end{align*}
    where in the last inequality, we use $k^t \ge 1$ by definition. Thus, combining the above two inequalities gives
    \begin{align*}
        \InNorms{z^{t+1} - k^t z^\star}^2 + 2\eta\InAngles{F(z^\star), z^{t+1}} \le \InNorms{z^t - k^t z^\star}^2 + 2\eta \InAngles{F(z^\star), z^t} +(\eta^2 L_1^2 - 2\eta c) \InNorms{(x^t, y^t) - z^\star}^2.
    \end{align*}
    Using definition of $k^{t+1}$ and choosing $\eta < \frac{c }{L_1^2}$, we have $2\eta c-\eta^2 L_1^2 > 0$ and 
    \[
    (2\eta c-\eta^2 L_1^2) \InNorms{(x^t, y^t) - z^\star}^2 \le \InNorms{z^t - k^t z^\star}^2 + 2\eta \InAngles{F(z^\star), z^t} - \InNorms{z^{t+1} - k^{t+1} z^\star}^2 - 2\eta\InAngles{F(z^\star), z^{t+1}}.
    \]
    Telescoping the above inequality for $t \in [1, T]$ gives 
    \begin{align*}
        (2\eta c-\eta^2 L_1^2) \sum_{t=1}^T \InNorms{(x^t, y^t) - z^\star}^2 \le \InNorms{z^1 - k^1 z^\star}^2 + 2\eta \InAngles{F(z^\star), z^1} < +\infty
    \end{align*}
    It implies that the sequence $\{(x^t, y^t)\}$ converges to $z^\star$. Moreover, it also gives a $O(\frac{1}{\sqrt{T}})$ best-iterate convergence rate with respect to $\InNorms{(x^t, y^t) - z^\star}^2$.
\end{proof}
\begin{lemma}
\label{lemma:pure NE gap to distance}
    If a matrix game $A$ has a strict Nash equilibrium $(x^\star, y^\star)$, then there exists a constant $c > 0$ such that for any $(x, y) \in \Delta^{d_1} \times \Delta^{d_2}$,
    \[
    x^\top A y^\star - (x^\star)^\top A y \ge c \InNorms{(x,y)-(x^\star,y^\star)}^2.
    \]
\end{lemma}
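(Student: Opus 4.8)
The plan is to exploit the fact that a strict Nash equilibrium of a matrix game must be \emph{pure}. Indeed, since the map $x \mapsto x^\top A y^\star$ is linear over $\Delta^{d_1}$, its set of minimizers is the convex hull of the vertices achieving the minimum; uniqueness of $x^\star$ therefore forces $x^\star = e_{i^\star}$, the $i^\star$-th standard basis vector, for a single index $i^\star$, and likewise $y^\star = e_{j^\star}$. Strictness moreover gives strict inequalities between the relevant coordinates, so I would define the positive gaps
\[
\gamma_x := \min_{k \ne i^\star}\InParentheses{(A y^\star)_k - (A y^\star)_{i^\star}} > 0, \qquad \gamma_y := \min_{l \ne j^\star}\InParentheses{(A^\top x^\star)_{j^\star} - (A^\top x^\star)_l} > 0,
\]
where positivity uses that $x^\star$ is the \emph{unique} minimizer of $x\mapsto x^\top Ay^\star$ and $y^\star$ the \emph{unique} maximizer of $y \mapsto (x^\star)^\top Ay$.

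Next I would decompose the target quantity as
\[
x^\top A y^\star - (x^\star)^\top A y = (x - x^\star)^\top A y^\star + (x^\star)^\top A (y^\star - y),
\]
and bound each term from below. Expanding the first term over coordinates and using $x \in \Delta^{d_1}$ gives $(x-x^\star)^\top Ay^\star = \sum_{k \ne i^\star} x_k\InParentheses{(Ay^\star)_k - (Ay^\star)_{i^\star}} \ge \gamma_x (1 - x_{i^\star})$, and symmetrically $(x^\star)^\top A(y^\star - y) \ge \gamma_y(1 - y_{j^\star})$. Hence
\[
x^\top A y^\star - (x^\star)^\top A y \ \ge\ \gamma_x (1 - x_{i^\star}) + \gamma_y (1 - y_{j^\star}).
\]

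To close the argument I would bound the squared distance from above: on the simplex, $\InNorms{x - e_{i^\star}}^2 = (1 - x_{i^\star})^2 + \sum_{k\ne i^\star} x_k^2 \le 2(1 - x_{i^\star})^2 \le 2(1 - x_{i^\star})$, using $\sum_{k \ne i^\star} x_k^2 \le \InParentheses{\sum_{k\ne i^\star} x_k}^2 = (1 - x_{i^\star})^2$ and $0 \le 1 - x_{i^\star} \le 1$, and similarly for $y$. This yields $\InNorms{(x,y) - (x^\star, y^\star)}^2 \le 2(1 - x_{i^\star}) + 2(1 - y_{j^\star})$. Combining with the previous display gives the claim with $c := \tfrac12 \min\{\gamma_x, \gamma_y\}$. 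The whole argument is elementary; the only genuine step is the observation that strictness forces $x^\star$ and $y^\star$ to be vertices, which is precisely what makes the linear lower bound available, so I do not expect a real obstacle beyond bookkeeping with the $\min$/$\max$ conventions of the game.
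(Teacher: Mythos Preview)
Your proposal is correct and follows essentially the same route as the paper: both reduce to the observation that a strict Nash equilibrium is pure, define the minimum suboptimality gap for each player, lower-bound the left-hand side by $\gamma_x(1-x_{i^\star}) + \gamma_y(1-y_{j^\star})$, and then upper-bound $\InNorms{x-e_{i^\star}}^2$ by $2(1-x_{i^\star})$, arriving at $c=\tfrac12\min\{\gamma_x,\gamma_y\}$. The only cosmetic difference is that the paper bounds $\InNorms{x-e_{i^\star}}^2$ coordinatewise via $x_i^2 \le x_i$, whereas you use $\sum_{k\ne i^\star} x_k^2 \le \bigl(\sum_{k\ne i^\star} x_k\bigr)^2$; both are equally elementary and lead to the same constant.
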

\begin{proof}
    Since $(x^\star, y^\star)$ is strict, it is also the unique and pure Nash equilibrium. we denote the unit vector $e_{i^\star} = x^\star$ with $i^\star \in [d_1]$\footnote{Given a positive integer $d$, we denote the set $\{1,2,\dots,d\}$ by $[d]$.} and constant $c_1:= \max_{i \ne i^\star} e_i^\top Ay^\star - e_{i^\star}^\top Ay^\star > 0$. Then, by definition, we have
    \begin{align*}
        x^\top A y^\star -  (x^\star)^\top Ay^\star &= \sum_{i\ne i^\star} x[i] \InParentheses{ e_i^\top Ay^\star - e_{i^\star}Ay^\star} \\
        & \ge c_1  \sum_{i\ne i^\star} x[i]\\
        & = \frac{c_1}{2} \sum_{i\ne i^\star} x[i] + \frac{c_1}{2}(1 - x[i^\star]) \tag{$\sum_{i\in[d_1]}x[i]=1$} \\
        & \ge \frac{c_1}{2} \sum_{i\ne i^\star} x[i]^2 + \frac{c_1}{2}(1 - x[i^\star])^2 \tag{$x[i] \in [0,1]$} \\
        & = \frac{c_1}{2}\InNorms{x - x^\star}^2.
    \end{align*}
    Similarly, denote $e_{j^\star} = x^\star$ with $j^\star \in [d_2]$ and constant $c_2 = (x^\star)^\top Ay^\star - \max_{j \ne j^\star}(x^\star)^\top A e_j$, we get $(x^\star)^\top A y^\star - \max_{j \ne j^\star}(x^\star)^\top A y\ge \frac{c_2}{2} \InNorms{y - y^\star}^2$. Combining the two inequalities above and let $c = \frac{1}{2}\min(c_1, c_2)$, we have $ x^\top A y^\star - (x^\star)^\top A y \ge c \InNorms{(x,y)-(x^\star,y^\star)}^2$ for all $(x,y)$.
\end{proof}

\section{Proofs for Section \ref{sec:exrmp-convergence}}\label{app:convergence-exrmp}
\subsection{Proofs for Section \ref{sec:exrmp-convergence}}

\begin{proof}[Proof of \Cref{Fact: MVI}]
    Let $z^\star = (x^\star, y^\star) \in \Delta^{d_1} \times \Delta^{d_2}$ be a Nash equilibrium of the matrix game. For any $z = (Rx, Qy) \in \Z_{\ge}$, using $\InAngles{F(z), z} = 0$ and the definition of Nash equilibrium, we have $\InAngles{F(z), z - a z^\star} = - \InAngles{F(z), a z^\star} = a( x^\top A y^\star -  (x^\star)^\top A y ) \ge 0.$
    % \begin{align*}\InAngles{F(z), z - a z^\star} = - \InAngles{F(z), a z^\star} = a( x^\top A y^\star -  (x^\star)^\top A y ) \ge 0.\qedhere\end{align*}   
\end{proof}

The proof of \cref{lemma: limit point induces Nash} closely follows that of Theorem 12.1.11 in \cite{facchinei2003finite}
\begin{proof}[Proof of Lemma \ref{lemma: limit point induces Nash}]
    From \Cref{lemma: distance to NE is decreasing}, we know 
    \[(
    1 - \eta^2 L_F^2 ) \sum_{t=0}^\infty \InNorms{ z^{t+\half} - z^t}^2 \le \InNorms{z^0 - z^\star}^2.
    \] 
    Since $1 - \eta^2 L_F^2 > 0$,  this implies 
    \[ 
    \lim_{t \rightarrow \infty} \InNorms{z^{t+\half} - z^t} = 0.
    \]
    Using \Cref{proposition: ExRM+ stablity of iterates}, we know that $\InNorms{z^{t+1} - z^{t+\frac{1}{2}}} \le \eta L_F \InNorms{z^{t+\frac{1}{2}} - z^t}$. Thus 
    \[
    \lim_{t \rightarrow \infty} \InNorms{z^{t+1} - z^t} \le \lim_{t \rightarrow \infty} \InNorms{z^{t+1} - z^{t+\frac{1}{2}}} + \InNorms{z^{t+\frac{1}{2}} - z^t}= \lim_{t \rightarrow \infty} (1+\eta L_F) \InNorms{z^{t+\frac{1}{2}} - z^t} = 0.
    \]
    If $\hz$ is the limit point of the subsequence $\{z^t : t \in \kappa\}$, then we must have 
    \[ 
    \lim_{t(\in \kappa) \rightarrow \infty} z^{t+\half} = \hz.
    \]
    By the definition of $z^{t+\half}$ in \Cref{EXRM} and by the continuity of $F$ and of the projection, we get 
    \[
        \hz = \lim_{t(\in \kappa) \rightarrow \infty} z^{t+\half} = \lim_{t(\in \kappa) \rightarrow \infty} \Pi_{\Z_{\ge }}\InBrackets{z^t - \eta F(z^t)} = \Pi_{\Z_{\ge }}\InBrackets{\hz - \eta F(\hz)}.
    \]
    This shows that $\hz \in SOL(\Z_{\ge}, F)$. 

    \tb{For the third claim, suppose for the sake of contradiction that the sequence $\{\DualGap(g(z^t))\}$ does not converge to $0$. Then we know that there exists a subsequence $\{z^t: t\in \kappa\}$ that converges to $\hz$ and $\DualGap(g(\hz)) > 0$. But since $\hz$ is a limit point of $\{z^t\}$, we have $\hz \in SOL(\Z_\ge, F)$. Then $g(\hz)$ is a Nash equilibrium and $\DualGap(g(\hz)) = 0$. This gives a contradiction and we conclude that $\lim_{t\rightarrow\infty} \DualGap(g(z^t)) = 0$.
    }
    % Moreover, by $\hz = \Pi_{\Z_{\ge }}\InBrackets{\hz - \eta F(\hz)}$ and \Cref{lemma:small distance implies approximate NE}, we conclude $(\hx, \hy)=g(\hz)\in \Delta^{d_1} \times \Delta^{d_2}$ is a Nash equilibrium of $A$.
\end{proof}

\subsection{Proofs for Section \ref{sec:exrmp-last-iterate-iterate}}

\begin{proof}[Proof of \Cref{proposition: colinear limit point converges}]
    Denote by $\{z_t\}_{t \in \kappa}$ a subsequence of $\{z^t\}$ that converges to $\hz$. By \Cref{Fact: MVI}, the Minty condition holds for $\hz$, so by \Cref{lemma: distance to NE is decreasing}, $\{\InNorms{z^t - \hz}\}$ is monotonically decreasing and therefore converges.  Since $\lim_{t \rightarrow \infty} \InNorms{z^t - \hz} = \lim_{t(\in \kappa) \rightarrow \infty} \InNorms{z^t - \hz} = 0,$ $\{z^t\}$ converges to $\hz$.
\end{proof}

\begin{proof}[Proof of Lemma \ref{lemma:Structure of Limit Points}]
    Let $\{k_i \in \mathbb{Z}\}$ be an increasing sequence of indices such that $\{z^{k_i}\}$ converges to $\hz$ and $\{l_i \in \mathbb{Z}\}$ be an increasing sequence of indices such that $\{z^{l_i}\}$ converges to $\tz$. Let $\sigma : \{k_i\} \rightarrow \{l_i\}$ be a mapping that always maps $k_i$ to a larger index in $\{l_i\}$, i.e. $\sigma(k_i) > k_i$. Such a mapping clearly exists. Since \Cref{lemma: distance to NE is decreasing} applies to $a z^\star$ for any $a \ge 1$, we get 
    \[
    \InNorms{a z^\star -\hz}^2 = \lim_{i \rightarrow \infty} \InNorms{a z^\star - z^{k_i}}^2 \ge \lim_{i \rightarrow \infty} \InNorms{a z^\star - z^{\sigma(k_i)}}^2 = \InNorms{a z^\star - \tz}^2. 
    \]
    By symmetry, the other direction also holds. Thus $\InNorms{a z^\star -\hz}^2 = \InNorms{a z^\star - \tz}^2$ for all $a \ge 1$.  

    Expanding $\InNorms{a z^\star -\hz}^2 = \InNorms{a z^\star - \tz}^2$ gives
    \begin{align*}
        \InNorms{\hz}^2 - \InNorms{\tz}^2  =  2 a \InAngles{z^\star, \hz - \tz}, \quad\forall a \ge 1.
    \end{align*}
    It implies that $\InNorms{\hz}^2 = \InNorms{\tz}^2$ and $\InAngles{z^\star, \hz - \tz} = 0$. 
\end{proof}

\subsection{Proofs for Section \ref{sec:exrmp-best-iterate}}

We will need the following lemma. Recall the normalization operator $g: \R_+^{d_1} \times \R_+^{d_2} \rightarrow \cZ$ such that for $z = (z_1, z_2)\in \R_+^{d_1} \times \R_+^{d_2}$, we have $g(z) = (z_1/\|z_1\|_1, z_2/\|z_2\|_1) \in \cZ$.
\begin{lemma}
\label{lemma:small distance implies approximate NE}
    Let $z \in \Delta^{d_1} \times \Delta^{d_2}$ and $z_1, z_2, z_3 \in \Z_{\ge}$ such that $z_3 = \Pi_{\Z_{\ge}} \InBrackets{z_1 - \eta F(z_2)}$. Denote $(x_3, y_3) = g(z_3)$, then  
    \[
    \DualGap(x_3, y_3):= \max_{y\in\Delta^{d_2}} x_3^\top Ay - \min_{x\in\Delta^{d_1}} x^\top A y_3 \le \frac{(\InNorms{z_1- z_3} + \eta L_F \InNorms{z_3 - z_2})(\InNorms{z_3- z}+2)}{\eta}).
    \]
\end{lemma}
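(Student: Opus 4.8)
\textbf{Proof plan for \Cref{lemma:small distance implies approximate NE}.} The plan is to express the duality gap of $(x_3,y_3)$ as an inner product against the regret operator $F(z_3)$, and then control that inner product through the first-order optimality condition of the projection defining $z_3$, a Lipschitz argument for $F$ on $\Z_{\ge}$, and a crude diameter bound.

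First I would rewrite the duality gap in operator form. Let $y^+ \in \argmax_{y\in\Delta^{d_2}} x_3^\top A y$ and $x^- \in \argmin_{x\in\Delta^{d_1}} x^\top A y_3$, and set $w := (x^-, y^+)$, which lies in $\Z = \Delta^{d_1}\times\Delta^{d_2} \subseteq \Z_{\ge}$. Writing $z_3 = (R x_3, Q y_3)$ and using the explicit form of $F$ in \eqref{eq:F}, together with the fact that $x^-$ and $y^+$ are probability vectors (so $\InAngles{\bm{1},x^-} = \InAngles{\bm{1},y^+} = 1$), a direct computation gives $\InAngles{F(z_3), w} = (x^-)^\top A y_3 - x_3^\top A y^+ = -\DualGap(x_3,y_3)$, while $\InAngles{F(z_3), z_3} = 0$. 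Hence
\[
\DualGap(x_3,y_3) = \InAngles{F(z_3),\, z_3 - w}.
\]
This is the step I expect to demand the most care, since it is precisely where the normalization structure of $g$ — and the fact that $w$ lies in the \emph{un-scaled} product of simplices rather than in $\Z_{\ge}$ with arbitrary scaling — is exploited to kill the $\bm{1}$-terms in $F$.

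Next I would invoke the first-order optimality condition for $z_3 = \Pi_{\Z_{\ge}}\InBrackets{z_1 - \eta F(z_2)}$ at the feasible point $w\in\Z_{\ge}$: $\InAngles{z_1 - \eta F(z_2) - z_3,\, w - z_3} \le 0$, which rearranges to $\eta\InAngles{F(z_2),\, z_3 - w} \le \InAngles{z_1 - z_3,\, z_3 - w}$. Splitting $F(z_3) = F(z_2) + (F(z_3) - F(z_2))$ then yields
\[
\eta\,\DualGap(x_3,y_3) = \eta\InAngles{F(z_3),\, z_3 - w} \le \InAngles{z_1 - z_3,\, z_3 - w} + \eta\InAngles{F(z_3) - F(z_2),\, z_3 - w}.
\]
Applying Cauchy--Schwarz to each term and the Lipschitz bound $\InNorms{F(z_3) - F(z_2)} \le L_F\InNorms{z_3 - z_2}$ (valid since $z_2,z_3\in\Z_{\ge}$, the set on which $F$ is $L_F$-Lipschitz) gives $\eta\,\DualGap(x_3,y_3) \le \big(\InNorms{z_1-z_3} + \eta L_F\InNorms{z_3-z_2}\big)\InNorms{z_3 - w}$. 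Finally I would bound $\InNorms{z_3 - w} \le \InNorms{z_3 - z} + \InNorms{z - w}$ and note that $z,w\in\Delta^{d_1}\times\Delta^{d_2}$, whose $\ell_2$-diameter is $\sqrt{(\sqrt 2)^2 + (\sqrt 2)^2} = 2$, so $\InNorms{z-w}\le 2$; dividing by $\eta$ produces the claimed inequality. The genuinely delicate point is the operator identity for the duality gap in the first step; everything afterwards is the standard projection-plus-Lipschitz bookkeeping familiar from EG-type analyses.
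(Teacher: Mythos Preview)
Your proposal is correct and matches the paper's proof essentially step for step: both express the duality gap as $\InAngles{F(z_3), z_3 - w}$ via the identities $\InAngles{F(z_3), z_3}=0$ and $\InAngles{F(z_3), w} = -\DualGap(x_3,y_3)$, then combine the projection optimality condition at $w\in\Z\subseteq\Z_{\ge}$ with the Lipschitz bound on $F$ and the diameter estimate $\InNorms{z-w}\le 2$. The only cosmetic difference is that the paper packages the projection and Lipschitz terms into a single algebraic identity before dropping the nonpositive projection term, whereas you apply the projection inequality first and then split $F(z_3)=F(z_2)+(F(z_3)-F(z_2))$; the content is identical.
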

\begin{proof}[Proof of Lemma \ref{lemma:small distance implies approximate NE}]
    %Let $\varepsilon = (1+\eta L_F)\InNorms{z'-z}$. 
    Let $\hx \in \argmin_{x} x^\top A y_3$ and $\hy \in \argmax_{y} x_3^\top Ay$. Then  we have 
    \begin{align*}
        &\max_{y} x_3^\top Ay - \min_{x} x^\top A y_3\\
        & =   x_3^\top A\hy - \hx^\top Ay_3\\
        &= -\InAngles{F(z_3), \hz} \\
        &= \frac{1}{\eta} (\InAngles{ z_1 - z_3 + \eta F(z_3) - \eta F(z_2), z_3 - \hz} + \InAngles{z_1 - \eta F(z_2) - z_3, \hz - z_3}) \\
        &\le \frac{1}{\eta} \InAngles{ z_1 - z_3 + \eta F(z_3) - \eta F(z_2), z_3 - \hz} \tag{$z_3 = \Pi_{\Z_{\ge}} \InBrackets{z_1 - \eta F(z_2)}$}\\
        &\le \frac{  (\InNorms{z_1- z_3} + \eta L_F \InNorms{z_3 - z_2} )   \InNorms{z_3 - \hz}}{\eta}.
    \end{align*}
    Moreover, we have
    \[
    \InNorms{z_3 - \hz} \le \InNorms{z_3 - z} + \InNorms{z - \hz} \le \InNorms{z_3 - z} + 2.
    \]
    Combining the above two inequalities completes the proof.
\end{proof}
We now show a few useful properties of the iterates. 
\begin{proposition}
\label{proposition: ExRM+ stablity of iterates}
    In the same setup of \Cref{lemma: distance to NE is decreasing}, for any $t \ge 1$, the following holds.
    \begin{itemize}
        \item[1.]  $\InNorms{z^{t+1} - z^{t+\frac{1}{2}}} \le \eta L_F \InNorms{z^{t+\frac{1}{2}} - z^t} \le \InNorms{z^{t+\half} - z^t}$;
        \item[2.] $\InNorms{z^{t+1} - z^t} \le (1 + \eta L_F) \InNorms{z^{t+\frac{1}{2}} - z^t} \le 2\InNorms{z^{t+\frac{1}{2}} - z^t}$.
    \end{itemize}
\end{proposition}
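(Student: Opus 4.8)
The plan is to derive both inequalities purely from two elementary facts: (i) the Euclidean projection $\Pi_{\cZ_{\geq}}$ onto the closed convex set $\cZ_{\geq}$ is nonexpansive, and (ii) $F$ is $L_F$-Lipschitz continuous \emph{over $\cZ_{\geq}$}, which is exactly the region in which all iterates $z^t, z^{t+\half}, z^{t+1}$ live by construction of \exrmp. So no monotonicity or Minty-type structure is needed here; this is a self-contained geometric estimate.

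For Part 1, I would start from the two update rules $z^{t+\half} = \Pi_{\cZ_{\geq}}(z^t - \eta F(z^t))$ and $z^{t+1} = \Pi_{\cZ_{\geq}}(z^t - \eta F(z^{t+\half}))$, observe that they are projections of two points that differ only in the gradient term, and apply nonexpansiveness:
\[
\InNorms{z^{t+1} - z^{t+\half}} = \InNorms{\Pi_{\cZ_{\geq}}(z^t - \eta F(z^{t+\half})) - \Pi_{\cZ_{\geq}}(z^t - \eta F(z^t))} \le \eta \InNorms{F(z^{t+\half}) - F(z^t)} \le \eta L_F \InNorms{z^{t+\half} - z^t}.
\]
The final bound $\eta L_F \InNorms{z^{t+\half} - z^t} \le \InNorms{z^{t+\half} - z^t}$ then follows immediately from the step-size restriction $\eta \in (0, 1/L_F)$, which gives $\eta L_F < 1$.

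For Part 2, I would simply combine Part 1 with the triangle inequality:
\[
\InNorms{z^{t+1} - z^t} \le \InNorms{z^{t+1} - z^{t+\half}} + \InNorms{z^{t+\half} - z^t} \le \eta L_F \InNorms{z^{t+\half} - z^t} + \InNorms{z^{t+\half} - z^t} = (1 + \eta L_F)\InNorms{z^{t+\half} - z^t},
\]
and again use $\eta L_F < 1$ to conclude $(1 + \eta L_F)\InNorms{z^{t+\half} - z^t} \le 2 \InNorms{z^{t+\half} - z^t}$.

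There is essentially no hard step here; the only point that requires a moment's care is justifying that the Lipschitz estimate for $F$ is applicable, i.e., that all points appearing as arguments of $F$ (namely $z^t$ and $z^{t+\half}$) belong to $\cZ_{\geq}$ — this holds because $z^0 \in \cZ \subseteq \cZ_{\geq}$ and every subsequent iterate is an explicit projection onto $\cZ_{\geq}$, so $F$ is never evaluated on the ``bad'' region near the origin where it fails to be Lipschitz.
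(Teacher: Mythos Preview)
Your proposal is correct and matches the paper's proof essentially line for line: nonexpansiveness of the projection plus the $L_F$-Lipschitz bound on $F$ for Part~1, then the triangle inequality for Part~2, with $\eta L_F < 1$ closing both chains. Your explicit remark that all iterates lie in $\cZ_{\geq}$ (so the Lipschitz bound on $F$ is legitimately applicable) is a nice touch the paper leaves implicit.
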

\begin{proof}[Proof of Proposition \ref{proposition: ExRM+ stablity of iterates}]
    Using the update rule of \exrmp and the non-expansiveness of the projection operator $\Pi_{\Z_\ge}[\cdot]$ we have
    \begin{align*}
         \InNorms{z^{t+1} - z^{t+\frac{1}{2}}}&\le  \InNorms{z^{t}- \eta F(z^t) - (z^{t}-\eta F(z^{t+\frac{1}{2}}) )} \\
        &=\eta\InNorms{ F(z^{t+\frac{1}{2}}) - F(z^t)} \le \eta L_F  \InNorms{z^{t+\half} - z^t}. 
    \end{align*}
    Using triangle inequality, we have
    \[
    \InNorms{z^{t+1} - z^t} \le \InNorms{z^{t+1} - z^{t+\half}} + \InNorms{z^{t+\half} - z^t} \le (1 + \eta L_F) \InNorms{z^{t+\frac{1}{2}} - z^t}. \qedhere
    \]
\end{proof}
We are now ready to prove Lemma \ref{lemma:EXRM+ gap upperbounded by distance}. 
\begin{proof}[Proof of Lemma \ref{lemma:EXRM+ gap upperbounded by distance}]
    Let $z^\star$ be any Nash equilibrium of the game. Since $z^{t+1} = \Pi_{\Z_\ge}[z^t - \eta F(z^{t+\frac{1}{2}})]$, applying \Cref{lemma:small distance implies approximate NE} gives 
    \begin{align*}
        \DualGap(x^{t+1}, y^{t+1}) &\le \frac{(\InNorms{z^{t+1} - z^t} + \eta L_F \InNorms{z^{t+1} - z^{t+\frac{1}{2}}})(\InNorms{z^{t+1} - z^\star} +2)}{\eta} \\
        &\le \frac{3\InNorms{z^{t+\half} - z^t}(\InNorms{z^{t+1} - z^\star} +2)}{\eta} \tag{$\eta L_F \le 1$ and \Cref{proposition: ExRM+ stablity of iterates}}\\
        &\le \frac{3\InNorms{z^{t+\half} - z^t}(\InNorms{z^{0} - z^\star} +2)}{\eta} \tag{\Cref{lemma: distance to NE is decreasing}}\\
        &\le \frac{12\InNorms{z^{t+\half} - z^t}}{\eta}, \tag{$z^0$ and $z^\star$ are both in the simplex $\Delta^{d_1} \times \Delta^{d_2}$}
    \end{align*}
    which finishes the proof.  
\end{proof}
\subsection{Proofs for Section \ref{sec:exrmp-linear-conv}}
\begin{proof}[Proof of Theorem \ref{th:RS-exrmp linear convergence}] We note that RS-\exrmp{} always restarts in the first iteration since $\InNorms{z^{\half} - z^0} \le \frac{\InNorms{z^0 - z^\star}}{\sqrt{1-\eta^2 L_F^2}} \le \frac{2}{\sqrt{1-\eta^2 L_F^2}}$.    Let $1 = t_1 < t_2 < \ldots$ be the iterations where the algorithm restarts with $(x^{t_k}, y^{t_k})$. Using \Cref{lemma:EXRM+ gap upperbounded by distance} and the restart condition, we know that for any $k \ge 1$, 
\begin{align*}
    \DualGap(x^{t_k}, y^{t_k}) &\le \frac{12\InNorms{z^{t_k - \frac{1}{2}} - z^{t_k -1}}}{\eta} \le \frac{48}{\eta \sqrt{1-\eta^2 L_F^2}} \cdot \frac{1}{2^k}\\
    \InNorms{(x^{t_k}, y^{t_k}) - \Pi_{\Z^\star}[(x^{t_k}, y^{t_k})]} &\le \frac{\DualGap(x^{t_k}, y^{t_k})}{c}  \le \frac{48}{c\eta \sqrt{1-\eta^2 L_F^2}} \cdot \frac{1}{2^k}.
\end{align*}
Moreover, for any iterate $t \in [t_k+1, t_{k+1}-1]$, using \Cref{lemma:EXRM+ gap upperbounded by distance}, we get 
\begin{align*}
    \DualGap(x^{t}, y^{t}) \le \frac{12\InNorms{z^{t - \frac{1}{2}} - z^{t -1}}}{\eta} \le \frac{12\InNorms{z^{t_k} -  \Pi_{\Z^\star}[(x^{t_k}, y^{t_k})]}}{\eta \sqrt{1-\eta^2 L_F^2}}\le \frac{576}{c\eta^2(1-\eta^2 L_F^2)} \cdot \frac{1}{2^k}.
\end{align*}
Using metric subregularity (\Cref{prop:Metric Subregularity}), we have for any iterate $t \in [t_k+1, t_{k+1}-1]$,
\begin{align*}
    \InNorms{(x^{t}, y^{t}) - \Pi_{\Z^\star}[(x^{t}, y^{t})]} \le  \frac{576}{c^2 \eta^2(1-\eta^2 L_F^2)} \cdot \frac{1}{2^k}.
\end{align*}
If we can prove $t_{k+1} - t_k$ is always bounded by a constant $C \ge 1$, then $t_k \le C(k-1) +1 \le Ck$. Then for any $t \ge 1$, we have $t \ge t_{ \lfloor \frac{t}{C} \rfloor}$ and
\begin{align*}
    \InNorms{(x^{t}, y^{t}) - \Pi_{\Z^\star}[(x^{t}, y^{t})]} \le \frac{576}{c^2\eta^2 (1-\eta^2 L_F^2)} \cdot \frac{1}{2^{\lfloor \frac{t}{C} \rfloor}} \le \frac{576}{c^2\eta^2 (1-\eta^2 L_F^2)}  \cdot 2^{-\frac{t}{C}} \le \frac{576}{c^2\eta^2 (1-\eta^2 L_F^2)}  \cdot (1 - \frac{1}{3C})^t.
\end{align*}
\paragraph{Bounding $t_{k+1} - t_k$}
If $t_{k+1} - t_k = 1$, then the claim holds. Now we assume $t_{k+1} - t_k \ge 2$. Before restarting, the updates of $z^t$ for $t_k \le t \le t_{k+1}-2$ is just \exrmp. By \Cref{lemma: distance to NE is decreasing}, we know
\begin{align*}
    \sum_{t=t_k}^{t_{k+1}-2} \InNorms{z^{t+\half}- z^t}^2 \le \frac{\InNorms{(x^{t_k}, y^{t_k}) - \Pi_{\Z^\star}[(x^{t_k}, y^{t_k})]}^2}{1-\eta^2 L_F^2}
\end{align*}
It implies that there exists $t \in [t_k, t_{k+1}-2]$ such that 
\begin{align*}
    \InNorms{z^{t+\half}- z^t} \le  \frac{\InNorms{(x^{t_k}, y^{t_k}) - \Pi_{\Z^\star}[(x^{t_k}, y^{t_k})]}}{\sqrt{1-\eta^2 L_F^2} \sqrt{t_{k+1} - t_k - 1}} \le 
    \frac{48}{c\eta (1 - \eta^2 L_F^2)\sqrt{t_{k+1} - t_k - 1} \cdot 2^k}.
\end{align*}
On the other hand, since the algorithm does not restart in $[t_k, t_{k+1}-2]$, we know for every $t \in [t_k, t_{k+1}-2]$, 
\[
\InNorms{z^{t+\half}- z^t} > \frac{4}{\sqrt{1-\eta^2 L_F^2} \cdot 2^{k+1}}.
\]
Combining the above inequalities gives
\begin{align*}
    t_{k+1} - t_{k} -1 &\le \frac{48^2}{c^2 \eta^2 (1 -\eta^2 L_F^2)^2 \cdot 2^{2k}} \cdot \frac{(1-\eta^2 L_F^2) \cdot 2^{2k+2}}{16} \\
    &= \frac{48^2/4}{c^2 \eta^2 (1-\eta^2 L_F^2)} = \frac{576}{c^2 \eta^2 (1-\eta^2 L_F^2)}.
\end{align*}

\paragraph{Linear Last-Iterate Convergence} Combing all the above, we get for all $t \ge 1$, 
\begin{align*}
    \DualGap(x^t, y^t) &\le \frac{576}{c\eta^2 (1-\eta^2 L^2)} \cdot \left(1 - \frac{1}{3(1+ \frac{576}{c^2 \eta^2 (1-\eta^2 L_F^2)})}\right)^{t} \\
    \InNorms{(x^{t}, y^{t}) - \Pi_{\Z^\star}[(x^{t}, y^{t})]} & \le \frac{576}{c^2\eta^2 (1-\eta^2 L^2)} \cdot \left(1 - \frac{1}{3(1+ \frac{576}{c^2 \eta^2 (1-\eta^2 L_F^2)})}\right)^{t}.
\end{align*}
This completes the proof.
\end{proof}
\section{Proofs for Section \ref{sec:sprmp-convergence}}\label{app:convergence-sprmp}
\subsection{Asymptotic Convergence of \sprmp}\label{app:sprmp-last-iterate}
In the following, we will prove last-iterate convergence of \sprmp (\Cref{SPRM+}). The proof follows the same idea as that of \exrmp{} in previous sections. Applying standard analysis of OG, we have the following important lemma for \sprmp.
\begin{lemma}[Adapted from Lemma 1 in \citep{wei2021linear}]
\label{lemma: PRM+ distance to NE is decreasing}
    Let $z^\star \in \Z_{\ge}$ be a point such that $\InAngles{F(z), z - z^\star} \ge 0$ for all $z \in \Z_{\ge}$. Let $\{z^t\}$ and $\{w^t\}$ be the sequences produced by \sprmp{}. Then for every iteration $t \ge 0$ it holds that 
    \[
    \InNorms{w^{t+1} - z^\star}^2 + \frac{1}{16} \InNorms{w^{t+1} - z^t}^2 \le \InNorms{w^t - z^\star}^2 + \frac{1}{16} \InNorms{w^t - z^{t-1}}^2 - \frac{15}{16}\InParentheses{\InNorms{w^{t+1} -z^t}^2+ \InNorms{w^t - z^t}^2}.
    \]
    It also implies for any $t \ge 0$, 
    \[
    \InNorms{w^{t+1} -z^t} + \InNorms{w^t -z^t} \le 2 \InNorms{w^0 - z^\star}, \quad \InNorms{w^t - z^t} + \InNorms{w^t - z^{t-1}} \le 2 \InNorms{w^0 - z^\star}.
    \]
\end{lemma}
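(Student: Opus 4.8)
The plan is to follow the standard potential‑function analysis of Optimistic Gradient (of which \sprmp{} is an instance, see \citep{wei2021linear}), with two modifications: monotonicity of the operator is replaced by the Minty‑type hypothesis on $z^\star$, and the generic Lipschitz bookkeeping is replaced by the tighter non‑expansiveness estimate $\InNorms{w^{t+1}-z^t}\le\eta L_F\InNorms{z^t-z^{t-1}}$, which is what makes the constants $\tfrac1{16}$ and $\tfrac{15}{16}$ come out exactly. First I would record two preliminary facts. Since $z^{-1}=w^0\in\Z\subseteq\Z_{\ge}$ and every subsequent $z^t, w^{t+1}$ is a Euclidean projection onto $\Z_{\ge}$, all iterates lie in $\Z_{\ge}$, so $F$ is $L_F$-Lipschitz along the whole trajectory. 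Moreover, since $w^{t+1}=\Pi_{\Z_{\ge}}(w^t-\eta F(z^t))$ and $z^t=\Pi_{\Z_{\ge}}(w^t-\eta F(z^{t-1}))$, non‑expansiveness of the projection gives $\InNorms{w^{t+1}-z^t}\le\eta\InNorms{F(z^t)-F(z^{t-1})}\le\eta L_F\InNorms{z^t-z^{t-1}}$.

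Next I would write the two first‑order optimality (variational) inequalities for these projections: for all $u\in\Z_{\ge}$, $\InAngles{w^t-\eta F(z^{t-1})-z^t,\,u-z^t}\le 0$ and $\InAngles{w^t-\eta F(z^t)-w^{t+1},\,u-w^{t+1}}\le 0$. Plugging $u=w^{t+1}$ into the first and $u=z^\star$ into the second, and expanding with the three‑point identities $2\InAngles{a-b,b-c}=\InNorms{a-c}^2-\InNorms{a-b}^2-\InNorms{b-c}^2$ and $2\InAngles{a-b,c-b}=\InNorms{a-b}^2+\InNorms{c-b}^2-\InNorms{a-c}^2$, I would combine the two and use the hypothesis $\InAngles{F(z^t),z^t-z^\star}\ge 0$ (the Minty condition — this is the only place it enters, taking the role that monotonicity plays in the \citep{wei2021linear} proof) to discard the nonpositive term $2\eta\InAngles{F(z^t),z^\star-z^t}$. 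After the two copies of $\InNorms{w^t-w^{t+1}}^2$ cancel, this yields
\[
\InNorms{w^{t+1}-z^\star}^2 \le \InNorms{w^t-z^\star}^2 - \InNorms{w^t-z^t}^2 - \InNorms{w^{t+1}-z^t}^2 + 2\eta\InAngles{F(z^t)-F(z^{t-1}),\,z^t-w^{t+1}}.
\]

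Then I would bound the cross term: by Cauchy–Schwarz, Lipschitzness, and the non‑expansiveness estimate recorded above,
\[
2\eta\InAngles{F(z^t)-F(z^{t-1}),\,z^t-w^{t+1}} \le 2\eta L_F\InNorms{z^t-z^{t-1}}\,\InNorms{w^{t+1}-z^t} \le 2\eta^2 L_F^2\InNorms{z^t-z^{t-1}}^2 \le \tfrac1{32}\InNorms{z^t-z^{t-1}}^2,
\]
using $\eta\le\tfrac1{8L_F}$; finally $\InNorms{z^t-z^{t-1}}^2\le 2\InNorms{w^t-z^t}^2+2\InNorms{w^t-z^{t-1}}^2$ turns this into $\tfrac1{16}\InNorms{w^t-z^t}^2+\tfrac1{16}\InNorms{w^t-z^{t-1}}^2$. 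Substituting into the previous display and adding $\tfrac1{16}\InNorms{w^{t+1}-z^t}^2$ to both sides gives precisely the displayed inequality of the lemma.

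For the consequences I would set $\Phi^t:=\InNorms{w^t-z^\star}^2+\tfrac1{16}\InNorms{w^t-z^{t-1}}^2$; the main inequality reads $\Phi^{t+1}\le\Phi^t-\tfrac{15}{16}\InParentheses{\InNorms{w^{t+1}-z^t}^2+\InNorms{w^t-z^t}^2}$, so $\{\Phi^t\}$ is non‑increasing and $\Phi^t\le\Phi^0=\InNorms{w^0-z^\star}^2$ because $z^{-1}=w^0$. Since $\Phi^{t+1}\ge 0$, we get $\tfrac{15}{16}\InParentheses{\InNorms{w^{t+1}-z^t}^2+\InNorms{w^t-z^t}^2}\le\Phi^t\le\InNorms{w^0-z^\star}^2$, and $(a+b)^2\le 2(a^2+b^2)$ yields $\InNorms{w^{t+1}-z^t}+\InNorms{w^t-z^t}\le\sqrt{32/15}\,\InNorms{w^0-z^\star}\le 2\InNorms{w^0-z^\star}$; the second bound follows identically after adding the main inequalities at steps $t-1\!\to\!t$ and $t\!\to\!t+1$ to control $\InNorms{w^t-z^{t-1}}^2+\InNorms{w^t-z^t}^2$, with the $t=0$ case immediate from $z^{-1}=w^0$. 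The only real obstacle is the arithmetic of the constants: the stated $\tfrac1{16},\tfrac{15}{16}$ appear only because the cross term is handled through the non‑expansiveness estimate (not a generic Young split) together with the precise restriction $\eta\le\tfrac1{8L_F}$; a looser treatment still gives a usable inequality but with worse constants.
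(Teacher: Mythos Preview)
Your proof is correct and follows precisely the standard Optimistic Gradient potential analysis of \citep{wei2021linear} that the paper cites; the paper itself does not give a separate proof of this lemma but simply invokes that reference. Your write-up correctly identifies the one substantive change, namely that monotonicity is replaced by the Minty-type hypothesis $\InAngles{F(z^t),z^t-z^\star}\ge 0$ at the single step where it is needed, and your bookkeeping of constants (using the non-expansiveness bound $\InNorms{w^{t+1}-z^t}\le\eta L_F\InNorms{z^t-z^{t-1}}$ together with $\eta\le\tfrac1{8L_F}$) yields exactly the stated $\tfrac1{16}$ and $\tfrac{15}{16}$.
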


\begin{lemma}
\label{lemma: PRM+ limit point induces Nash}
    Let $\{z^t\}$ and $\{w^t\}$ be the sequences produced by \sprmp{}, then
    \begin{itemize}
        \item[1. ] The sequences $\{w^t\}$ and $\{z^t\}$ are bounded and thus have at least one limit point.
        \item[2. ] If the sequence $\{w^t\}$ converges to $\hw \in \Z_\ge$, then the sequence $\{z^t\}$ also converges to $\hw$.
        \item[3. ] If $\hw$ is a limit point of $\{ w^t \}$, then $\hw \in SOL(\Z_{\ge}, F)$ and $g(\hw)$ is a Nash equilibrium of the game.
    \end{itemize} 
\end{lemma}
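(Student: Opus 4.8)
The plan is to mimic the structure of the \exrmp{} analysis from Section~\ref{sec:exrmp-convergence}, adapting each step to the optimistic-gradient (OG) dynamics of \sprmp{}. The three claims in Lemma~\ref{lemma: PRM+ limit point induces Nash} parallel \Cref{lemma: distance to NE is decreasing} and \Cref{lemma: limit point induces Nash}. For the first claim, I would invoke \Cref{Fact: MVI} to get a point $z^\star$ satisfying the Minty condition (indeed any Nash equilibrium $z^\star \in \Z$ works, and $z^\star \in \Z_{\ge}$), then apply \Cref{lemma: PRM+ distance to NE is decreasing} with this $z^\star$. The Lyapunov inequality shows that $\InNorms{w^{t+1} - z^\star}^2 + \frac{1}{16}\InNorms{w^{t+1} - z^t}^2$ is nonincreasing, hence bounded, so $\{w^t\}$ is bounded; the second bound in that lemma ($\InNorms{w^{t+1} - z^t} + \InNorms{w^t - z^t} \le 2\InNorms{w^0 - z^\star}$) together with triangle inequality ($\InNorms{z^t - z^\star} \le \InNorms{z^t - w^t} + \InNorms{w^t - z^\star}$) gives boundedness of $\{z^t\}$. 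Boundedness in the compact-adjacent set $\Z_{\ge}$ (after noting iterates stay in $\Z_{\ge}$ which, intersected with the relevant ball, is compact) yields existence of limit points via Bolzano--Weierstrass.

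For the second claim, I would telescope the Lyapunov inequality of \Cref{lemma: PRM+ distance to NE is decreasing}: summing the nonpositive terms $-\frac{15}{16}(\InNorms{w^{t+1}-z^t}^2 + \InNorms{w^t - z^t}^2)$ over all $t$ gives $\sum_t (\InNorms{w^{t+1}-z^t}^2 + \InNorms{w^t - z^t}^2) < \infty$, hence $\InNorms{w^t - z^t} \to 0$ and $\InNorms{w^{t+1} - z^t} \to 0$, which also yields $\InNorms{w^{t+1} - w^t} \to 0$. So if $w^t \to \hw$, then $z^t \to \hw$ as well since $\InNorms{z^t - \hw} \le \InNorms{z^t - w^t} + \InNorms{w^t - \hw} \to 0$.

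For the third claim, suppose $\hw$ is a limit point along a subsequence $\{w^t : t \in \kappa\}$. From $\InNorms{w^t - z^t} \to 0$ we get $z^t \to \hw$ and $z^{t-1} \to \hw$ along $\kappa$ as well. Passing to the limit in the update rule $z^t = \Pi_{\Z_{\ge}}(w^t - \eta F(z^{t-1}))$, using continuity of $F$ on $\Z_{\ge}$ (it is $L_F$-Lipschitz there) and continuity of the projection, gives $\hw = \Pi_{\Z_{\ge}}(\hw - \eta F(\hw))$, which is exactly the fixed-point characterization of $\SOL(\Z_{\ge}, F)$. Then the earlier lemma (Solution to VI Corresponds to Nash Equilibrium, via \Cref{lemma:small distance implies approximate NE}) shows $g(\hw)$ is a Nash equilibrium with zero duality gap.

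I expect the main subtlety — rather than a deep obstacle — to be bookkeeping about which set the iterates live in and ensuring compactness is correctly invoked: the set $\Z_{\ge}$ is unbounded, so I must first establish the a~priori bound from \Cref{lemma: PRM+ distance to NE is decreasing} to confine the iterates to a compact subset before extracting convergent subsequences. A second point requiring care is that the three conclusions must be proved in the right order, since the limit-point characterization in claim~3 relies on the vanishing-increment facts ($\InNorms{w^t - z^t} \to 0$ and $\InNorms{w^{t+1} - z^t} \to 0$) derived from telescoping in the course of claim~2; everything else is a routine adaptation of the \exrmp{} argument with $F$'s Lipschitzness on $\Z_{\ge}$ and the Minty property from \Cref{Fact: MVI} doing the essential work.
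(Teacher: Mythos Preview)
Your proposal is correct and follows essentially the same approach as the paper's proof: bound the iterates via \Cref{lemma: PRM+ distance to NE is decreasing}, telescope to obtain $\InNorms{w^t - z^t}\to 0$ and $\InNorms{w^{t+1}-z^t}\to 0$, and pass to the limit in an update rule using continuity of $F$ and the projection. The only cosmetic difference is that the paper passes to the limit in the $w$-update $w^{t+1}=\Pi_{\Z_\ge}[w^t-\eta F(z^t)]$ rather than the $z$-update you chose; note that to make your version go through you should justify $z^{t-1}\to\hw$ along $\kappa$ via $\InNorms{w^t - z^{t-1}}\to 0$ (an index shift of $\InNorms{w^{t+1}-z^t}\to 0$) rather than via $\InNorms{w^t - z^t}\to 0$ alone.
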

\begin{proof}
    By \Cref{lemma: PRM+ distance to NE is decreasing} and the fact that $w^0 = z^{-1}$, we know for any $t \ge 0$,
    \[
    \InNorms{w^{t+1} - z^\star}^2 + \frac{1}{16} \InNorms{w^{t+1} - z^t}^2 \le \InNorms{w^0 - z^\star}^2, 
    \]
    which implies the sequences $\{w^t\}$ and $\{z^t\}$ are bounded. Thus, both of them have at least one limit point.
    By \Cref{lemma: PRM+ distance to NE is decreasing}, we have
    \[
    \frac{15}{32} \sum_{t=1}^T \InNorms{w^{t+1} -w^t}^2\le \frac{15}{16} \sum_{t=1}^T \InParentheses{\InNorms{w^{t+1} -z^t}^2+ \InNorms{w^t - z^t}^2} \le \InNorms{w^0 - z^\star}^2 + \frac{1}{16} \InNorms{w^0 - z^{-1}}^2.
    \]
    This implies 
    \[
    \lim_{t \rightarrow \infty} \InNorms{w^{t+1} - w^t} = 0, \quad \lim_{t \rightarrow \infty} \InNorms{w^t - z^t} = 0, \quad \lim_{t \rightarrow \infty} \InNorms{w^{t+1} - z^t} = 0.
    \]
    If $\hw$ is the limit point of the subsequence $\{w^t: t\in \kappa\}$, then we must have
    \[
    \lim_{(t \in \kappa) \rightarrow \infty} w^{t+1} = \hw, \quad  \lim_{(t \in \kappa) \rightarrow \infty} z^t = \hw.
    \]
    By the definition of $w^{t+1}$ in \sprmp{} and by the continuity of $F$ and of the projection, we get 
    \[
        \hw = \lim_{t(\in \kappa) \rightarrow \infty} w^{t+1} = \lim_{t(\in \kappa) \rightarrow \infty} \Pi_{\Z_{\ge }}\InBrackets{w^t - \eta F(z^t)} = \Pi_{\Z_{\ge }}\InBrackets{\hw - \eta F(\hw)}.
    \]
    This shows that $\hw \in SOL(\Z_{\ge}, F)$. By \Cref{lemma:small distance implies approximate NE}, we know $g(\hw)$ is a Nash equilibrium of the game. 
\end{proof}
\begin{proposition}
\label{proposition: PRM+ colinear limit point converges}
    Let $\{z^t\}$ and $\{w^t\}$ be the sequences produced by \sprmp{}. If $\{w^t\}$ has a limit point  $\hw$ such that $\hw = a z^\star$ for $z^\star \in \Delta^{d_1} \times \Delta^{d_2}$ and $a \ge 1$ (equivalently,  colinear with a pair of strategies in the simplex), then $\{w^t\}$ converges to $\hw$.
\end{proposition}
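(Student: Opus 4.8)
The plan is to mirror the proof of \Cref{proposition: colinear limit point converges} for \exrmp{}, with the one change that the monotone quantity available for \sprmp{} is a composite \emph{potential} rather than a plain distance. First I would establish that $\hw$ satisfies the Minty condition. By \Cref{lemma: PRM+ limit point induces Nash}, any limit point $\hw$ of $\{w^t\}$ lies in $\SOL(\Z_\ge, F)$, so $g(\hw)$ is a Nash equilibrium of $A$. Since $g$ is the $\ell_1$-normalization and $z^\star \in \Delta^{d_1}\times\Delta^{d_2}$ is already normalized, $\hw = a z^\star$ gives $g(\hw) = z^\star$; hence $z^\star$ is a Nash equilibrium. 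Now \Cref{Fact: MVI}, applied with this Nash equilibrium $z^\star$ and the scalar $a \ge 1$, yields $\InAngles{F(z), z - \hw} \ge 0$ for all $z \in \Z_\ge$.

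Next I would invoke \Cref{lemma: PRM+ distance to NE is decreasing} with the Minty point taken to be $\hw$. This shows that the potential
\[
\Phi^t := \InNorms{w^{t} - \hw}^2 + \frac{1}{16}\InNorms{w^{t} - z^{t-1}}^2
\]
is non-increasing in $t$, because the dropped term $-\frac{15}{16}\InParentheses{\InNorms{w^{t+1}-z^t}^2 + \InNorms{w^t - z^t}^2}$ is non-positive. Being non-increasing and bounded below by $0$, the sequence $\{\Phi^t\}$ converges to some limit $\Phi^\infty \ge 0$.

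Then I would use the subsequence $\{w^t\}_{t\in\kappa}$ along which $w^t \to \hw$. From the proof of \Cref{lemma: PRM+ limit point induces Nash} we also have $\lim_{t\to\infty}\InNorms{w^t - z^{t-1}} = 0$ (the reindexed version of $\lim_t \InNorms{w^{t+1}-z^t}=0$), so along $\kappa$ both summands of $\Phi^t$ tend to $0$, giving $\Phi^\infty = \lim_{t(\in\kappa)\to\infty}\Phi^t = 0$. Since $\{\Phi^t\}$ is monotone, the \emph{whole} sequence satisfies $\Phi^t \to 0$, and therefore $\InNorms{w^t - \hw}^2 \le \Phi^t \to 0$, i.e.\ $\{w^t\}$ converges to $\hw$.

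There is no serious obstacle here; this is a direct transcription of the \exrmp{} argument. The only point requiring care — and the only thing that differs from \Cref{proposition: colinear limit point converges} — is that for \sprmp{} it is the composite potential $\Phi^t$, not the plain distance $\InNorms{w^t - \hw}$, that is monotone, so one must argue convergence of $\Phi^t$ to zero (exploiting that its $\InNorms{w^t - z^{t-1}}^2$ term vanishes in the limit along the converging subsequence) rather than of the distance directly.
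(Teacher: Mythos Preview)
Your proposal is correct and matches the paper's proof essentially line for line: both apply \Cref{Fact: MVI} to get the Minty condition at $\hw = a z^\star$, use \Cref{lemma: PRM+ distance to NE is decreasing} to make the composite potential $\InNorms{w^t - \hw}^2 + \tfrac{1}{16}\InNorms{w^t - z^{t-1}}^2$ monotone, and then exploit $\lim_t \InNorms{w^t - z^{t-1}} = 0$ together with the converging subsequence to drive the potential (hence the distance) to zero. Your write-up is slightly more explicit in justifying that $z^\star$ is a Nash equilibrium via \Cref{lemma: PRM+ limit point induces Nash}, but otherwise the argument is identical.
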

\begin{proof}
    By \Cref{Fact: MVI},  we know the MVI condition holds for $\hw$, Then by \Cref{lemma: PRM+ distance to NE is decreasing}, $\{\InNorms{w^t - \hw}^2 + \frac{1}{16} \InNorms{w^t - z^{t-1}}^2\}$ is monotonitically decreasing and therefore converges. Let $\hw$ be the limit of the sequence $\{w^t: t \in \kappa\}$. Using the fact that $\lim_{t \rightarrow \infty}\InNorms{w^t - z^{t-1}} = 0$, we know that $\{\InNorms{w^t - \hw}\}$ also converges and
\[
    \lim_{t \rightarrow \infty} \InNorms{w^t - \hw}^2 + \frac{1}{16} \InNorms{w^t - z^{t-1}}^2 = \lim_{t\rightarrow \infty} \InNorms{w^t - \hw}^2 = \lim_{t(\in \kappa)\rightarrow \infty} \InNorms{w^t - \hw}^2 = 0
\]
Thus, $\{w^t\}$ converges to $\hw$. 
\end{proof}

\begin{lemma}[Structure of Limit Points]
\label{lemma:PRM+ Structure of Limit Points}
    Let $\{z^t\}$ and $\{w^t\}$ be the sequences produced by \sprmp{}. Let $z^\star \in \Delta^{d_1} \times  \Delta^{d_2}$ be any Nash equilibrium of $A$. If $\hw$ and $\tw$ are two limit points of $\{w^t\}$, then the following holds. 
    \begin{itemize}
        \item[1.] $\InNorms{a z^\star -\hw}^2 = \InNorms{a z^\star - \tw}^2$ for all $a \ge 1$.
        \item[2.] $\InNorms{\hw}^2 = \InNorms{\tw}^2$.
        \item[3.] $\InAngles{z^\star, \hw - \tw} = 0$.
    \end{itemize}
\end{lemma}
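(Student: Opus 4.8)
The plan is to mirror the proof of \Cref{lemma:Structure of Limit Points} for \exrmp{}, replacing the monotone quantity $\InNorms{z^t - z^\star}^2$ by the Lyapunov function that appears in the analysis of \sprmp{}. Concretely, for $a \ge 1$ set $\Phi^t_a := \InNorms{w^t - a z^\star}^2 + \frac{1}{16}\InNorms{w^t - z^{t-1}}^2$. By \Cref{Fact: MVI}, the point $a z^\star$ satisfies the Minty condition $\InAngles{F(z), z - a z^\star} \ge 0$ for all $z \in \Z_{\ge}$ and every $a \ge 1$, so \Cref{lemma: PRM+ distance to NE is decreasing} applies with $z^\star$ replaced by $a z^\star$ and shows that $t \mapsto \Phi^t_a$ is nonincreasing; being bounded below by $0$, it converges.

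Next I would combine this with \Cref{lemma: PRM+ limit point induces Nash}, which gives $\lim_{t\to\infty}\InNorms{w^t - z^{t-1}} = 0$. Let $\{k_i\}$ and $\{l_i\}$ be increasing index sequences with $w^{k_i}\to\hw$ and $w^{l_i}\to\tw$. Along $\{k_i\}$ the second term of $\Phi^{k_i}_a$ vanishes, so $\Phi^{k_i}_a \to \InNorms{a z^\star - \hw}^2$, and likewise $\Phi^{l_i}_a \to \InNorms{a z^\star - \tw}^2$. Since $\Phi^t_a$ converges, the two subsequential limits must coincide, which is exactly Part 1: $\InNorms{a z^\star - \hw}^2 = \InNorms{a z^\star - \tw}^2$ for all $a \ge 1$. (Alternatively, one can run the $\sigma$-mapping argument verbatim from \Cref{lemma:Structure of Limit Points}, using monotonicity of $\Phi^t_a$ in place of monotonicity of $\InNorms{z^t - z^\star}^2$.)

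Finally, expanding $\InNorms{a z^\star - \hw}^2 = a^2\InNorms{z^\star}^2 - 2a\InAngles{z^\star,\hw} + \InNorms{\hw}^2$ and similarly for $\tw$, Part 1 rearranges to $\InNorms{\hw}^2 - \InNorms{\tw}^2 = 2a\InAngles{z^\star,\hw-\tw}$ for all $a \ge 1$. The right-hand side is affine in $a$ and the identity holds on a nondegenerate interval, so the coefficient of $a$ and the constant term must both vanish, yielding Part 3 ($\InAngles{z^\star,\hw-\tw}=0$) and Part 2 ($\InNorms{\hw}^2=\InNorms{\tw}^2$). I do not anticipate a genuine obstacle: the only two points requiring care are that \Cref{Fact: MVI} indeed licenses using $az^\star$ as a Minty point inside \Cref{lemma: PRM+ distance to NE is decreasing}, and that the auxiliary term $\frac{1}{16}\InNorms{w^t - z^{t-1}}^2$ in the Lyapunov function is asymptotically negligible along convergent subsequences — both immediate from the lemmas already established for \sprmp{}.
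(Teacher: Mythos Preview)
Your proposal is correct and follows essentially the same approach as the paper's proof: both rely on applying \Cref{lemma: PRM+ distance to NE is decreasing} with $az^\star$ in place of $z^\star$ (licensed by \Cref{Fact: MVI}), using the Lyapunov quantity $\InNorms{w^t - az^\star}^2 + \tfrac{1}{16}\InNorms{w^t - z^{t-1}}^2$, and then exploiting $\lim_{t\to\infty}\InNorms{w^t - z^{t-1}} = 0$ to conclude Part 1, from which Parts 2 and 3 follow by expansion. Your phrasing via ``the nonincreasing bounded sequence $\Phi^t_a$ converges, hence all subsequential limits coincide'' is a mild streamlining of the paper's $\sigma$-mapping argument, but the two are equivalent and you correctly note the alternative yourself.
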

\begin{proof}
    Let $\{k_i \in \mathbb{Z}\}$ be an increasing sequence of indices such that $\{w^{k_i}\}$ converges to $\hw$ and $\{l_i \in \mathbb{Z}\}$ be an increasing sequence of indices such that $\{w^{l_i}\}$ converges to $\tw$. Let $\sigma : \{k_i\} \rightarrow \{l_i\}$ be a mapping that always maps $k_i$ to a larger index in $\{l_i\}$, i.e. $\sigma(k_i) > k_i$. Such a mapping clearly exists. Since \Cref{lemma: PRM+ distance to NE is decreasing} applies to $a z^\star$ for any $a \ge 1$ and $\lim_{t \rightarrow \infty}\InNorms{w^t - z^{t-1}} = 0$, we get 
    \begin{align*}
        \InNorms{a z^\star -\hw}^2 &= \lim_{i \rightarrow \infty} \InNorms{a z^\star - w^{k_i}}^2 + \frac{1}{16}\InNorms{w^{k_i} - z^{k_i-1}}^2 \\
        &\ge \lim_{i \rightarrow \infty} \InNorms{a z^\star - w^{\sigma(k_i)}}^2 + \frac{1}{16}\InNorms{w^{\sigma(k_i)} - z^{\sigma(k_i)-1}}^2 \\
        &= \InNorms{a z^\star - \tw}^2. 
    \end{align*}
    By symmetry, the other direction also holds. Thus, $\InNorms{a z^\star -\hw}^2 = \InNorms{a z^\star - \tw}^2$ for all $a \ge 1$.  

    Expanding $\InNorms{a z^\star -\hw}^2 = \InNorms{a z^\star - \tw}^2$ gives
    \begin{align*}
        \InNorms{\hw}^2 - \InNorms{\tw}^2  =  2 a \InAngles{z^\star, \hw - \tw}, \quad\forall a \ge 1.
    \end{align*}
    It implies that $\InNorms{\hw}^2 = \InNorms{\tw}^2$ and $\InAngles{z^\star, \hw - \tw} = 0$. 
\end{proof}

With \Cref{lemma: PRM+ distance to NE is decreasing}, \Cref{lemma: PRM+ limit point induces Nash}, \Cref{proposition: PRM+ colinear limit point converges}, and \Cref{lemma:PRM+ Structure of Limit Points}, by the same argument as the proof for \exrmp{} in \Cref{lemma:unique limit point}, we can prove that $\{w^t\}$ produced by \sprmp{} has a unique limit point and thus converges, as shown in the following lemma. 
\begin{lemma}[Uniqueness of Limit Point]
\label{lemma:PRM+ unique limit point}
    The iterates $\{w^t\}$ produced by \sprmp{}  have a unique limit point.
\end{lemma}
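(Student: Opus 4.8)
The plan is to mirror the proof of \Cref{lemma:unique limit point} for \exrmp{} almost verbatim, substituting each ingredient by its \sprmp{} analogue: \Cref{lemma: distance to NE is decreasing} $\to$ \Cref{lemma: PRM+ distance to NE is decreasing}, \Cref{proposition: colinear limit point converges} $\to$ \Cref{proposition: PRM+ colinear limit point converges}, \Cref{lemma:Structure of Limit Points} $\to$ \Cref{lemma:PRM+ Structure of Limit Points}, and the sequence $\{z^t\}$ $\to$ $\{w^t\}$. First I would argue by contradiction: suppose $\{w^t\}$ has two distinct limit points $\hw = (\hR\hx, \hQ\hy)$ and $\tw = (\tR\tx, \tQ\ty)$, where $(\hx,\hy),(\tx,\ty)\in\Delta^{d_1}\times\Delta^{d_2}$ are the $\ell_1$-normalized profiles, which are Nash equilibria by \Cref{lemma: PRM+ limit point induces Nash}. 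The central scalar identity to establish is $\hR + \tR = \hQ + \tQ$.

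To obtain this identity I would split into the same two cases as in the \exrmp{} argument. In Case~1, $\hw$ and $\tw$ are colinear with a \emph{common} Nash equilibrium $z^\star = (x^\star,y^\star)$; then parts~2 and~3 of \Cref{lemma:PRM+ Structure of Limit Points} give $(\hR^2-\tR^2)\InNorms{x^\star}^2 = (\tQ^2-\hQ^2)\InNorms{y^\star}^2$ and $(\hR-\tR)\InNorms{x^\star}^2 = (\tQ-\hQ)\InNorms{y^\star}^2$, which together with $\hw\neq\tw$ yield $\hR+\tR = \hQ+\tQ$. In Case~2, $(\hx,\hy)$ and $(\tx,\ty)$ are distinct Nash equilibria; by exchangeability of Nash equilibria in zero-sum games, $(\hx,\ty)$ is also a Nash equilibrium, so I can apply \Cref{lemma:PRM+ Structure of Limit Points} with $z^\star$ taken to be each of $(\hx,\hy)$, $(\tx,\ty)$, and $(\hx,\ty)$. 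Combining the resulting four equations exactly as in the \exrmp{} proof gives either $\hR+\tR = \hQ+\tQ$ (when an auxiliary constant $c$ is nonzero) or, through the equality case of Cauchy--Schwarz ($c=0$), the reduction $\hx=\tx$, $\hy=\ty$, which falls back to Case~1.

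Having established $\hR+\tR = \hQ+\tQ$ for every pair of distinct limit points, I would conclude as follows. If $\hR=\hQ$, then $\hw$ is colinear with the strategy pair $(\hx,\hy)$, so \Cref{proposition: PRM+ colinear limit point converges} forces $\{w^t\}$ to converge, contradicting the existence of a second limit point. If $\hR\neq\hQ$, a short counting argument shows $\{w^t\}$ can have at most two limit points (a third limit point $z=(Rx,Qy)$ would, via the three instances $\hR+R=\hQ+Q$, $\tR+R=\tQ+Q$, $\hR+\tR=\hQ+\tQ$, force $\hR=\hQ$). But \Cref{lemma: PRM+ limit point induces Nash} gives $\lim_{t\to\infty}\InNorms{w^{t+1}-w^t}=0$ and the sequence is bounded, so by \Cref{prop:limit points} a bounded sequence with vanishing successive differences cannot have exactly two limit points; hence it has a unique one, the desired contradiction.

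The main obstacle is bookkeeping rather than ideas: I must check that the conclusions of \Cref{lemma:PRM+ Structure of Limit Points} are in exactly the same normalized form as those of \Cref{lemma:Structure of Limit Points}, which hinges on the extra term $\frac{1}{16}\InNorms{w^t - z^{t-1}}^2$ in \Cref{lemma: PRM+ distance to NE is decreasing} genuinely vanishing in the limit because $\InNorms{w^t - z^{t-1}}\to 0$ (guaranteed by \Cref{lemma: PRM+ limit point induces Nash}). Once this is confirmed, all the algebraic manipulations from the \exrmp{} case transfer verbatim, and no new argument is needed.
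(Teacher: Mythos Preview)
Your proposal is correct and is exactly the approach the paper takes: the paper's own proof of \Cref{lemma:PRM+ unique limit point} simply states that with \Cref{lemma: PRM+ distance to NE is decreasing}, \Cref{lemma: PRM+ limit point induces Nash}, \Cref{proposition: PRM+ colinear limit point converges}, and \Cref{lemma:PRM+ Structure of Limit Points} in hand, the argument of \Cref{lemma:unique limit point} carries over verbatim. Your detailed walkthrough of that transfer, including the observation that the extra $\tfrac{1}{16}\InNorms{w^t-z^{t-1}}^2$ term vanishes in the limit (so \Cref{lemma:PRM+ Structure of Limit Points} has the same form as \Cref{lemma:Structure of Limit Points}), is precisely what is needed.
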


Thus, \Cref{thm:PRM+ last-iterate} directly follows from  \Cref{lemma: PRM+ limit point induces Nash} and \Cref{lemma:PRM+ unique limit point}. 

\subsection{Best-Iterate Convergence of \sprmp}\label{sec:sprmp-best-iterate}

\begin{lemma}
\label{lemma:PRM+ gap upperbounded by distance}
    Let $\{z^t\}$ and $\{w^t\}$ be the sequences produced by \sprmp{} (\Cref{SPRM+}). Then 
    \begin{itemize}
        \item[1.] $\DualGap(g(w^{t+1}))\le \frac{5(\InNorms{w^t - z^t}+\InNorms{w^{t+1} - z^t})}{\eta}$;
        \item[2.] $\DualGap(g(z^t)) \le \frac{9(\InNorms{w^t - z^t}+\InNorms{w^t - z^{t-1}})}{\eta}$.
    \end{itemize}.
\end{lemma}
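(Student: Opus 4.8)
The plan is to apply \Cref{lemma:small distance implies approximate NE} once for each of the two parts, reading off the right substitution directly from the update rules of \sprmp{} (\Cref{SPRM+}), and then to tame the two ``radius'' factors $\InNorms{w^{t+1}-z^\star}$ and $\InNorms{z^t-z^\star}$ that appear in that bound, using the energy-decrease estimate \Cref{lemma: PRM+ distance to NE is decreasing} together with the elementary fact that $\Z=\Delta^{d_1}\times\Delta^{d_2}$ has Euclidean diameter at most $2$. I would fix any Nash equilibrium $z^\star\in\Z$ of $A$; by \Cref{Fact: MVI} (with $a=1$) it satisfies $\InAngles{F(z),z-z^\star}\ge 0$ for all $z\in\Z_{\ge}$, so $z^\star$ is a legitimate comparison point both in \Cref{lemma:small distance implies approximate NE} and in \Cref{lemma: PRM+ distance to NE is decreasing}. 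Since $w^0=z^{-1}\in\Z$, the latter lemma yields $\InNorms{w^{t+1}-z^\star}\le\InNorms{w^0-z^\star}\le 2$ and $\InNorms{w^t-z^t}+\InNorms{w^t-z^{t-1}}\le 2\InNorms{w^0-z^\star}\le 4$ for all $t$; combining the second of these with the triangle inequality gives $\InNorms{z^t-z^\star}\le\InNorms{z^t-w^t}+\InNorms{w^t-z^\star}\le 4+2=6$. Throughout I will also use $\eta L_F\le\tfrac18$, which holds since $\eta\in(0,\tfrac1{8L_F}]$.

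For part~1, I would invoke \Cref{lemma:small distance implies approximate NE} with $z_1=w^t$, $z_2=z^t$, $z_3=w^{t+1}$ (exactly the update $w^{t+1}=\Pi_{\Z_{\ge}}[w^t-\eta F(z^t)]$) and $z=z^\star$, which bounds $\DualGap(g(w^{t+1}))$ by $\tfrac1\eta\big(\InNorms{w^t-w^{t+1}}+\eta L_F\InNorms{w^{t+1}-z^t}\big)\big(\InNorms{w^{t+1}-z^\star}+2\big)$. Then I would bound the second factor by $4$, use $\InNorms{w^t-w^{t+1}}\le\InNorms{w^t-z^t}+\InNorms{w^{t+1}-z^t}$ and $\eta L_F\le\tfrac18$ to get $\InNorms{w^t-w^{t+1}}+\eta L_F\InNorms{w^{t+1}-z^t}\le\InNorms{w^t-z^t}+\tfrac98\InNorms{w^{t+1}-z^t}$, and finally collect constants, $4\big(\InNorms{w^t-z^t}+\tfrac98\InNorms{w^{t+1}-z^t}\big)\le 5\big(\InNorms{w^t-z^t}+\InNorms{w^{t+1}-z^t}\big)$. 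Part~2 is handled the same way, now with $z_1=w^t$, $z_2=z^{t-1}$, $z_3=z^t$ (the update $z^t=\Pi_{\Z_{\ge}}[w^t-\eta F(z^{t-1})]$) and $z=z^\star$: bound the radius factor by $\InNorms{z^t-z^\star}+2\le 8$, use $\InNorms{z^t-z^{t-1}}\le\InNorms{w^t-z^t}+\InNorms{w^t-z^{t-1}}$ with $\eta L_F\le\tfrac18$ to replace $\InNorms{w^t-z^t}+\eta L_F\InNorms{z^t-z^{t-1}}$ by at most $\tfrac98\big(\InNorms{w^t-z^t}+\InNorms{w^t-z^{t-1}}\big)$, and collect the constant $\tfrac98\cdot 8=9$.

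I do not expect any real obstacle here: the whole argument is bookkeeping around a single application of \Cref{lemma:small distance implies approximate NE}. The only points that require a little care are (i) checking that the slack built into $\eta L_F\le\tfrac18$ and into the crude diameter bounds is exactly enough to land on the clean constants $5$ and $9$ rather than something larger, and (ii) for part~2, the fact that $\InNorms{z^t-z^\star}$ is not directly controlled by \Cref{lemma: PRM+ distance to NE is decreasing} and must instead be obtained through the intermediate estimate $\InNorms{z^t-w^t}\le 4$ coming from that lemma's second inequality.
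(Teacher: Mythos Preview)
Your proposal is correct and follows essentially the same approach as the paper: for each part you apply \Cref{lemma:small distance implies approximate NE} with the obvious substitutions dictated by the \sprmp{} updates, bound the radius factor via \Cref{lemma: PRM+ distance to NE is decreasing} and the simplex diameter (the paper writes the intermediate bound on $\InNorms{z^t-z^\star}$ as $\InNorms{w^t-z^\star}+\InNorms{w^t-z^t}\le\InNorms{w^0-z^\star}+2\InNorms{w^0-z^\star}$, which is the same $\le 6$ you obtain), and then use $\eta L_F\le\tfrac18$ to collect the constants. The only cosmetic difference is that the paper factors out $(1+\eta L_F)$ uniformly whereas you track the slightly sharper $\tfrac98$ coefficient on one term; both land on $5$ and $9$.
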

\begin{proof}[Proof of Lemma \ref{lemma:PRM+ gap upperbounded by distance}]
    Since $w^{t+1} = \Pi_{\Z_\ge}\InBrackets{w^t - \eta F(z^t)}$, by \Cref{lemma:small distance implies approximate NE}, we know 
    for any $z^\star \in \Z^\star$
    \begin{align*}
        \DualGap(g(w^{t+1})) &\le \frac{(\InNorms{w^{t+1}- w^t}+\eta L_F \InNorms{w^{t+1} - z^t})(\InNorms{w^{t+1} - z^\star}+2)}{\eta} \\
        &\le \frac{(1+\eta L_F) (\InNorms{w^t - z^t}+\InNorms{w^{t+1} - z^t})(\InNorms{w^{t+1} - z^\star}+2)}{\eta} \\
        & \le \frac{(1+\eta L_F) (\InNorms{w^t - z^t}+\InNorms{w^{t+1} - z^t})(\InNorms{w^0 - z^\star}+2)}{\eta} \\
        & \le \frac{5(\InNorms{w^t - z^t}+\InNorms{w^{t+1} - z^t})}{\eta},
    \end{align*}
    where in the second inequality we apply $\InNorms{w^{t+1} - w^t } \le \InNorms{w^{t+1} - z^t} + \InNorms{w^t - z^t}$; in the third inequality we use $\InNorms{w^{t+1} - z^\star} \le \InNorms{w^0 - z^\star}$ by \Cref{lemma: PRM+ distance to NE is decreasing}; in the last inequality we use $\eta L_F \le \frac{1}{8}$ and $\InNorms{w^0 - z^\star} \le 2$ since $w^0, z^\star \in \Delta^{d_1 }\times \Delta^{d_2}$. 

    Similarly, since $z^t = \Pi_{\Z_\ge}[w^t - \eta F(z^{t-1})]$, by \Cref{lemma:small distance implies approximate NE}, we know for any $z^\star \in \Z^\star$
    \begin{align*}
        &\DualGap(g(z^{t})) \\
        &\le \frac{(\InNorms{z^t- w^t}+\eta L_F \InNorms{z^t - z^{t-1}})(\InNorms{z^t - z^\star}+2)}{\eta} \\
        &\le \frac{(1+\eta L_F) (\InNorms{w^t - z^t}+\InNorms{w^t - z^{t-1}})(\InNorms{w^t - z^\star} + \InNorms{w^t - z^t}+2)}{\eta} \\
        & \le \frac{(1+\eta L_F) (\InNorms{w^t - z^t}+\InNorms{w^t - z^{t-1}})(\InNorms{w^0 - z^\star} + 2\InNorms{w^0 - z^\star} +2)}{\eta} \\
        & \le \frac{9(\InNorms{w^t - z^t}+\InNorms{w^t - z^{t-1}})}{\eta},
    \end{align*}
    where in the second inequality we apply $\InNorms{z^t - z^{t-1} } \le \InNorms{z^t- w^t} + \InNorms{w^t - z^{t-1}}$ and $\InNorms{z^t -z^\star} \le \InNorms{z^t - w^t} + \InNorms{w^t - z^\star}$; in the third inequality we use $\InNorms{w^{t+1} - z^\star} \le \InNorms{w^0 - z^\star}$ and $\InNorms{w^t - z^t}\le 2\InNorms{w^0 - z^\star}$ by \Cref{lemma: PRM+ distance to NE is decreasing}; in the last inequality we use $\eta L_F \le \frac{1}{8}$ and $\InNorms{w^0 - z^\star} \le 2$ since $w^0, z^\star \in \Delta^{d_1 }\times \Delta^{d_2}$. 
\end{proof}
\begin{proof}[Proof of Theorem \ref{theorem: PRM+ best-iterate rate}]
    Fix any Nash equilibrium $z^\star$ of the game. From \Cref{lemma: PRM+ distance to NE is decreasing}, we know that
    \begin{align*}
        \sum_{t=0}^{T-1} \InParentheses{ \InNorms{w^{t+1} - z^t}^2 + \InNorms{w^t - z^t}^2 } \le \frac{16}{15} \cdot \InNorms{w^0 - z^\star}^2.
    \end{align*}
    This implies that there exists $0 \le t \le T-1$ such that $\InNorms{w^{t+1} - z^t} + \InNorms{w^t - z^t} \le \frac{2\InNorms{w^0 -z^\star}}{\sqrt{T}}$ (we use $a+b \le \sqrt{2(a^2 +b^2)}$). By applying \Cref{lemma:PRM+ gap upperbounded by distance}, we then get  
    \[
    \DualGap(g(w^{t+1})) \le \frac{5(\InNorms{w^t - z^t}+\InNorms{w^{t+1} - z^t})}{\eta} \le \frac{10\InNorms{w^0 - z^*}}{\eta \sqrt{T}}.
    \]

    Similarly, using \Cref{lemma: PRM+ distance to NE is decreasing} and the fact that $w^0 = z^{-1}$, we know 
    \begin{align*}
        \sum_{t=0}^T \InParentheses{ \InNorms{w^t - z^t}^2 + \InNorms{w^t - z^{t-1}}^2} \le \sum_{t=0}^T \InParentheses{ \InNorms{w^t - z^t}^2 + \InNorms{w^{t+1} - z^{t}}^2}  \le \frac{16}{15} \cdot \InNorms{w^0 - z^\star}^2.
    \end{align*}
    This implies that there exists $0 \le t \le T$ such that $\InNorms{w^t - z^t} + \InNorms{w^t - z^{t-1}} \le \frac{2\InNorms{w^0 -z^\star}}{\sqrt{T}}$ (we use $a+b \le \sqrt{2(a^2 +b^2)}$). By applying \Cref{lemma:PRM+ gap upperbounded by distance}, we then get  
    \[
    \DualGap(g(z^t)) \le \frac{9(\InNorms{w^t - z^t}+\InNorms{w^t - z^{t-1}})}{\eta} \le \frac{18\InNorms{w^0 - z^*}}{\eta \sqrt{T}}.
    \]

\end{proof}
\subsection{Linear convergence of RS-\sprmp.}
\begin{proof}[Proof of Theorem \ref{theorem: RS-SPRM+ linear rates}]
    We note that \Cref{RS-SPRM+} always restarts in  the first iteration $t = 1$ since $\InNorms{w^1 - z^0}+\InNorms{w^0 - z^0} \le \sqrt{\frac{32}{15}\InNorms{w^0 - z^\star}^2} \le 4 = \frac{8}{2^1}$. We denote $1 = t_1 < t_2 < \ldots < t_k < \ldots$ the iteration where \Cref{RS-SPRM+} restarts for the $k$-th time, i.e., the ``If" condition holds. Then by \Cref{lemma:PRM+ gap upperbounded by distance}, we know for every $t_k$, $w^{t_k} = g(w^{t_k}) \in \Z = \Delta^{d_1} \times \Delta^{d_2}$ and
    \begin{align*}
        \DualGap(w^{t_k}) \le \frac{5(\InNorms{w^{t_k} - z^{t_k-1}} + \InNorms{w^{t_k-1} - z^{t_k-1}})}{\eta} \le \frac{40}{\eta \cdot 2^k},
    \end{align*}
    and by metric subregularity (\Cref{prop:Metric Subregularity}), we have
    \begin{align*}
        \InNorms{w^{t_k} - \Pi_{\Z^\star}[w^{t_k}] } \le \frac{\DualGap(g(w^{t_k}))}{c} \le \frac{40}{c \eta \cdot 2^k}.
    \end{align*}
    Then, for any iteration $t_k +1 \le t \le t_{k+1} -1$, using \Cref{lemma: PRM+ distance to NE is decreasing} and \Cref{lemma:PRM+ gap upperbounded by distance}, we have 
    \begin{align*}
        \DualGap(g(w^t)) &\le \frac{5(\InNorms{w^t - z^{t-1}} + \InNorms{w^{t-1} - z^{t-1}})}{\eta} \tag{\Cref{lemma:PRM+ gap upperbounded by distance}} \\
        &\le \frac{10 \InNorms{w^{t_k}- \Pi_{\Z^\star}[w^{t_k}] } }{\eta} \tag{\Cref{lemma: PRM+ distance to NE is decreasing}} \\
        &\le \frac{400}{c \eta^2 \cdot 2^k}.
    \end{align*}
    Then again by metric subregularity, we also get $\InNorms{g(w^t) - \Pi_{\Z^\star}[g(w^t)]}\le \frac{400}{c^2 \eta^2 \cdot 2^k}$ for every $t \in [t_k +1, t_{k+1} -1]$. 

    Similarly, for any iteration $t_k \le t \le t_{k+1}-1$, using \Cref{lemma: PRM+ distance to NE is decreasing} and \Cref{lemma:PRM+ gap upperbounded by distance}, we have
    \begin{align*}
        \DualGap(g(z^t)) &\le \frac{9(\InNorms{w^t - z^t} + \InNorms{w^t - z^{t-1}})}{\eta}\tag{\Cref{lemma:PRM+ gap upperbounded by distance}}  \\
        &\le \frac{18 \InNorms{w^{t_k}- \Pi_{\Z^\star}[w^{t_k}] } }{\eta}\tag{\Cref{lemma: PRM+ distance to NE is decreasing}} \\
        &\le \frac{720}{c \eta^2 \cdot 2^k}.
    \end{align*}
    Then by metric subregularity, we also get $\InNorms{g(z^t) - \Pi_{\Z^\star}[g(z^t)]}\le \frac{400}{c^2 \eta^2 \cdot 2^k}$ for every $t \in [t_k, t_{k+1} -1]$. 
    
    \paragraph{Bounding $t_{k+1} - t_k$} Fix any $k \ge 1$.  If $t_{k+1} = t_k +1$, then we are good. Now we assume $t_{k+1} > t_k +1$.  By \Cref{lemma: PRM+ distance to NE is decreasing}, we know
    \begin{align*}
        \sum_{t=t_k}^{t_{k+1}-2} (\InNorms{w^{t+1}-z^t}^2 + \InNorms{w^t - z^t}^2) \le \frac{16}{15} \InNorms{w^{t_k} -  \Pi_{\Z^\star}[w^{t_k}}^2.
    \end{align*}
    This implies that there exists $t \in [t_k, t_{k+1}-2]$ such that 
    \begin{align*}
        \InNorms{w^{t+1}-z^t} + \InNorms{w^t - z^t} \le \frac{2 \InNorms{w^{t_k} -  \Pi_{\Z^\star}[w^{t_k}]}}{\sqrt{t_{k+1}-t_k -1}} \le \frac{80}{c\eta 2^k} \frac{1}{\sqrt{t_{k+1}-t_k -1}}.
    \end{align*}
    On the other hand, since \Cref{RS-SPRM+} does not restart in $[t_k, t_{k+1}-2]$, we have for every $t \in [t_k, t_{k+1}-2]$, 
    \[
    \InNorms{w^{t+1}-z^t} + \InNorms{w^t - z^t} > \frac{8}{2^{k+1}}.
    \]
    Combining the above two inequalities, we get 
    \begin{align*}
        t_{k+1}-t_k -1 \le \frac{400}{c^2 \eta^2}.
    \end{align*}

\paragraph{Linear Last-Iterate Convergence Rates} Define $C:= 1 + \frac{400}{c^2 \eta^2} \ge t_{k+1} - t_k$. Then $t_k \le Ck$ and for any $t \ge 1$, we have $t \ge t_{\lfloor \frac{t}{C} \rfloor}$ and
    \begin{align*}
        \DualGap(g(w^t)) \le \frac{400}{c \eta^2} \cdot 2^{- \lfloor \frac{t}{C} \rfloor} \le \frac{400}{c \eta^2}\left(1- \frac{1}{3C}\right)^{t}= \frac{400}{c \eta^2} \cdot \left(1 - \frac{1}{3(1+\frac{400}{c^2 \eta^2})}\right)^{t}.
    \end{align*}
    Using metric subregularity, we get for any $t \ge 1$.
    \begin{align*}
        \InNorms{g(w^t) - \Pi_{\Z^\star}[g(w^t)]} \le \frac{400}{c^2 \eta^2} \cdot \left(1 - \frac{1}{3(1+\frac{400}{c^2 \eta^2})}\right)^{t}.
    \end{align*}
    Similarly, we have for any $t \ge 1$
    \begin{align*}
        \DualGap(g(z^t)) &\le \frac{720}{c \eta^2}\cdot \left(1 - \frac{1}{3(1+\frac{400}{c^2 \eta^2})}\right)^{t},
        \InNorms{g(z^t) - \Pi_{\Z^\star}[g(z^t)]} \le \frac{720}{c^2 \eta^2} \cdot \left(1 - \frac{1}{3(1+\frac{400}{c^2 \eta^2})}\right)^{t}.
    \end{align*}
    This completes the proof.
\end{proof}

\section{Additional details on numerical experiments}\label{app:simulations}
\subsection{Extragradient algorithm and optimistic gradient}
\tb{
We describe the extragradient algorithm (EG) and optimistic gradient descent (OG) in Algorithm \ref{alg:EG} and Algorithm \ref{alg:OG}. Recall that the gradient operator gradient operator $G: \+Z = \Delta^{d_1} \times  \Delta^{d_2} \rightarrow \R^{d_{1}}\times \R^{d_{2}}$ is defined as $G(z) = (Ay, -A^\top x)$ for $z=(x,y)$.
}

\begin{figure}[htb]
\begin{minipage}[b]{.5\textwidth}
\begin{algorithm}[H]
      \caption{Extragradient  algorithm (EG)}
      \label{alg:EG}
      \begin{algorithmic}[1]
     \STATE {\bf Input}: Step size $\eta >0$.

      \STATE {\bf Initialize}: 
      $z^{0} \in \cZ$
        
      \FOR{$t = 0, 1, \dots$}
    \STATE  $z^{t+1/2}  = \Pi_{\cZ}\left(z^t-\eta G(z^{t})\right)$
    \STATE $z^{t+1}  = \Pi_{\cZ}\left(z^t-\eta G(z^{t+1/2})\right)$
      \ENDFOR
\end{algorithmic}
\end{algorithm}
\end{minipage}
\begin{minipage}[b]{.5\textwidth}
\begin{algorithm}[H]
      \caption{Optimistic gradient descent (OG)}
      \label{alg:OG}
      \begin{algorithmic}[1]
     \STATE {\bf Input}: Step size $\eta >0$. 
      \STATE {\bf Initialize}:  $z^{-1} = w^0 \in \Z$
        
      \FOR{$t = 0, 1, \ldots$}
    \STATE  $z^{t}  = \Pi_{\cZ}\left(w^t-\eta G(z^{t-1})\right)$
    \STATE $w^{t+1}  = \Pi_{\cZ}\left(w^t-\eta G(z^{t})\right)$
      \ENDFOR
\end{algorithmic}
\end{algorithm}
\end{minipage}
\end{figure}
\subsection{Game instances}

Below we describe the extensive-form benchmark game instances we test in our experiments. These games are solved in their \emph{normal-form} representation. For each game, we report the size of the payoff matrix in this representation. All the algorithms were coded with Python 3.8.8, and we ran our numerical experiments on a laptop with 2.2 GHz Intel Core i7 and 8 GB of RAM. In this setup, he performance of each algorithm on the instances considered in this paper (see details below) can be computed in a few hours.

\paragraph{Kuhn poker} Kuhn poker is a widely used benchmark game introduced by \cite{Kuhn50:Simplified}. At
the beginning of the game, each player pays one chip to the
pot, and each player is dealt a single private card from a deck containing three cards: jack, queen, and king. The first
player can check or bet, i.e., putting an additional chip in
the pot. Then, the second player can check or bet after the first
player's check, or fold/call the first player's bet. After a bet of
the second player, the first player still has to decide whether to fold or to call the bet. At the showdown, the player with the highest card who has not folded wins all the chips in the pot.

The payoff matrix for Kuhn poker has dimension $27 \times 64$ and $690$ nonzeros.

\paragraph{Goofspiel} We use an imperfect-information variant of the standard benchmark game introduced
by \cite{Ross71:Goofspiel}. We use a 4-rank variant, that is, each player has a hand of cards with values $\{1, 2, 3, 4\}$. A third stack of cards with values $\{1, 2, 3, 4\}$ (in order) is placed on
the table. At each turn, a prize card is revealed, and each
player privately chooses one of his/her cards to bid. The players do not reveal the cards that they
have selected. Rather, they show their cards to a fair umpire,
which determines which player has played the highest card
and should receive the prize card as a result. In case of a tie, the
prize is split evenly among the winners. After 4 turns, all
the prizes have been dealt out and the game terminates. The payoff of each player is computed as the sum of the values of the cards they won.

The payoff matrix for Goofspiel has dimension $72 \times 7,\!808$ and $562,\!176$ nonzeros.
\paragraph{Random instances.} We consider random matrices of size $(10,15)$. The coefficients of the matrices are normally distributed with mean $0$ and variance $1$ and are sampled using the \texttt{numpy.random.normal} function from the \texttt{numpy} Python package~\citep{harris2020array}. In all figures, we average the last-iterate duality gaps over the $25$ random matrix instances, and we also show the confidence intervals.

\subsection{Experiments with restarting}
In this section, we compare the last-iterate convergence performances of \exrmp, \sprmp{} and their restarting variants RS-\exrmp{} and RS-\sprmp{} as introduced in Section \ref{sec:exrmp-linear-conv} and Section \ref{sec:sprmp-convergence}. We present our results when choosing a stepsize $\eta = 0.05$ in Figure \ref{fig:matrix-games-eta-0.05-RS} and Figure \ref{fig:matrix-game-eta-0.05-RS-semi-log}. To better illustrate the linear convergence rate, we use semi-log plot in Figure \ref{fig:matrix-game-eta-0.05-RS-semi-log} and plot a reference line $(1-0.002)^t$. The numerical results show that both RS-ExRM$^+$ and RS-SPRM$^+$ have linear last-iterate convergence and confirm our theoretical results in \Cref{th:RS-exrmp linear convergence} and \Cref{theorem: RS-SPRM+ linear rates}. Moreover, we also observe that restarting accelerates the convergence in these two examples.

\begin{figure}[htp]
    \resizebox{\linewidth}{!}{\input{Figs/small_matrix_100000_and_random_100000_with_restart_with_Kuhn_Goofspiel_eta_0.05_ICLR.pgf}}
\caption{For a stepsize $\eta = 0.05$, empirical performances of several algorithms  on our $3 \times 3$ matrix game (left plot), Kuhn Poker and Goofspiel (center plots) and on random instances (right plot).}
\label{fig:matrix-games-eta-0.05-RS}
\end{figure}

\begin{figure}[htp]
    \centering
    \includegraphics[width=\linewidth]{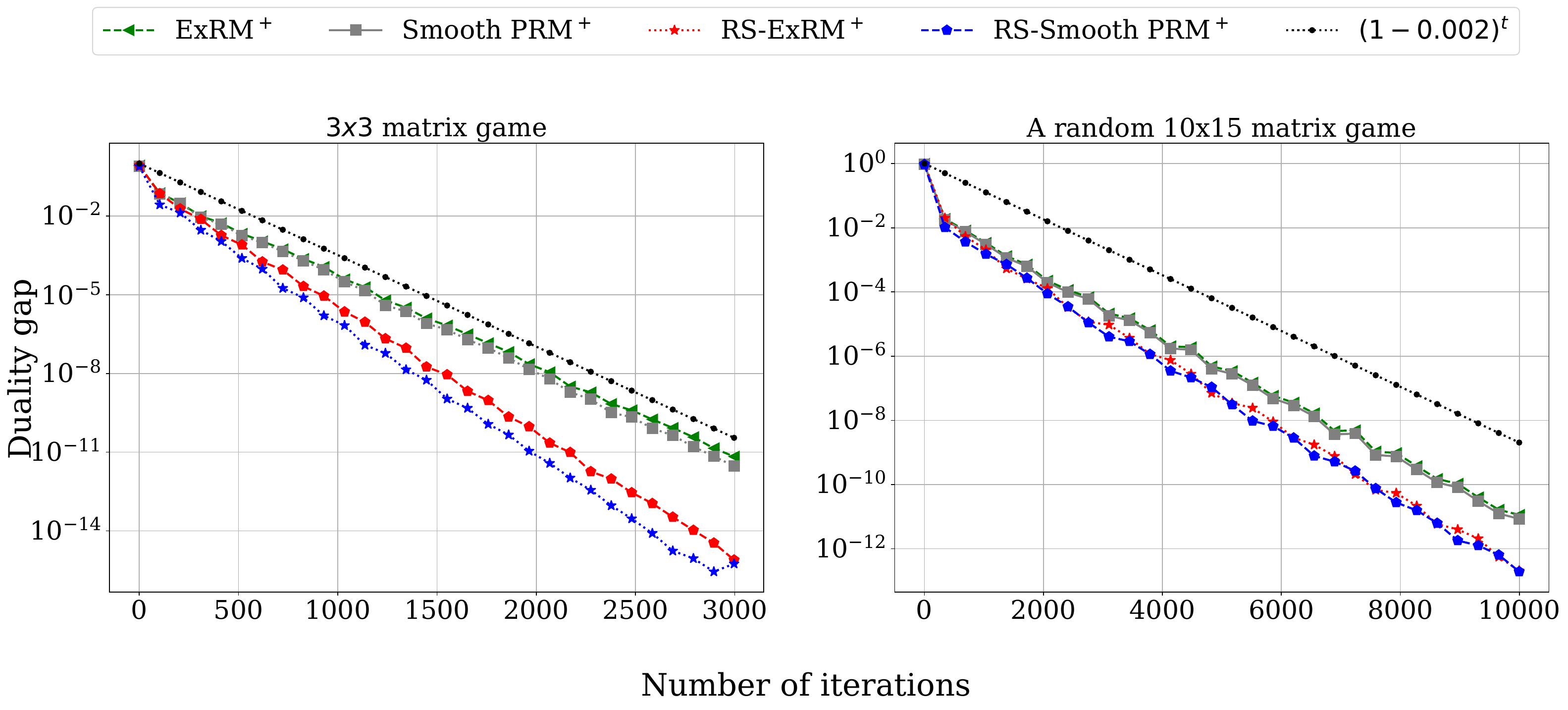}
    \caption{The last-iterate convergence rate of ExRM$^+$ and SPRM$^+$ and their restarting variants RS-ExRM$^+$ and RS-SPRM$^+$ on the $3\times 3$ hard instance and a random $10 \times 15$ matrix game. The step size is $\eta = 0.05$ for all algorithms. We also plot the line $(1-0.002)^t$ for reference.}
    \label{fig:matrix-game-eta-0.05-RS-semi-log}
\end{figure}

\subsection{Experiments with best-iterate convergence}\label{app:best-iterate}
In this section we compare the empirical performances of the best-iterate visited by the algorithms studied in this paper. For the sake of completeness we also show the best-iterate convergence of EG and OG. We present our results in Figure \ref{fig:best-iterate}, where we also show the line associated with $1/\sqrt{T}$ for reference. The setup is the same as for Figure \ref{fig:matrix-games} in Section \ref{sec:simu}.

\begin{figure}[!ht]
    \centering
\includegraphics[width=\textwidth]{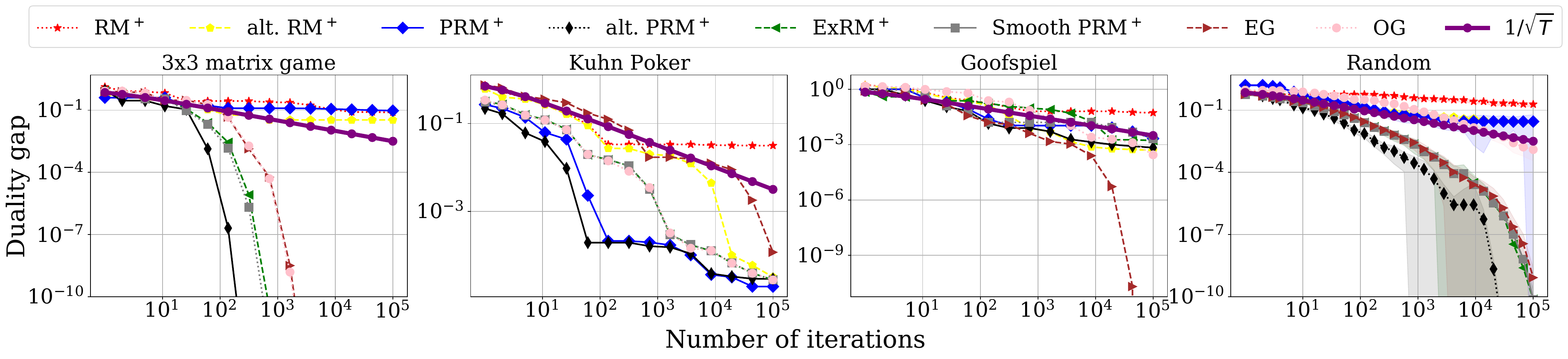}
    \caption{Empirical best-iterate performances of several algorithms on the $3 \times 3$ matrix game (left plot), Kuhn poker and Goofspiel (center plots), and random instances (right plot).}
    \label{fig:best-iterate}
\end{figure}

\end{document}